\documentclass[10pt]{article}

\usepackage[utf8]{inputenc}
\usepackage[T1]{fontenc}
\usepackage[english]{babel} 
\usepackage{filecontents}
\usepackage{amsthm}
\usepackage{amsmath}
\usepackage{amssymb}
\usepackage{mathrsfs}
\usepackage{enumitem}
\usepackage{xcolor}
\usepackage{hyperref}
\hypersetup{colorlinks=true,linkbordercolor=red,linkcolor=blue,citecolor=magenta}
\usepackage[top=3cm, bottom=4cm, left=3cm , right=3cm]{geometry}
\usepackage{bm}
\usepackage{graphicx}
\usepackage{indentfirst}
\usepackage{multicol}
\usepackage[style=american]{csquotes}
\usepackage{wasysym}

\makeatletter
\newcommand{\subjclass}[2][1991]{%
  \let\@oldtitle\@title%
  \gdef\@title{\@oldtitle\footnotetext{#1 \emph{Mathematics subject classification.} #2}}%
}
\newcommand{\keywords}[1]{%
  \let\@@oldtitle\@title%
  \gdef\@title{\@@oldtitle\footnotetext{\emph{Keywords.} #1.}}%
}
\makeatother



\DeclareMathOperator \Tr {Tr}
\DeclareMathOperator{\Diag}{Diag}

\newcommand{\A}{\mathscr{A}}

\renewcommand{\L}{\mathcal{L}}
\newcommand{\M}{\mathscr{M}}
\newcommand{\N}{\mathbf{N}}
\renewcommand{\O}{O}
\renewcommand{\P}{\mathbf{P}}
\newcommand{\R}{\mathbf{R}}
\newcommand{\T}{\mathbf{T}}

\newcommand{\Z}{\mathbf{Z}}
\newcommand{\vep}{\epsilon}

\renewcommand{\d}{\mathrm{d}}



\newtheorem {prop} {Proposition}[section]
\newtheorem {lem} [prop] {Lemma}
\newtheorem{theorem}{Theorem}
\newtheorem{corol}  [prop]{Corollary}
\newtheorem{conj}[prop]{Conjecture}

\theoremstyle{definition}
\newtheorem{defi} [prop] {Definition}

\newtheorem{Hyp}[prop]{Assumptions}

\theoremstyle{remark}
\newtheorem{Rem} [prop] {Remark}

\numberwithin{equation}{section}

\makeatletter
\@addtoreset{prop}{section}
\makeatother



\usepackage[numbers, comma, square,sort]{natbib}

%
%



\title{Regular expansion for the characteristic exponent of a product of $2 \times 2$ random matrices}
\author{Benjamin Havret\thanks{
LPSM, UMR 8001, Université Paris Diderot, Sorbonne Paris Cité, F-75013 Paris France}}
\date{}





\begin{document}
\maketitle

\begin{abstract}
We consider a product of $2 \times 2$ random matrices which appears in the physics literature in the analysis of some 1D disordered models. These matrices depend on a parameter $\epsilon >0$ and on a positive random variable $Z$. Derrida and Hilhorst (J Phys A 16:2641, 1983, \S 3) conjecture that the corresponding characteristic exponent has a regular expansion with respect to $\vep$ up to --- and not further --- an order determined by the distribution of $Z$. We give a rigorous proof of that statement. We also study the singular term which breaks that expansion.
\newline
\newline
\noindent
\emph{Keywords} : Product of random matrices, Lyapunov exponent, Disordered systems.
\newline
\noindent
\emph{AMS subject classification (2010 MSC) :} 82B44, 60B20,  37H15.
\end{abstract}

%

\section{Introduction}

Random matrix products appeared in the physics literature as a powerful tool to study disordered systems, ranging from Anderson model \cite{Matsuda_Ishii_70, Bougerol} to disordered harmonic chains \cite{Schmidt_57,Dyson_53} or disordered Ising model (discussed below). Among that wide range of models, the present work focuses on a very specific one, introduced by B. Derrida and H. Hilhorst in \cite{Derrida_Hilhorst_83} to study the strong interaction limit of a 1D disordered Ising model.

Let $(Z_n)$ be iid non-negative and non-deterministic random variables, with law $\mu$. For $\vep>0$, consider the matrices
\begin{equation}
\label{eq:matrixMvep}
M_{n,\vep} = \begin{pmatrix}
1 & \vep \\ \vep Z_n & Z_n
\end{pmatrix}.
\end{equation}
We will write $Z$ for a random variable with law $\mu$ and $M_\vep$ for the associated matrix. In fact, we will use $Z$ instead of $\mu$ to formulate our assumptions and results.
The (leading) Lyapunov exponent --- also called characteristic exponent --- is the growth rate of their product:
\begin{equation}
\label{eq:defi_lyap_Mnvep}
\L (\vep) =\L_Z(\vep) = \lim_{n \to + \infty} \frac{1}{n} \log \left \| M_{n,\vep} \cdots M_{1,\vep}\right \|.
\end{equation}
We will be particularly interested in the behaviour of $\L(\vep)$ in the limit $\vep \to 0$.

$2 \times 2$ matrices of the form \eqref{eq:matrixMvep} have appeared several times to express the free energy of the disordered 1D Ising model \cite{Derrida_Hilhorst_83, Calan_Luck,Nieuwenhuizen_Luck_86, Crisanti_Paladin_Vulpiani}, where the limit $\vep \to 0$ represents a regim of very strong interactions. It is also used in the celebrated work by B.~McCoy and T.~T.~Wu \cite{McCoy_Wu_article_68} to study a 2D Ising model with 1D disorder, as well as in a similar  model proposed by R. Shankar and G. Murthy \cite{Shankar_Murthy_87} which includes frustrated interactions.

From a mathematical point of view, a wide literature proposed to study these models and more general matrix products. One should cite the seminal work by H. Furstenberg et al. \cite{Furstenberg_Kesten_60,Furstenberg_63} and Oseledec's theorem \cite{Oseledec_68} (see \cite{Viana_2014} for a review). Looking at our own task, Furstenberg--Kesten theorem \cite{Furstenberg_Kesten_60} asserts that the limit~\eqref{eq:defi_lyap_Mnvep} exists almost surely and is deterministic, as long as $\mathrm{E}[\log_+ \|M_\vep\|]$ is finite (here $\mathrm{E}[\log_+ Z] <+ \infty$ suffices).
When $\vep$ vanishes, the matrix $M_{n, \vep}$ tends to a diagonal matrix and the Lyapunov exponent can be explicitly computed thanks to the law of large numbers: $\L(0) = \max(0, \mathrm{E}[\log Z])$. However, diagonal matrices are a degenerate case in the theory developed by H. Furstenberg et al. and one expects, in most cases, that $\vep \mapsto \L(\vep)$ is singular around~0.

It is worth stressing from now that the diagonal matrix $M_{n,0}$ is still random. Therefore we are not  in the framework of weak disorder limits such as \cite{Derrida_Zanon_88,Campanino_Klein_90,Sadel_SchulzBaldes_2010} in which the matrix $M_{n,0}$ is deterministic. The main reference paper for our analysis is rather \cite{Derrida_Hilhorst_83}.

\subsection{General conjecture and known results}

The present work is motivated by the recent mathematical progress by Genovese \emph{et al.} \cite{Genovese_Giacomin_Greenblatt_2017}. 
In this paper some physical  predictions, about the limiting behaviour of $\L(\vep)$ when $\vep$ vanishes, mainly stated in \cite{Derrida_Hilhorst_83} (see also \cite{Calan_Luck,Nieuwenhuizen_Luck_86,Crisanti_Paladin_Vulpiani}), are proven. However the physical predictions go beyond. We now first formulate these predictions in the form of conjectures, which detail the expected limiting behaviour of $\L(\vep)$, depending on the distribution of $Z$. Then we explain what has been proven and what our contribution is.

\begin{defi}
A real-valued random variable $\xi$ is said to be \emph{arithmetic} when there exists a constant $c >0$ such that $c \, \xi \in \Z \cup \{ \pm \infty \}$ almost surely.
\end{defi}

\begin{conj} 
\label{conj}
Assume that $\log Z$ is nonarithmetic.
\begin{enumerate}
\item Suppose in addition that there exists $\alpha \in(0,+\infty)$ such that $\mathrm{E}[Z^\alpha]=1$.
\begin{itemize}
\item If $\alpha \not \in \{1, 2, \ldots \}$, then, as $\vep$ goes to $0$,
\begin{equation}
\label{eq:conj_non_entier}
\L(\vep) = \sum_{k=1}^{\lfloor \alpha \rfloor} (-1)^{k+1} \ell_k \vep^{2 k} + (-1)^{\lceil \alpha \rceil+1} C_Z \vep^{2 \alpha} + o(\vep^{2 \alpha}),
\end{equation}
where, for $k \leqslant \lfloor \alpha \rfloor$, $\ell_k$ is a positive rational function of $\mathrm{E}[Z], \ldots, \mathrm{E}[Z^k]$; and $C_Z$ is a positive real number.
\item If $\alpha \in \{1, 2, \ldots \}$, then
\begin{equation}
\label{eq:conj_entier}
\L(\vep) = \sum_{k=1}^{\alpha -1} (-1)^{k+1} \ell_k \vep^{2 k} + (-1)^{\alpha+1} C_Z \vep^{2 \alpha} \log (1/\vep) + o\left( \vep^{2 \alpha} \log \vep \right),
\end{equation}
where the coefficients $(\ell_k)$ are the same positive rational functions of $Z$'s moments as before; and $C_Z$ is still a positive constant.
\end{itemize}
\item If $ \mathrm{E}[\log Z] = 0$ --- it is the ``$\alpha=0$'' case --- then
\begin{equation}
\label{eq:alpha=0}
\L(\vep) = \frac{C_Z}{\log (1/\vep)} + o\left((\log 1/\vep)^{-1}\right).
\end{equation}
\end{enumerate}
\end{conj}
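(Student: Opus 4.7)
The plan is to reduce the matrix product to a one-dimensional Markov chain on $[0,\infty)$. Parametrizing the projective line by $U := b/(\vep a)$ in coordinates $(a,b)$, the action of $M_\vep$ becomes $U \mapsto Z(1+U)/(1+\vep^2 U)$. At $\vep = 0$ this degenerates to the affine Kesten recursion $U \mapsto Z+ZU$, which admits a unique invariant probability measure $\nu_0$ on $[0,\infty)$ precisely when $\mathrm{E}[\log Z] < 0$ (the regime where $\alpha > 0$, by convexity of $s \mapsto \log \mathrm{E}[Z^s]$). For $\vep>0$ the denominator $1+\vep^2 U$ caps large values of $U$ at the scale $\vep^{-2}$, so an invariant probability $\nu_\vep$ still exists, tracks $\nu_0$ in the bulk, and is truncated around that scale.

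The Lyapunov exponent is then given by Furstenberg's formula
\[
\L(\vep) = \int \mathrm{E}\!\left[\log \frac{\|M_\vep v(U)\|}{\|v(U)\|}\right] d\nu_\vep(U), \qquad v(U) = (1,\vep U),
\]
and a direct computation yields $\|M_\vep v(U)\|^2/\|v(U)\|^2 = 1 + \vep^2 F_1(U,Z) + \vep^4 F_2(U,Z) + \cdots$ for explicit polynomials $F_j$. Expanding the logarithm as a formal series in $\vep^2$, the coefficient of $\vep^{2k}$ becomes an integral of a polynomial in $U$ of degree $k$ against $\nu_\vep$ (averaged over $Z$).

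For $k < \alpha$ the moments $\int U^k\, d\nu_\vep$ are uniformly bounded in $\vep$ and converge as $\vep\to 0$ to $\int U^k\, d\nu_0$. These limiting moments satisfy the recursion
\[
\mathrm{E}[U^k]\bigl(1-\mathrm{E}[Z^k]\bigr) = \mathrm{E}[Z^k]\sum_{j=0}^{k-1}\binom{k}{j}\mathrm{E}[U^j],
\]
coming from $U \overset{d}{=} Z(1+U)$ (well-defined since $\mathrm{E}[Z^k]<1$ for $k<\alpha$), and are therefore positive rational functions of $\mathrm{E}[Z],\ldots,\mathrm{E}[Z^k]$. Substituting back into the Taylor expansion produces the regular part $\sum_{k=1}^{\lfloor\alpha\rfloor}(-1)^{k+1}\ell_k\vep^{2k}$ of \eqref{eq:conj_non_entier}--\eqref{eq:conj_entier}, with the alternating signs arising from the explicit structure of the $F_j$'s and the expansion of $\log(1+\cdot)$.

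The singular term is the main obstacle. The next coefficient would involve $\int U^{\lfloor\alpha\rfloor+1}\, d\nu_\vep$, which diverges as $\vep\to 0$. By Goldie's implicit renewal theorem, $\nu_0([u,\infty)) \sim C_\star u^{-\alpha}$ as $u\to\infty$, and transporting this estimate to $\nu_\vep$ up to the cutoff $\vep^{-2}$ yields
\[
\int U^{\lfloor\alpha\rfloor+1}\, d\nu_\vep \;\sim\;
\begin{cases}
C_Z'\,\vep^{-2(\lfloor\alpha\rfloor+1-\alpha)} & \text{if } \alpha \notin \{1,2,\ldots\},\\
C_Z'\,\log(1/\vep) & \text{if } \alpha \in \{1,2,\ldots\},
\end{cases}
\]
so that after multiplication by $\vep^{2(\lfloor\alpha\rfloor+1)}$ one recovers precisely the singular contributions in \eqref{eq:conj_non_entier} and \eqref{eq:conj_entier}. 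The borderline case $\mathrm{E}[\log Z]=0$ is qualitatively different: $\log U$ is then a driftless random walk with no stationary law, $\nu_\vep$ concentrates on the scale $\log U \sim 2\log(1/\vep)$, and a diffusive rescaling of the chain produces \eqref{eq:alpha=0}. The principal technical difficulty is the quantitative comparison of $\nu_\vep$ with $\nu_0$ up to the cutoff scale, with an error small enough to isolate the constant $C_Z$ and to bound the remainder by $o(\vep^{2\alpha})$ (respectively $o(\vep^{2\alpha}\log\vep)$); this typically requires either fine spectral estimates for the transfer operator of the perturbed chain or a coupling argument based on Goldie's implicit renewal theory.
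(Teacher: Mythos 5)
Before anything else: the statement you are addressing is Conjecture~\ref{conj}, which this paper \emph{does not prove}. The paper's Theorem~\ref{The:thm_concret} only establishes (i) the regular part of~\eqref{eq:conj_non_entier} and~\eqref{eq:conj_entier}, and (ii) two-sided bounds on the remainder $R(\vep)$ that fall strictly short of the sharp $\vep^{2\alpha}$ (or $\vep^{2\alpha}\log(1/\vep)$) asymptotic with an identified constant $C_Z$. The $\alpha\in(0,1)$ case of~\eqref{eq:conj_non_entier} is due to Genovese et al.; the rest remains open. So a fully rigorous proof of the conjecture would itself be a new result, and your sketch is not that.

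There are two concrete gaps. First, for the \emph{regular} part: you argue that $\int U^k\,d\nu_\vep$ is bounded and converges to $\int U^k\,d\nu_0$, and then ``substitute back''. That substitution is not legitimate: if $\int U\,d\nu_\vep = \int U\,d\nu_0 + o(1)$, then the error in the $k=1$ term of the log-expansion is $o(\vep^2)$, which already drowns all terms of order $\vep^4,\dots,\vep^{2\lfloor\alpha\rfloor}$. What is actually needed — and what the paper supplies via the double induction of Lemma~\ref{lem:dvptpuissance} — is a full expansion of each bounded moment $\mathrm{E}[X_\vep^k]$ in powers of $\vep^2$ up to order $\vep^{2(\lfloor\alpha\rfloor-k)}$, with a controlled remainder. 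The coefficients $\ell_k$ are built from these expansions, not merely from the moments of $\nu_0$ (your recursion $\mathrm{E}[U^k](1-\mathrm{E}[Z^k])=\mathrm{E}[Z^k]\sum_{j<k}\binom{k}{j}\mathrm{E}[U^j]$ only gives the constant term of each moment expansion). Second, for the \emph{singular} part: you write that ``transporting this estimate to $\nu_\vep$ up to the cutoff $\vep^{-2}$ yields'' the claimed asymptotics for $\int U^{\lfloor\alpha\rfloor+1}\,d\nu_\vep$, but this transport is precisely the missing hard step, which you yourself flag at the end as ``the principal technical difficulty''. The paper's own lower bound on $R(\vep)$ (Proposition~\ref{prop:error_estimation}, Lemma~\ref{lem:lowerboundgamma}) and upper bound (Lemmas~\ref{lem:moments_upperbound_alpha}--\ref{lem:moments_upperbound_kappa}, using Kesten/Grincevi\v{c}ius/Goldie tail estimates via the stochastic dominance $X_\vep\preccurlyeq X_0$) do not close that gap: they sandwich $R(\vep)$ between two different powers of $\vep$ without pinning down the constant, and the paper is explicit that this ``falls short to prove the expected $\vep^{2\alpha}$ behavior''. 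So at the level of rigor your sketch ends exactly where the open problem begins.
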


The same references motivate further comments.

\begin{Rem}
The coefficients $(\ell_k)$ appearing in the conjecture can be computed recursively. For instance
\begin{equation}
\ell_1 = \frac{\mathrm{E}[Z]}{1-\mathrm{E}[Z]}, \qquad \ell_2 = \frac{(1+\mathrm{E}[Z])^2 \mathrm{E}[Z^2] + 2 \mathrm{E}[Z]^2 (1-\mathrm{E}[Z^2])}{2(1-\mathrm{E}[Z])^2(1-\mathrm{E}[Z^2])}.
\end{equation}
Precise recursive formulas will be derived in Section~\ref{sec:regular}. However it is not clear that a simple closed formula for $\ell_k$ can be derived. By contrast, apart from a few special situations, the calculation of the constant $C_Z$ is a very hard problem \cite[\S 4.2.1]{Crisanti_Paladin_Vulpiani}. On another note, in all the instances developed in Conjecture~\ref{conj}, the constant $C_Z$ should be replaced by a multiplicatively periodic function of $\vep$ if $\log Z$ is arithmetic. A precise computation of such a multiplicatively periodic function $C_Z$ is made in \cite{Derrida_Hilhorst_83} for a very specific (and singular) distribution of $Z$.
\end{Rem}

\begin{Rem}
\label{Rem:conj2}
We discuss in this remark the instances which are excluded by the conjecture. The conjecture actually covers almost all the cases where $\mathrm{E}[\log Z] \leqslant 0$ and $\mathrm{P}(Z >1) >0$, except the one discussed in the item~\ref{item:thmA:E[Zalpha]<1} of Remark~\ref{Rem:remarks_thmA}.
\begin{enumerate}
\item \label{item:Rem:conj2_parity} The case $\mathrm{E}[\log Z] >0$ (which corresponds to $\alpha <0$) boils down to $\mathrm{E}[\log Z]<0$ by factorizing $Z$ in the matrix $M_{\vep}$:
$
\L_Z(\vep)  = \mathrm{E}[\log Z] + \L_{1/Z} (\vep).
$
Similarly, by conjugating by the matrix $\Diag(-1,1)$, one observes that $\L$ is an even function: $\L(\vep) = \L(-\vep)$. It implies that the behaviour $\vep^{2 \alpha}$ is rather $|\vep|^{2 \alpha}$, so it is actually singular even when $\alpha$ is a half-integer.
\item If $Z \leqslant 1$ almost surely (that is ``$\alpha=+\infty$''), then $\L(\vep)$ admits a regular expansion with respect to $\vep^2$ up to any order. Is it smooth or analytic in a neighborhood of~0? The problem is still open, except if $Z \in [0, 1-\eta]$ almost surely, for some $\eta \in (0,1)$. If so then it is a consequence of a result by D. Ruelle \cite{Ruelle_79} that $\L(\vep)$ is a real analytic function of $\vep$ around~0.
\item That same theorem of D. Ruelle also ensures that the Lyapunov exponent $\L(\vep)$ is always an analytic function of $\vep$ on~$(0,+\infty)$.
\end{enumerate}
\end{Rem}

\medskip

Very little of Conjecture~\ref{conj} has been made mathematically rigorous. To our knowledge only \eqref{eq:conj_non_entier} has been successfully tackled \cite{Genovese_Giacomin_Greenblatt_2017}, and \emph{only} for $\alpha\in (0,1)$. When $\alpha\in (0,1)$, that is $\mathrm{E}[\log Z] <0$ and $\mathrm{E}[Z]>1$, the singularity $\vep^{2 \alpha}$ happens to be the leading behaviour of $\L(\vep)$ and~\eqref{eq:conj_non_entier} takes the form:
\begin{equation}
\label{eq:alpha_in(0,1)}
\L(\vep) \underset{\vep \searrow 0}{\sim} C_Z \, \vep^{2 \alpha}.
\end{equation} 
This simplifies in a substantial way the analysis:  Derrida and Hilhorst in \cite{Derrida_Hilhorst_83} (see also \cite{Derrida_Hilhorst_83, Calan_Luck,Nieuwenhuizen_Luck_86}) explicitly give, based on a two scale argument, a probability measure that is expected to be close, when~$\vep$ is small, to the invariant probability for the action of $M_\vep$ on the projective space $\P^1(\R)$ (that is, the distribution of $X_\vep$, in the next paragraph's notations). Then, they use this probability to compute the Lyapunov exponent. This two scale analysis is made rigourous by G. Genovese \emph{et al.} \cite{Genovese_Giacomin_Greenblatt_2017}, who show that this probability measure is indeed close to the invariant measure in a suitable norm, and this control is sufficiently strong to yield precisely \eqref{eq:alpha_in(0,1)}. It appears to be rather challenging to follow the same steps 
for $\alpha \geqslant 1$: the guess for the invariant probability would have to be tuned to yield the $\lfloor \alpha \rfloor$ terms of the regular expansion and the singular $\vep^{2\alpha}$ term. Even at a heuristic level, such a construction is lacking. Note, in particular, that in \cite{Derrida_Hilhorst_83}, the $\alpha\geqslant 1$ case is treated in a expedite way, without reference to the invariant probability, and without capturing the singularity $\vep^{2 \alpha}$.  

On the other hand, a weak disorder limit of the model has been investigated. In this limit, the product of random matrices becomes a stochastic differential equation system. An exactly solvable structure emerges from that SDE and the analog of \eqref{eq:conj_non_entier} and \eqref{eq:conj_entier} has been shown to hold (see \cite{Grabsch_Texier_Tourigny_2014} for the case $\alpha \in (0,2)$ and \cite{Comets_Giacomin_Greenblatt_2017} for the general case). As pointed out in \cite{Comets_Giacomin_Greenblatt_2017}, it is rather remarkable that the structure of \eqref{eq:conj_non_entier} and \eqref{eq:conj_entier} holds also in the weak disorder limit and this appears to be a rather deep fact. Nonetheless, the fact that the conjecture holds in the weak disorder limit is far from being a mathematical proof of the conjecture for products of matrices.

\medskip

The main aim of our work is to approach~\eqref{eq:conj_non_entier} and~\eqref{eq:conj_entier}. Our results are the following.

\begin{enumerate}
\item  $\L(\vep)$ admits a regular expansion in powers of $\vep^2$, up to order $\vep^{2\lfloor \alpha \rfloor}$, or $\vep^{2(\alpha-1)}$ in the integer case (that is the regular part of~\eqref{eq:conj_non_entier} and~\eqref{eq:conj_entier}).

\item We prove that the next order term after this regular part, call it $R(\vep)$, satisfies, as $\vep$ goes to $0$, for instance in the non-integer case
\begin{equation}
\vep^{2 \lfloor \alpha \rfloor} \ll R(\vep) \ll \vep^{2 (\lfloor \alpha \rfloor +1)}.
\end{equation} 
Since $\vep \mapsto \L(\vep)$ is an even function, only even powers of $\vep$ are non-singular (see Remark~\ref{Rem:conj2} item~\ref{item:Rem:conj2_parity}). Hence, $R(\vep)$ is necessarily singular. A quantitative and explicit control on this term is given, but it falls short to prove the expected $\vep^{2 \alpha}$ behavior of~\eqref{eq:conj_non_entier}.
\end{enumerate}

\subsection{Assumptions and main result}

We will work under the following assumptions, supposed to be satisfied in the whole paper.

\begin{Hyp}
\label{hyp}
The random variable $Z$ is positive, non-deterministic, and 
\begin{enumerate}[label=(\alph*)]
\item $\mathrm{E}[\log Z] <0$ (can be $-\infty$);
\item There exists $\delta >0$ such that $\mathrm{E}[Z^\delta]<+\infty$.
\end{enumerate}
\end{Hyp}

\noindent
Introduce
\begin{equation}
\A = \{ \gamma \in [0, + \infty] \text{  such that  } \mathrm{E}[Z^\gamma] <1 \},
\end{equation}
and
\begin{equation}
\alpha = \sup \A \in (0,+\infty].
\end{equation}
The Assumptions~\ref{hyp}, together with a convexity argument, ensure that $\A$ is an interval of positive length. Note that $\alpha =+  \infty$ if and only if $Z \leqslant 1$ almost surely. In any case $\A$ takes the following form: either $\mathrm{E}[Z^\alpha]=1$ and then $\A = (0, \alpha )$, or $\mathrm{E}[Z^\alpha]<1$, and then $\A = (0, \alpha ]$. In the latter case, necessarily, $\mathrm{E}[Z^\gamma]=+\infty$ for every $\gamma > \alpha$.
Here is the main result of this work.

\begin{theorem}
\label{The:thm_concret}
There exist positive coefficients $(\ell_k)$, where $\ell_k$ is a rational function of the moments $\mathrm{E}[Z], \ldots, \mathrm{E}[Z^k]$, such that the following expansions hold, as $\vep$ goes to~0.
\begin{enumerate}
\item If $\alpha = + \infty$ (\emph{i.e.}, if $Z\leqslant 1$ a.s.), then for every $K \geqslant 0$,
\begin{equation}
\label{eq:expansionLvep_the_alpha_infini}
\L (\vep) = \sum_{k=1}^{K} (-1)^{k+1}  \ell_k \vep^{2 k} +  \O(\vep^{2(K+1)}).
\end{equation}
\item If $\alpha \in \{1, 2, \ldots \}$ and if $\mathrm{E}[Z^\alpha]=1$, then
 \begin{equation}
\label{eq:expansionLvep_The_entier}
\L (\vep) = \sum_{k=1}^{\alpha -1} (-1)^{k+1}  \ell_k \vep^{2 k} +  (-1)^{ \alpha  +1} R(\vep),
\end{equation}
where $R(\vep)$ is nonnegative and
\begin{equation}
\label{eq:The_encadrement_Rvep_entier}
\vep^{2 \alpha} \ll R(\vep) \leqslant C \vep^{2 \alpha} \log (1/\vep),
\end{equation}
for some $C>0$. The lower bound can be improved if, in addition, $Z$ has a bounded support, to obtain, for some $C \geqslant c>0$, the sharper estimate
\begin{equation}
\label{eq:The_encadrement_Rvep_entier_Z_borne}
c  \leqslant \frac{R(\vep)}{\vep^{2 \alpha} \log (1/\vep)} \leqslant C.
\end{equation}
\item If $\alpha \in (0, + \infty) \backslash \{1, 2, \ldots\}$ and if there exists $\gamma>\alpha$ such that $\mathrm{E}[Z^\gamma]$ is finite, then
 \begin{equation}
 \label{eq:expansionLvep_The_nonentier}
\L (\vep) = \sum_{k=1}^{\lfloor \alpha \rfloor} (-1)^{k+1}  \ell_k \vep^{2 k} +  (-1)^{\lceil \alpha \rceil +1} R(\vep),
\end{equation}
where $R(\vep)$ is nonnegative and
\begin{equation}
\label{eq:The_encadrement_Rvep_nonentier}
\vep^{2 \lceil \alpha \rceil} \ll R(\vep) \leqslant C \vep^{2 \alpha},
\end{equation}
for some $C>0$. The lower bound can be improved if, in addition, $Z$ has a bounded support: in that case, there exists $\theta \in (\alpha, \lceil \alpha \rceil)$ and $c>0$ such that $R(\vep) \geqslant c \vep^{2 \theta}$.
\end{enumerate}
\end{theorem}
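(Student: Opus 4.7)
The plan is to reduce $\L(\vep)$ to an integral against the invariant probability of the Markov chain induced by $M_\vep$ on the projective line, and then expand that integral in powers of $\vep^2$. Since $M_\vep$ has positive entries, the cone of nonnegative slopes is preserved; parametrising $\P^1(\R)$ by $y \geqslant 0$, the action reads $y \mapsto \phi_\vep(y) = Z(\vep+y)/(1+\vep y)$, and the chain $y_{n+1} = \phi_\vep(y_n)$ on $[0,+\infty)$ admits a unique stationary distribution $\nu_\vep$; write $X_\vep \sim \nu_\vep$. The factorisation $\|M_\vep (1,y)^\top\|^2 = (1+\vep y)^2 (1+\phi_\vep(y)^2)$ combined with the invariance of $\nu_\vep$ cancels the contribution $\dem \log(1+y^2)$ and yields the key identity
\[
\L(\vep) \;=\; \mathrm{E}[\log(1+\vep X_\vep)].
\]

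Setting $X_\vep = \vep U_\vep$, the variable $U_\vep$ is stationary for $U \stackrel{d}{=} Z(1+U)/(1+\vep^2 U)$, which at $\vep=0$ reduces to the affine recursion $U_0 \stackrel{d}{=} Z(1+U_0)$ whose $n$-th moment is finite exactly when $\mathrm{E}[Z^n]<1$, i.e.\ when $n<\alpha$. For such $n$, taking the $n$-th moment on both sides of the recursion and using independence of $Z$ from the copy of $U_\vep$ inside gives a closed relation for $m_n(\vep) := \mathrm{E}[U_\vep^n]$; since $1-\mathrm{E}[Z^n]>0$, this relation can be solved perturbatively in $\vep^2$, producing a formal expansion $m_n(\vep) = \sum_{j\geqslant 0} \vep^{2j} m_n^{(j)}$ whose coefficients are rational functions of the first few moments of $Z$. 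Substituting into
\[
\L(\vep) \;=\; \mathrm{E}[\log(1+\vep^2 U_\vep)] \;=\; \sum_{k\geqslant 1} \frac{(-1)^{k+1}}{k}\, \vep^{2k} m_k(\vep)
\]
and regrouping by powers of $\vep$ yields the regular expansion with coefficients $(\ell_k)$, the recursive structure ensuring that $\ell_k$ depends only on $\mathrm{E}[Z^j]$ for $j\leqslant k$. In the case $\alpha=+\infty$ every $m_n(\vep)$ is bounded uniformly in $\vep$, so this expansion holds to arbitrary order with a uniform $O(\vep^{2(K+1)})$ remainder, proving~\eqref{eq:expansionLvep_the_alpha_infini}.

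For the other two cases I would control the remainder by splitting $\log(1+\vep^2 U_\vep)$ as a polynomial of degree $\lfloor \alpha \rfloor$ in $U_\vep$ plus an integral tail. The polynomial piece contributes the regular part up to order $2\lfloor \alpha \rfloor$ together with a boundary term proportional to the moment $m_{\lceil \alpha \rceil}(\vep)$, which is finite for $\vep>0$ but diverges as $\vep \to 0$. The core technical step is to bootstrap the moment identity at $n=\lceil \alpha \rceil$ to obtain
\[
m_{\lceil \alpha \rceil}(\vep) \;\leqslant\; C\, \vep^{-2(\lceil \alpha \rceil - \alpha)} \quad (\alpha \notin \N), \qquad m_\alpha(\vep) \;\leqslant\; C\, \log(1/\vep) \quad (\alpha \in \N),
\]
which, multiplied by the $\vep^{2\lceil \alpha \rceil}$ prefactor, gives the announced upper bounds on $R(\vep)$. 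The lower bound $R(\vep) \gg \vep^{2\lceil \alpha \rceil}$ reduces via a Fatou argument to $\mathrm{E}[U_0^{\lceil \alpha \rceil}] = +\infty$.

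The main obstacle is the sharp two-sided control of $m_{\lceil \alpha \rceil}(\vep)$ for unbounded $Z$: the upper bound requires careful tracking of the $\vep$-dependence in the perturbed moment recursion, while a matching lower bound needs Kesten--Goldie-type tail information on $U_\vep$, which is only accessible here when $Z$ has bounded support. This asymmetry is precisely what prevents the method from reaching the conjectured $\vep^{2\alpha}$ behaviour for unbounded $Z$, and explains the gap left open in~\eqref{eq:The_encadrement_Rvep_nonentier}.
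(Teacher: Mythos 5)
Your framework is the same as the paper's: what you call $U_\vep$ is exactly the paper's $X_\vep$, the identity $\L(\vep)=\mathrm{E}[\log(1+\vep^2 U_\vep)]$ is Lemma~\ref{lem:existence_Xvep}, and the bootstrap of the moment recursion for the regular coefficients is Lemma~\ref{lem:dvptpuissance}. Your derivation of the Furstenberg formula through the factorisation $\|M_\vep(1,y)^\top\| = (1+\vep y)\,\|(1,\phi_\vep(y))^\top\|$ and the cancellation of $\tfrac12\log(1+y^2)$ is a minor variant of the paper's argument. However, your treatment of the singular remainder $R(\vep)$ has two genuine gaps.

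First, the upper bound on the critical moment cannot come from ``careful tracking of the $\vep$-dependence in the perturbed moment recursion'': isolating $m_{\lceil\alpha\rceil}(\vep)$ in that recursion requires dividing by $1-\mathrm{E}[Z^{\lceil\alpha\rceil}]$, which is zero in the integer case and negative otherwise, so the scheme that works for $k<\alpha$ breaks down precisely at the critical index. The paper instead applies the fixed-point equation once, bounds $X_\vep$ stochastically by $X_0$, and then imports the Kesten--Grincevi\v{c}ius renewal estimate $\mathrm{P}(X_0\geqslant x)=O(x^{-\alpha})$ (Lemma~\ref{lem:tail_X0}); this is what produces the bounds of Lemmas~\ref{lem:moments_upperbound_alpha} and~\ref{lem:moments_upperbound_kappa}. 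You attribute this tail-theory input to the lower bound, but it is in fact the engine of the \emph{upper} bound, and without it you have no quantitative control of the divergent moment.

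Second, the lower bound $R(\vep)\gg\vep^{2\lceil\alpha\rceil}$ with a definite sign $(-1)^{\lceil\alpha\rceil+1}$ does not follow from a Fatou argument and the Taylor remainder of $\log$ alone. Substituting $m_k(\vep)$ into $\sum_{k\leqslant K}\frac{(-1)^{k+1}}{k}\vep^{2k}m_k(\vep)$ introduces error terms in two places: inside each $m_k(\vep)$, which is itself only a polynomial in $\vep^2$ up to an $O(\vep^{2(K+1-k)}\,m_{K+1}(\vep))$ correction, and in the remainder of $\log$. To conclude that these contributions accumulate rather than cancel one must verify that they all carry the same sign at each stage of the bootstrap; this is precisely the content of Lemma~\ref{lem:dvptpuissance_erreur} and Proposition~\ref{prop:error_estimation}, which run a second double induction with one-sided, signed Taylor bounds on $(1+x)^{-l}$. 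Your sketch only isolates one boundary term and is silent on the remainders hidden inside the $m_k(\vep)$. Finally, for the sharper lower bound when $Z$ is bounded, the paper does not invoke any Kesten--Goldie tail estimate on $X_\vep$ (none is available at fixed $\vep$); it unrolls the fixed-point recursion $N_\vep\asymp\log(1/\vep)$ times using the almost sure bound $X_\vep\leqslant\vep^{-2}\|Z\|_\infty$ (Lemmas~\ref{lem:lowerboundgamma}, \ref{lem:moments_lowerbound_entier} and~\ref{lem:moments_lowerbound_non_entier}), a direct combinatorial argument rather than a tail one.
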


\begin{Rem}
\label{Rem:remarks_thmA}
\begin{enumerate}
\item The constant $\theta$ is explicit:
$
\theta = \lceil \alpha \rceil - \frac{\log \mathrm{E} [Z^{\lceil  \alpha \rceil}]}{\log \|Z\|_\infty}.
$
\item When $\alpha$ is finite, the lower bounds of the error in~\eqref{eq:The_encadrement_Rvep_entier} and~\eqref{eq:The_encadrement_Rvep_nonentier} assert in particular that the regular expansions~\eqref{eq:expansionLvep_The_entier} and~\eqref{eq:expansionLvep_The_nonentier} cannot be continued beyond $K=\lceil \alpha \rceil -1$: $\L(\vep)$ is singular.
\item \label{item:rem_thmA_ZlogZ} When $\alpha$ is not an integer, the assumption ``there exists $\gamma > \alpha $ such that $\mathrm{E}[Z^\gamma]$ is finite'' can be replaced by the weaker assumption ``$\mathrm{E}[Z^\alpha \log_+ Z] <+\infty$'' (see Remark~\ref{Rem:proof_E[Zalpha]<1_EZlogZ}).
\item \label{item:thmA:E[Zalpha]<1} Suppose that $\alpha$ is finite and $\mathrm{E}[Z^\alpha]<1$ (and $\mathrm{E}[Z^\gamma] = + \infty$ for every $\gamma > \alpha$). Whether $\alpha$ is an integer or not, under some technical assumptions on the distribution of $Z$, the Lyapunov exponent is slightly regularized (see Remark~\ref{Rem:proof_E[Zalpha]<1_EZlogZ} for a sketch of proof):
 \begin{equation}
\label{eq:expansionLvep_The_entier_pathologique}
\L (\vep) = \sum_{k=1}^{\lfloor \alpha \rfloor} (-1)^{k+1}  \ell_k \vep^{2 k} + (-1)^{\lfloor \alpha \rfloor +1} R(\vep), \qquad \vep^{2 (\lfloor \alpha \rfloor +1)} \ll R(\vep) \ll \vep^{2 \alpha}.
\end{equation}
\item  We mention that, when $\alpha \in(0,1)$, Theorem~\ref{The:thm_concret} gives an estimate of $\L(\vep)$ which is rough, and of course strongly weaker than~\cite{Genovese_Giacomin_Greenblatt_2017}.
\end{enumerate}
\end{Rem}

\subsection{Strategy of the proof and structure of the paper}

A classical result in the theory of product of random matrices ensures that the Lyapunov exponent can be written
\begin{equation}
\label{eq:expression_lyapunov}
\L(\vep) = \mathrm{E}[\log(1+\vep^2 X_\vep)],
\end{equation}
where $X_\vep$ is an invariant measure for the random transformation, on $[0, +\infty)$,
$
x \mapsto Z \frac{1+ x}{1+ \vep^2 x}.
$
In other words it satisfies
\begin{equation}
\label{eq:intro_invarianceXvep}
 X_\vep \overset{\normalfont{(d)}}{=} Z \frac{1+ X_\vep}{1+ \vep^2 X_\vep},
\end{equation}
where $Z$ is independent of $X_\vep$ (on the right hand side).
Existence and uniqueness of such a random variable $X_\vep$ will be justified in Section~\ref{sec:defXvep}, as well as formula~\eqref{eq:expression_lyapunov}. A very useful uniform stochastic dominance of the random variables $(X_\vep)_{\vep >0}$ will also be proved.

From that point on, the work will only be based on formula~\eqref{eq:expression_lyapunov} for the Lyapunov exponent and the fixed point equation~\eqref{eq:intro_invarianceXvep}. Thanks to the former, the problem will readily boil down to studying $X_\vep$'s moments. That study can be split into two subproblems. We will know since Section~\ref{sec:defXvep} which ones of $X_\vep$'s moments are bounded as $\vep$ goes to~0 and which diverge. The two subproblems then are:
\begin{itemize}
\item Deriving a regular expansion for $X_\vep$'s bounded moments, involving an error in terms of a divergent moment of $X_\vep$ (Sections~\ref{sec:regular} and~\ref{sec:error});
\item Estimating the divergence speed of $X_\vep$'s unbounded moments (Section~\ref{section:Xvep^K+1}).
\end{itemize}

The former point is addressed in Section~\ref{sec:regular}. The analysis is based on a bootstrap procedure, based on recursive uses of the fixed point equation~\eqref{eq:intro_invarianceXvep}. It gives more and more precise expansions of these moments. 
Eventually, it will provide the regular expansion~\eqref{eq:expansionLvep_The_entier} or~\eqref{eq:expansionLvep_The_nonentier} with an upper bound on the error~$R(\vep)$, involving a divergent moment of $X_\vep$. That work will be generalized in the appendix~\ref{appendix:generalization}, for matrices of size $d$, with more general entries.

That same strategy, using a bootstrap procedure to obtain a more and more precise estimate of $X_\vep$'s moments, can also provide a lower bound on the error, involving a divergent truncated moment of $X_\vep$: Section~\ref{sec:error} will be devoted to that analysis.

At the end of these sections, the following theorem will be proved, which, unlike Theorem~\ref{The:thm_concret}, does not require any extra assumption on $Z$ (apart from Assumptions~\ref{hyp}).

\begin{theorem} 
\label{The:dvpt_avec_erreur_general}
Fix $B>0$, and an integer $K \in \A \cup\{0\}$. One has, for all $\vep >0$,
\begin{equation}
\label{eq:expansionLvep_the}
\L (\vep) = \sum_{k=1}^{K} (-1)^{k+1}  \ell_k \vep^{2 k} +  (-1)^{K+2} R_{K}(\vep),
\end{equation}
where, for all $\beta \in (K,K+1]$, and for some positive constants $c$ and $C_\beta$,
\begin{equation}
\label{eq:The_encadrement_Rvep}
c \vep^{2 (K+1)} \mathrm{E}[X_\vep^{K+1} \mathbf{1}_{ \vep^2 X_\vep \leqslant B}] \leqslant  R_{K}(\vep) \leqslant C_\beta \vep^{2 \beta} \mathrm{E}[X_\vep^{\beta}].
\end{equation}
\end{theorem}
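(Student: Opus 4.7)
The plan is to start from the formula $\L(\vep) = \mathrm{E}[\log(1 + \vep^2 X_\vep)]$ in~\eqref{eq:expression_lyapunov} and combine an exact Taylor expansion of $\log$ with the recursive moment expansions of $X_\vep$ obtained in Sections~\ref{sec:regular} and~\ref{sec:error}. I would use the integral form
\begin{equation*}
\log(1+y) = \sum_{k=1}^{K} (-1)^{k+1} \frac{y^k}{k} + (-1)^K \int_0^y \frac{t^K}{1+t}\,\d t, \qquad y \geq 0,
\end{equation*}
substitute $y = \vep^2 X_\vep$, and take expectations. Since $K \in \A \cup \{0\}$, every moment $\mathrm{E}[X_\vep^k]$ for $1 \leq k \leq K$ is uniformly bounded as $\vep \to 0$, and Section~\ref{sec:regular} supplies, for each such $k$, a polynomial expansion of $\mathrm{E}[X_\vep^k]$ in $\vep^2$ up to degree $K-k$, with coefficients that are rational functions of the moments of $Z$. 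Substituting these expansions, collecting powers of $\vep^{2j}$ for $j \leq K$, and regrouping defines $\ell_j$ as the announced rational function of $\mathrm{E}[Z], \ldots, \mathrm{E}[Z^j]$; the leftover is the integral remainder together with the residual moment errors, and forms $(-1)^{K+2} R_K(\vep)$.

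For the upper bound in \eqref{eq:The_encadrement_Rvep} I would use the elementary inequality
\begin{equation*}
\int_0^y \frac{t^K}{1+t}\,\d t \leq C_\beta\, y^\beta, \qquad y \geq 0, \ \beta \in (K, K+1],
\end{equation*}
obtained by splitting the integral at $y=1$ and using $t^K/(1+t) \leq t^K$ on $[0,1]$ and $t^K/(1+t) \leq t^{K-1}$ on $[1,\infty)$. Combined with the $C\, \vep^{2\beta} \mathrm{E}[X_\vep^\beta]$ bound on the moment-expansion residues delivered by Section~\ref{sec:regular}, this yields $R_K(\vep) \leq C_\beta \vep^{2\beta} \mathrm{E}[X_\vep^\beta]$. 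For the lower bound, on the event $\{\vep^2 X_\vep \leq B\}$ the integrand satisfies $1/(1+t) \geq 1/(1+B)$, so
\begin{equation*}
\mathbf{1}_{\vep^2 X_\vep \leq B} \int_0^{\vep^2 X_\vep} \frac{t^K}{1+t}\,\d t \geq \frac{(\vep^2 X_\vep)^{K+1}}{(K+1)(1+B)}\, \mathbf{1}_{\vep^2 X_\vep \leq B},
\end{equation*}
while the contribution on the complementary event is positive. Feeding in the matching signed sharp control on the moment errors from Section~\ref{sec:error} then gives $R_K(\vep) \geq c\, \vep^{2(K+1)} \mathrm{E}[X_\vep^{K+1} \mathbf{1}_{\vep^2 X_\vep \leq B}]$.

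The main obstacle lies in the lower bound: a priori the log-Taylor remainder and the moment-expansion errors need not share the same sign and could partially cancel. The whole point of Section~\ref{sec:error} is to track the sign and the precise order of those errors --- again via a bootstrap on the fixed point equation~\eqref{eq:intro_invarianceXvep}, but applied in the reverse direction --- sharply enough to rule out such a cancellation and to identify the truncated moment $\mathrm{E}[X_\vep^{K+1} \mathbf{1}_{\vep^2 X_\vep \leq B}]$ as the exact quantity governing the lower bound, regardless of whether the full moment $\mathrm{E}[X_\vep^{K+1}]$ is finite.
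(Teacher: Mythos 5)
Your proposal is correct and follows essentially the same route as the paper: the paper's proof is precisely to decompose as $\L(\vep)=\mathrm{E}[\log(1+\vep^2 X_\vep)]$, Taylor-expand the logarithm (you use the exact integral remainder, the paper the equivalent two-sided inequality), combine with the bootstrapped moment expansions of Lemma~\ref{lem:dvptpuissance} for the upper bound, and invoke the signed one-sided moment estimates of Lemma~\ref{lem:dvptpuissance_erreur} for the lower bound, exactly as you describe. The only cosmetic difference is your phrase ``applied in the reverse direction'' for the Section~\ref{sec:error} bootstrap: it is in fact the same recursion as in Section~\ref{sec:regular}, only propagating one-sided inequalities with a definite sign instead of two-sided $\O$-bounds.
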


\begin{Rem}
The coefficients $(\ell_k)$ are the same as in Theorem~\ref{The:thm_concret}.
The neat thing about that theorem is that, unlike the lower bound~\eqref{eq:The_encadrement_Rvep_nonentier} of Theorem~\ref{The:thm_concret}, the estimate~\eqref{eq:The_encadrement_Rvep} should be ``sharp'' in the following sense.  If one proves that, as $\vep$ goes to $0$, $\mathrm{P}( X_\vep \geqslant c \vep^{-2} ) \geqslant C \vep^{2 \alpha}$
for some positive constants $c$ and $C$ (the precise analysis of $M_\vep$'s invariant measure conducted in \cite{Genovese_Giacomin_Greenblatt_2017} provides such an estimate when $\alpha \in (0,1)$)  then~\eqref{eq:The_encadrement_Rvep} becomes $c \vep^{2 \alpha} \leqslant R_K(\vep) \leqslant C \vep^{ 2 \alpha}$ (with a log correction if $\alpha$ is an integer). It is the good order of $\vep$ predicted by Conjecture~\ref{conj}. Without such an estimate, \eqref{eq:The_encadrement_Rvep} is not satisfactory yet for it is not explicit enough. 
\end{Rem}

To obtain the explicit bounds given in Theorem~\ref{The:thm_concret}, a study of the divergence speed of $X_\vep$'s divergent moments is needed. It is conducted in Section~\ref{section:Xvep^K+1}. The derivation of upper bounds is based a stochastic dominance found in Section~\ref{sec:defXvep} (namely $X_\vep \preccurlyeq X_0$), and on renewal theory results describing the limiting behaviour of the tail of $X_0$. The lower bounds are only derived when $Z$ is bounded. The analysis is again based on a recursive use of the fixed point equation~\eqref{eq:intro_invarianceXvep}. It is the point where the sharpness of the lower bound~\eqref{eq:The_encadrement_Rvep} of the singularity is lost. 
Theorem~\ref{The:thm_concret} is proved at the end of Section~\ref{section:Xvep^K+1}.

\numberwithin{prop}{section}

\section{Existence and first properties of the invariant measure \texorpdfstring{$X_\vep$}{}}
\label{sec:defXvep}

In this section we prove the existence of the random variables $X_\vep$ and derive formula~\eqref{eq:expression_lyapunov}.
A first result on $X_\vep$'s moments is also proved: it spells out which moments of $X_\vep$ are bounded as $\vep$ goes to~0 and which diverge.

We start by introducing an invariant measure of the random matrix $M_0$ ($\vep=0$). It will play a central role to define the random variables $X_\vep$ and control their moments. First I need to fix a notation for the stochastic dominance.

\begin{defi}
The \emph{stochastic dominance} will be denoted by $\preccurlyeq$. Formally, if $X$ and $Y$ are two real-valued random variables, $X \preccurlyeq Y$ means that 
$\mathrm{P}(X \geqslant x) \leqslant \mathrm{P}(Y \geqslant x)$ for every $x \in \R$.
Equivalently, there exist two copies $\tilde X$ and $\tilde Y$, of $X$ and $Y$ respectively, such that $\tilde X \leqslant \tilde Y$ almost surely.
\end{defi}

\begin{lem}
\label{lem:X0}
Fix a sequence $(Z_n)$ of iid copies of $Z$.
The series
\begin{equation}
X_0 = \sum_{n = 1}^{+\infty} Z_1 \cdots Z_n
\end{equation}
converges almost surely. It is the unique random variable (in distribution) satisfying
\begin{equation}
\label{eq:egalité_loi_X0}
X_0 \overset{\normalfont{(d)}}{=} Z (1+X_0),
\end{equation}
with $Z$ independent of $X_0$.
Moreover $\mathrm{E}[\log_+ X_0]$ is finite; and for every $\gamma > 0$,
\begin{equation}
\mathrm{E}[X_0^{\gamma}]< + \infty \qquad \text{ if and only if } \qquad \mathrm{E}[Z^\gamma]<1.
\end{equation}
\end{lem}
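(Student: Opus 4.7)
The plan is to treat the four assertions in order. For almost sure convergence of the series, I apply the strong law of large numbers to $\log Z_1, \log Z_2, \ldots$: since $\mathrm{E}[\log Z] < 0$ (it is allowed to be $-\infty$, but the SLLN still applies), one gets $\frac{1}{n}\log(Z_1 \cdots Z_n) \to \mathrm{E}[\log Z]<0$ almost surely, so $Z_1 \cdots Z_n$ decays exponentially and the series converges a.s. The fixed point equation follows immediately by factoring the first term:
\begin{equation*}
X_0 \,=\, Z_1\Bigl(1 + \sum_{n\geqslant 2} Z_2\cdots Z_n\Bigr),
\end{equation*}
and the inner sum is an independent copy of $X_0$. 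For uniqueness, I iterate: any solution $Y$ satisfies $Y \overset{(d)}{=} \sum_{k=1}^{n-1} Z_1 \cdots Z_k + Z_1 \cdots Z_n Y_n$ with $Y_n \overset{(d)}{=} Y$ independent of $(Z_1,\ldots,Z_n)$. Letting $n \to \infty$, the finite sum converges a.s.\ to $X_0$, while $Z_1\cdots Z_n \to 0$ a.s.\ forces $Z_1\cdots Z_n Y_n \to 0$ in probability. Hence $Y \overset{(d)}{=} X_0$.

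For the moment equivalence, I first handle the ``if'' direction. Fix $\gamma>0$ with $\mathrm{E}[Z^\gamma]<1$. For $\gamma \leqslant 1$, subadditivity of $x\mapsto x^\gamma$ and independence give
\begin{equation*}
\mathrm{E}[X_0^\gamma] \,\leqslant\, \sum_{n \geqslant 1}\mathrm{E}[Z^\gamma]^n \,<\,+\infty.
\end{equation*}
For $\gamma > 1$, Minkowski's inequality in $L^\gamma$ yields $\|X_0\|_\gamma \leqslant \sum_n \mathrm{E}[Z^\gamma]^{n/\gamma} < +\infty$. For the ``only if'' direction, I take the $\gamma$-th moment in the fixed point equation to get $\mathrm{E}[X_0^\gamma] = \mathrm{E}[Z^\gamma]\,\mathrm{E}[(1+X_0)^\gamma]$. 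Using $(1+x)^\gamma \leqslant 2^{(\gamma-1)_+}(1+x^\gamma)$ for $x\geqslant 0$ one sees that finiteness of $\mathrm{E}[X_0^\gamma]$ implies finiteness of $\mathrm{E}[(1+X_0)^\gamma]$. Since $X_0 \geqslant Z_1 > 0$ and $X_0 < +\infty$ almost surely, one has $\mathrm{E}[X_0^\gamma] \in (0,+\infty)$ and $(1+X_0)^\gamma > X_0^\gamma$ pointwise a.s., hence $\mathrm{E}[(1+X_0)^\gamma] > \mathrm{E}[X_0^\gamma]$ strictly; dividing yields $\mathrm{E}[Z^\gamma]<1$.

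Finally, finiteness of $\mathrm{E}[\log_+ X_0]$ is derived from the previous step. The function $\gamma \mapsto \mathrm{E}[Z^\gamma]$ is convex on $[0,\delta)$ with $\mathrm{E}[Z^0]=1$ and right-derivative $\mathrm{E}[\log Z] < 0$ at $0$; therefore there exists $\delta'\in(0,\delta)$ such that $\mathrm{E}[Z^{\delta'}]<1$, and the previous step gives $\mathrm{E}[X_0^{\delta'}]<+\infty$. Combined with the elementary bound $\log_+ x \leqslant x^{\delta'}/\delta'$ for $x\geqslant 0$, this closes the argument.

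The most delicate point is the ``only if'' direction of the moment equivalence: one must rule out equality in $\mathrm{E}[(1+X_0)^\gamma] \geqslant \mathrm{E}[X_0^\gamma]$, which relies on the strict pointwise inequality $(1+X_0)^\gamma > X_0^\gamma$ and thus on $X_0 < +\infty$ a.s.\ (established in the first step). The convergence step, the perpetuity-style uniqueness argument, and the $L^\gamma$ bounds are all routine; it is this strict-comparison step, together with verifying finiteness of $\mathrm{E}[(1+X_0)^\gamma]$ from that of $\mathrm{E}[X_0^\gamma]$, that actually delivers the sharp equivalence.
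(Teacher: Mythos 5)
Your proof is correct and follows essentially the same approach as the paper: SLLN for the a.s. convergence, iterating the fixed-point identity for uniqueness, Minkowski/subadditivity for the ``if'' direction, and the identity $\mathrm{E}[X_0^\gamma]=\mathrm{E}[Z^\gamma]\,\mathrm{E}[(1+X_0)^\gamma]$ together with the strict pointwise bound $(1+X_0)^\gamma>X_0^\gamma$ for the ``only if'' direction (which you phrase as its contrapositive, the paper as a direct implication, but the key step is identical). The only real departures are cosmetic: you substitute an explicit $\log_+ x\leqslant x^{\delta'}/\delta'$ estimate for the paper's appeal to Jensen, and you spell out the convexity/right-derivative argument that the paper delegates to its preamble where it establishes that $\A$ has positive length; also note a harmless off-by-one in your iterated identity, where the partial sum should run to $n$ rather than $n-1$.
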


\begin{proof} Recall that $\mathrm{E}[\log Z] <0$. The almost sure convergence of the series follows from the law of large numbers, whereby
\begin{equation}
\label{eq:LGN}
Z_1 \cdots Z_n = e^{ n  \mathrm{E} [\log Z] + o(n)} \qquad \textrm{ as } n \to + \infty.
\end{equation}
Of course
\begin{equation}
X_0 = \sum_{n = 1}^{+\infty} Z_1 \cdots Z_n = Z_1 \left( 1 + \sum_{n = 2}^{+\infty} Z_2 \cdots Z_n \right)
\end{equation}
satisfies the identity~\eqref{eq:egalité_loi_X0}. 
Let's turn to the uniqueness. If $\tilde X_0$ is another random variable satisfying~\eqref{eq:egalité_loi_X0}, then, applying this identity $N$ times we get
\begin{equation}
\tilde X_0 \overset{\normalfont{(d)}}{=} \sum_{n = 1}^{N} Z_1 \cdots Z_n + Z_1 \cdots Z_N \tilde X_0,
\end{equation}
where $Z_1, \ldots, Z_N$ are iid copies of $Z$, independent of $\tilde X_0$.
With~\eqref{eq:LGN}, the last term vanishes (in distribution) as $N$ goes to $+\infty$, whereas the first sum converges monotonically towards $X_0$. So eventually, $\tilde X_0 \overset{\normalfont{(d)}}{=} X_0$. The uniqueness is proved.

Now fix $\gamma > 0$ such that $\mathrm{E}[Z^{\gamma}]<1$. We want to prove that $\mathrm{E}[X_0^{\gamma}] $ is finite. If $\gamma \geqslant 1$ we use Minkovsky's inequality:
\begin{equation}
\label{eq:minkowski1}
\mathrm{E}[X_0^{\gamma}]^{1/{\gamma}} \leqslant \sum_{n=1}^{+\infty} \mathrm{E}[(Z_1 \cdots Z_n)^\gamma]^{1/\gamma} = \sum_{n=0}^{+\infty} \mathrm{E}[Z^\gamma]^{n/\gamma}.
\end{equation}
Thus $\mathrm{E}[X_0^\gamma]$ is finite.
On the other hand, if $\gamma \in (0,1)$, then for all $x,y \geqslant 0$, $(x+y)^{\gamma} \leqslant x^{\gamma} + y^{\gamma}$.
So
\begin{equation}
\label{eq:minkowski2}
\mathrm{E}[X_0^{\gamma}] \leqslant \sum_{n=0}^{+\infty} \mathrm{E}[Z^{\gamma}]^{n},
\end{equation}
which is again finite.
Now, if $\mathrm{E}[Z^{\gamma}]\geqslant 1$, then with the identity~\eqref{eq:egalité_loi_X0},
\begin{equation}
\mathrm{E}[X_0^{\gamma}] = \mathrm{E}[Z^{\gamma}] \mathrm{E}[(1+X_0)^{\gamma}] \geqslant \mathrm{E}[(1+X_0)^{\gamma}], 
\end{equation}
which can hold only if $\mathrm{E}[X_0^{\gamma}]=+\infty$ (or $\gamma =0$).
Eventually, pick $\gamma \in \A$ so that $\mathrm{E}[Z^\gamma] <1$. With the foregoing, we then know that $\mathrm{E}[X_0^\gamma]<+\infty$. Thus, by Jensen's inequality, $\mathrm{E}[\log_+ X_0]$ is finite.
\end{proof}

The next lemma provides the existence of the random variables $X_\vep$ and the desired formula for the Lyapunov exponent.

\begin{lem}
\label{lem:existence_Xvep}
For all $\vep >0$, there exists a non-negative random variable $X_\vep$, unique in distribution, such that
\begin{equation}
\label{eq:invariance_Xvep}
X_\vep \overset{\normalfont{(d)}}{=} Z \frac{1+ X_\vep}{1+ \vep^2 X_\vep},
\end{equation}
with $Z$ independent of $X_\vep$. Moreover, for every $\vep >0$, 
$Z \preccurlyeq X_\vep \preccurlyeq X_0$.
Furthermore,
\begin{equation}
\label{eq:lem:existence_Xvep_Lvep}
\L(\vep) = \mathrm{E}[\log (1 + \vep^2 X_\vep)],
\end{equation}
and $\L(\vep)$ is also the growth rate of the entries of $M_{n,\vep} \cdots M_{1,\vep}$: for every $\vec x, \vec y \in \R^2$ with nonnegative entries,
\begin{equation}
\label{eq:lem:convergence_entries}
\frac{1}{n} \log \left \langle \vec x , M_{n,\vep} \cdots M_{1,\vep}  \vec y \right \rangle \underset{n \to + \infty} {\longrightarrow} \L(\vep) \qquad \text{a.s. and in } L^1.
\end{equation}
\end{lem}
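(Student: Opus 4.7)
The plan is to realize $X_\vep$ as the almost sure fixed point of the random iterated function system $\Phi_{z,\vep}(x) := z(1+x)/(1+\vep^2 x)$, then extract the Lyapunov exponent from a semiconjugacy between the action of $M_\vep$ on the positive cone and the dynamics of $(\Phi_{Z_n,\vep})$, and finally invoke standard properties of products of strictly positive matrices to transfer the growth rate from a specific vector to arbitrary nonnegative ones.

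For existence, uniqueness and the dominance $Z \preccurlyeq X_\vep \preccurlyeq X_0$, I first restrict to $\vep \leqslant 1$, where $\Phi_{z,\vep}$ is nondecreasing in $x$ (derivative $z(1-\vep^2)/(1+\vep^2 x)^2 \geqslant 0$) and satisfies the crude bound $\Phi_{z,\vep}(x) \leqslant z(1+x)$; the case $\vep > 1$ is analogous via even-index iterates (compositions of two nonincreasing maps), but is not needed in the sequel. Set $Y_n^\vep := \Phi_{Z_1,\vep} \circ \cdots \circ \Phi_{Z_n,\vep}(0)$; monotonicity makes $(Y_n^\vep)$ a.s.\ nondecreasing, and the crude bound gives $Y_n^\vep \leqslant Z_1 + Z_1 Z_2 + \cdots + Z_1 \cdots Z_n \uparrow X_0$ a.s.\ by Lemma~\ref{lem:X0}. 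Hence $Y_n^\vep \uparrow X_\vep$ a.s.\ for some $X_\vep \in [0, X_0]$, proving already that $X_\vep \preccurlyeq X_0$. Writing $Y_{n+1}^\vep = \Phi_{Z_1,\vep}(\tilde Y_n^\vep)$ with $\tilde Y_n^\vep$ independent of $Z_1$ and distributed as $Y_n^\vep$, passing to the limit yields both the invariance relation~\eqref{eq:invariance_Xvep} and, since $\vep \leqslant 1$ forces $(1+\tilde X_\vep)/(1+\vep^2 \tilde X_\vep) \geqslant 1$, the lower dominance $X_\vep \geqslant Z_1$ almost surely. Uniqueness follows by coupling two solutions with the same innovations $(Z_n)$ and using the exact Lipschitz identity
\[
\Phi_{z,\vep}(y) - \Phi_{z,\vep}(y') = \frac{z(1-\vep^2)(y-y')}{(1+\vep^2 y)(1+\vep^2 y')},
\]
which iterated gives $|X_n - X_n'| \leqslant Z_1 \cdots Z_n\, |X_0 - X_0'|$; the right-hand side vanishes a.s.\ by the strong law of large numbers and $\mathrm{E}[\log Z] < 0$.

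For the Lyapunov formula~\eqref{eq:lem:existence_Xvep_Lvep} and the convergence~\eqref{eq:lem:convergence_entries}, I parametrize rays in the positive cone by $\vec v(x) := (1, \vep x)^T$. A direct computation gives the semiconjugacy $M_\vep \vec v(x) = (1 + \vep^2 x)\, \vec v(\Phi_{Z,\vep}(x))$. Taking $x_0 := X_\vep$ independent of $(Z_n)$ and $x_k := \Phi_{Z_k,\vep}(x_{k-1})$, the sequence $(x_k)$ is a stationary Markov chain with $x_k \overset{(d)}{=} X_\vep$, and iterating yields
\[
\log \|M_{n,\vep}\cdots M_{1,\vep} \vec v(x_0)\| = \sum_{k=1}^{n} \log(1+\vep^2 x_{k-1}) + \log \|\vec v(x_n)\|.
\]
Birkhoff's ergodic theorem (ergodicity following from the uniqueness of the invariant law proved above) then gives the a.s.\ and $L^1$ limit $\mathrm{E}[\log(1+\vep^2 X_\vep)]$. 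Since the entries of $M_\vep$ are strictly positive, the positive cone is strictly contracted in Hilbert's projective metric, so asymptotically the four entries of $M_{n,\vep}\cdots M_{1,\vep}$ have pairwise bounded ratios; hence for any nonnegative $\vec x, \vec y$ (not both zero), $\langle \vec x, M_{n,\vep}\cdots M_{1,\vep}\, \vec y\rangle$ is sandwiched between two positive constant multiples of $\|M_{n,\vep}\cdots M_{1,\vep}\, \vec v(x_0)\|$. After dividing by $n$ and taking the logarithm, the bounded prefactors disappear and all these growth rates coincide with $\L(\vep) = \mathrm{E}[\log(1+\vep^2 X_\vep)]$.

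The main difficulty, I expect, is not any single hard step but the bookkeeping tying the fixed-point analysis of $X_\vep$ to the cocycle $\log \|M_{n,\vep}\cdots M_{1,\vep}\|$: once the change of variable $\vec v(x)=(1,\vep x)^T$ is in place, the multiplicative structure of the matrix product translates cleanly into the ergodic average above, and the remaining ingredients (Hilbert's metric contraction, Birkhoff's theorem, Lemma~\ref{lem:X0}) are standard.
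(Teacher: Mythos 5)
Your proposal is correct and arrives at all of the lemma's claims, but by a genuinely different route in two places. For existence, you use \emph{backward} iterations $Y_n^\vep=\Phi_{Z_1,\vep}\circ\cdots\circ\Phi_{Z_n,\vep}(0)$, which are a.s.\ monotone in $n$ because each $\Phi_{z,\vep}$ is nondecreasing for $\vep\leqslant 1$; this is Letac's principle and it yields \emph{almost sure} convergence and the coupling $X_\vep\leqslant X_0$ on the same probability space. The paper instead runs the forward iteration from $x_0=0$ and appeals to a Krylov--Bogoliubov argument (tightness of Ces\`aro averages $\rho_N$, extraction of a distributional limit), obtaining only convergence in law. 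The uniqueness and the lower dominance $Z\preccurlyeq X_\vep$ are established the same way in both. For the Lyapunov formula, you derive the cocycle identity via the semiconjugacy $M_\vep\vec v(x)=(1+\vep^2 x)\vec v(\Phi_{Z,\vep}(x))$, then apply Birkhoff's theorem to the stationary chain $(x_k)$, with ergodicity following from uniqueness of the invariant law; the convergence is transferred to arbitrary nonnegative $\vec x,\vec y$ by comparability of entries of a product of matrices with positive entries. The paper instead quotes Hennion's theorem to get the entrywise convergence \eqref{eq:lem:convergence_entries} directly, and then uses the same semiconjugacy to identify the limit with $\mathrm{E}[\log(1+\vep^2 X_\vep)]$. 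Your version is more self-contained (Birkhoff in place of Hennion) at the cost of having to argue ergodicity and uniform integrability explicitly.

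Two small points worth fixing. First, the ``Hilbert-metric contraction'' is not what you actually need for the transfer to arbitrary $\vec x,\vec y$: positivity of $M_{1,\vep}$ (and $M_{n,\vep}$) alone already forces the four entries of $\Pi_n=M_{n,\vep}\cdots M_{1,\vep}$ to have ratios bounded by quantities depending only on the entries of the first and last factor; these are random but distributionally bounded, so $\tfrac1n\log$ of them vanishes a.s.\ and in $L^1$. Invoking strict contraction adds nothing here and might even mislead a reader into expecting a uniform deterministic bound. Second, your parenthetical that the case $\vep>1$ is ``analogous via even-index iterates'' is not quite right for the lower dominance $Z\preccurlyeq X_\vep$: when $\vep>1$ one has $\frac{1+x}{1+\vep^2 x}<1$ for all $x>0$, so in fact $X_\vep\preccurlyeq Z$ there; only existence, uniqueness and $X_\vep\preccurlyeq X_0$ go through. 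Since you do not use $\vep>1$ this is inconsequential (and the paper's own blanket ``for every $\vep>0$'' has the same defect), but the phrasing should be softened.
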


\begin{Rem}
There could be other distributions, supported on $\R$, satisfying~\eqref{eq:invariance_Xvep}. We only claim uniqueness for non-negative invariant measure. However, if $Z$ does not have a finite support, then one can prove, using classical results of products of random matrices (see \cite[Chapter 3]{Bougerol}), that there exists a unique invariant measure on $\overline{\R}$. With Lemma~\ref{lem:existence_Xvep}, we know that it must be supported on $\R_+$.
\end{Rem}

In what follows, $X_\vep$ will always denote the unique non-negative invariant random variable of Lemma~\ref{lem:existence_Xvep}.

\begin{proof}
We begin with the proof of the existence, for which we use a standard procedure. Fix an iid sequence $(Z_n)$ of copies of $Z$, set $x_0 = 0$ and define recursively the random variables
\begin{equation}
\label{eq:xn+1=f(xn)}
x_{n+1} = Z_{n+1} \frac{1+ x_n}{1+ \vep^2 x_n}.
\end{equation}
Denote by $\nu_n$ the distribution of $x_n$ and consider the measure $\rho_N = \frac{1}{N} \sum_{n=0}^{N-1} \nu_n$.
Observe that for any $n \geqslant 0$, $x_n$ is nonnegative and $x_{n+1} \leqslant Z_{n+1} (1+ x_n)$.
Thus, by an easy induction,
\begin{equation}
0 \leqslant   x_n \leqslant \sum_{k=0}^{n-1} Z_n \cdots Z_{n-k} \preccurlyeq X_0:
\end{equation}
the random variables $x_n$ are uniformly bounded by $X_0$. Consequently the sequence $(\rho_N)$ is tight. Pick a limit point $\rho_\infty$ of that sequence and fix a random variable $X_\vep$ with distribution $\rho_\infty$. The limit distribution $\rho_\infty$ must be invariant under the random transformation~\eqref{eq:xn+1=f(xn)}. In other words it must satisfy~\eqref{eq:invariance_Xvep}.
The existence of an invariant measure supported on $\R_+$ is proved. Incidentally we obtained $X_\vep \preccurlyeq X_0$. As for the stochastic lower bound $X_\vep \succcurlyeq Z$, it directly follows from the identity~\eqref{eq:invariance_Xvep}.

To deal with the uniqueness, assume that $X^{(0)}_\vep$ and $Y^{(0)}_\vep$ are two such random variables and fix an iid sequence $(Z_n)$ of copies of $Z$, independent of $X^{(0)}_\vep$ and $Y^{(0)}_\vep$. We introduce, for $n \geqslant 0$,
\begin{equation}
\label{eq:def_Xvepn_Yvepn}
X_\vep^{(n+1)} = Z_{n+1} \frac{1+ X^{(n)}_\vep}{1+ \vep^2 X^{(n)}_\vep}, \qquad Y_\vep^{(n+1)} = Z_{n+1} \frac{1+Y^{(n)}_\vep}{1+ \vep^2 Y^{(n)}_\vep}.
\end{equation}
Observe that, almost surely,
\begin{equation}
| X^{(n+1)}_\vep- Y^{(n+1)}_\vep | = Z_{n+1} \frac{(1-\vep^2)|X^{(n)}_\vep-Y^{(n)}_\vep|}{(1+ \vep^2 X^{(n)}_\vep)(1+ \vep^2 Y^{(n)}_\vep)} \leqslant Z_{n+1} |X^{(n)}_\vep-Y^{(n)}_\vep|.
\end{equation}
Thus, with~\eqref{eq:LGN}, $| X^{(n)}_\vep- Y^{(n)}_\vep |$ vanishes almost surely as $n$ goes to $+ \infty$.
On the other hand, note that, with the construction~\eqref{eq:def_Xvepn_Yvepn}, for all $n \geqslant 0$, $X^{(n)}_\vep \overset{\normalfont{(d)}}{=} X^{(0)}_\vep$ and  $Y^{(n)}_\vep \overset{\normalfont{(d)}}{=} Y^{(0)}_\vep$.
The uniqueness follows. Then $\rho_N$ actually converges (without extraction) towards $X_\vep$'s distribution.

We are left with the proof of formula~\eqref{eq:lem:existence_Xvep_Lvep}. Thanks to a result by H. Hennion \cite{Hennion_97}, since $M_\vep$'s entries are positive, the convergence~\eqref{eq:lem:convergence_entries} holds. On the other hand, for every $n \geqslant 0$,
\begin{equation}
M_{n,\vep}  \cdots M_{1,\vep} \begin{pmatrix}
1 \\ \vep X_\vep^{(0)} \end{pmatrix}
 = \left[ \prod_{k=0}^{n-1} (1+\vep^2 X_\vep^{(k)}) \right] \begin{pmatrix}
1 \\ \vep X_\vep^{(n)} \end{pmatrix}
\end{equation}
So, by taking the $\log$ and the expectation,
\begin{equation}
\frac{1}{n} \mathrm{E}[ \log \| M_{n,\vep}  \cdots M_{1,\vep} {}^t (1, \vep X_\vep^{(0)}) \| ] = 
\mathrm{E}[\log(1+ \vep^2 X_\vep)] + \frac{1}{n} \mathrm{E}[\log \| (1, \vep X_\vep)\|].
\end{equation}
Since $\mathrm{E}[\log_+ X_\vep] \leqslant \mathrm{E}[\log_+ X_0]$ is finite (Lemma~\ref{lem:X0}), the last term vanishes as $n$ goes to $+ \infty$. On the other hand, one has, for every $n \geqslant 0$,
\begin{equation}
(1,0) M_{n,\vep} \cdots M_{1,\vep} {}^t(1,0)
\leqslant \|M_{n,\vep} \cdots M_{1,\vep} {}^t (1, \vep X_\vep^{(0)})\| 
\leqslant \|M_{n,\vep} \cdots M_{1,\vep}\|(1 + X_\vep^{(0)}).
\end{equation}
Since we know that both the lower and upper bounds goes to $\L(\vep)$ (after taking $\log$ and expectation) as $n$ goes to $+\infty$, almost surely and in $L^1$, we get the result.
\end{proof}

\begin{Rem}Formula~\eqref{eq:lem:existence_Xvep_Lvep} can also be proved with a classical result by H. Furstenberg and Y. Kifer \cite[Corollary of Theorem 3.10]{Furstenberg_Kifer_83}, which gives an explicit formula for the Lyapunov exponent in terms of invariant measures as soon as $M$ is an invertible random matrix of size $d \times d$ with no deterministic proper invariant subspace. We could also have used the convergence $\rho_n \to \L(X_\vep)$ to prove~\eqref{eq:lem:existence_Xvep_Lvep} and~\eqref{eq:lem:convergence_entries} without using H. Hennion's results.
\end{Rem}

\begin{Rem}
If one notes that the map $\vep \mapsto \frac{1+x}{1+\vep^2 x}$.
is monotone, one obtains, with the previous construction, that the random variables $X_\vep$ are stochastically decreasing with $\vep$: for all $\vep' \geqslant \vep >0$ one has
$X_{\vep'}  \preccurlyeq X_{\vep}  \preccurlyeq X_0$.
\end{Rem}

\begin{lem}
\label{lem:convergence_en_loi_Xvep}
$X_\vep \to X_0$ in distribution when $\vep \to 0$.
\end{lem}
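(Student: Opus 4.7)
The plan is to combine tightness with the uniqueness of the non-negative fixed point of the $\vep = 0$ equation and identify every subsequential weak limit. The stochastic dominance $X_\vep \preccurlyeq X_0$ established in Lemma~\ref{lem:existence_Xvep} immediately gives tightness of the family $(X_\vep)_{\vep > 0}$. So from any sequence $\vep_n \to 0$ I can extract a subsequence along which $X_{\vep_n}$ converges in distribution to some non-negative limit $X^*$, and it suffices to show $X^* \overset{(d)}{=} X_0$.

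To identify $X^*$, I would pass to the limit in the fixed point identity
\begin{equation}
X_{\vep_n} \overset{(d)}{=} Z \frac{1 + X_{\vep_n}}{1 + \vep_n^2 X_{\vep_n}},
\end{equation}
with $Z$ independent of $X_{\vep_n}$ on the right-hand side. Since independence passes to weak limits, $(Z, X_{\vep_n}) \to (Z, X^*)$ jointly in distribution with $Z$ and $X^*$ independent. The function $(z,x,\eta) \mapsto z(1+x)/(1+\eta^2 x)$ is not bounded as $\eta \to 0$, so I would invoke Skorohod's representation theorem to upgrade this to almost sure convergence $(\tilde Z_n, \tilde X_n) \to (\tilde Z, \tilde X^*)$ on a common probability space. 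Since $\tilde X^*$ is a.s.\ finite, $\vep_n^2 \tilde X_n \to 0$ almost surely, and thus $\tilde Z_n (1+\tilde X_n)/(1+\vep_n^2 \tilde X_n) \to \tilde Z (1 + \tilde X^*)$ a.s. But the left-hand side has the law of $X_{\vep_n}$, which converges in distribution to $X^*$. Therefore $X^* \overset{(d)}{=} Z(1+X^*)$ with $Z$ and $X^*$ independent, and the uniqueness part of Lemma~\ref{lem:X0} forces $X^* \overset{(d)}{=} X_0$.

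Since every weakly convergent subsequence of $(X_\vep)_{\vep > 0}$ has the same limit $X_0$, the whole family converges in distribution to $X_0$ as $\vep \to 0$.

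The main obstacle is precisely the lack of uniform continuity of the map $x \mapsto (1+x)/(1+\vep^2 x)$ near $\vep = 0$ on the unbounded half-line, which prevents a direct continuous-mapping argument on the level of distributions. Skorohod's theorem sidesteps this cleanly by converting weak convergence into almost sure convergence, so that the pointwise fact $\vep_n^2 \tilde X_n \to 0$ a.s.\ is all that is needed; the tightness provided by $X_\vep \preccurlyeq X_0$ is what makes Skorohod applicable in the first place.
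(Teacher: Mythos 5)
Your proof is correct and follows the same strategy as the paper: tightness via $X_\vep \preccurlyeq X_0$, identify any subsequential weak limit through the fixed-point identity, and invoke the uniqueness statement of Lemma~\ref{lem:X0}. The paper states the passage to the limit in the identity without elaboration; your Skorohod-representation step is a clean and rigorous justification of that passage, but the underlying argument is the same.
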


\begin{proof}
The stochastic dominance $X_\vep  \preccurlyeq X_0$ ensures that the family of random variables $(X_\vep)_{\vep>0}$ is tight. Consider a limit point $\tilde X_0$ of $X_\vep$ as $\vep$ goes to $0$. Since $X_\vep$ satisfies the identity~\eqref{eq:invariance_Xvep}, the limit point $\tilde X_0$ must satisfy $\tilde X_0 \overset{\normalfont{(d)}}{=} Z (1+ \tilde X_0)$.
That means, using Lemma~\ref{lem:X0}, that $X_0$ is the only possible limit point of $X_\vep$ as $\vep$ goes to $0$. The convergence of $X_\vep$ towards $X_0$ (in distribution) follows.
\end{proof}

Using classical integration theorems, one readily obtains the following limiting behaviour of $X_\vep$'s moments, or truncated moments, which will be needed in the proof of Theorem~\ref{The:thm_concret}.

\begin{corol}
\label{corol:moments_Xvep_premier_resultat}
For any $\gamma >0$,
\begin{enumerate}
\item If $\mathrm{E}[Z^{\gamma}]<1$ then, as $\vep$ goes to~0, $\mathrm{E}[X_\vep^\gamma] = \O(1)$.
\item If $\mathrm{E}[Z^{\gamma}] \geqslant 1$ then for any $B >0$,
\begin{equation}
\mathrm{E}\left[X_\vep^{\gamma} \mathbf{1}_{\vep^2 X_\vep \leqslant B}\right] \underset{\vep \to 0} {\longrightarrow} + \infty.
\end{equation}
\end{enumerate}
\end{corol}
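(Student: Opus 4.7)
The plan is to combine three ingredients already established in the paper: the stochastic dominance $X_\vep \preccurlyeq X_0$ (Lemma~\ref{lem:existence_Xvep}), the convergence in distribution $X_\vep \to X_0$ as $\vep\to 0$ (Lemma~\ref{lem:convergence_en_loi_Xvep}), and the dichotomy for the moments of $X_0$ in Lemma~\ref{lem:X0}. The two statements are then produced by two different uses of these facts: the first is an easy one-line consequence of stochastic dominance, the second a Portmanteau-type argument that requires a careful choice of a continuous truncation of $x^\gamma$.

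For the first part, suppose $\mathrm{E}[Z^\gamma]<1$. Then $\mathrm{E}[X_0^\gamma]<+\infty$ by Lemma~\ref{lem:X0}, and since the function $x\mapsto x^\gamma$ is nondecreasing on $[0,+\infty)$, the stochastic dominance $X_\vep\preccurlyeq X_0$ yields $\mathrm{E}[X_\vep^\gamma] \leqslant \mathrm{E}[X_0^\gamma]$ for every $\vep>0$. Hence $\mathrm{E}[X_\vep^\gamma]=\O(1)$.

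For the second part, assume $\mathrm{E}[Z^\gamma]\geqslant 1$, so by Lemma~\ref{lem:X0} we have $\mathrm{E}[X_0^\gamma]=+\infty$, and consequently $\mathrm{E}[X_0^\gamma \mathbf{1}_{X_0\leqslant A}]\to+\infty$ as $A\to+\infty$ (monotone convergence). Fix $B>0$ and $M>0$. Choose $A>0$ large enough so that $\mathrm{E}[X_0^\gamma \mathbf{1}_{X_0\leqslant A}]\geqslant 2M$, and introduce a continuous cutoff $\phi_A\colon [0,+\infty)\to[0,1]$ with $\phi_A\equiv 1$ on $[0,A]$ and $\phi_A\equiv 0$ on $[A+1,+\infty)$. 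The function $h_A(x) := x^\gamma \phi_A(x)$ is bounded and continuous, and satisfies
\begin{equation}
x^\gamma \mathbf{1}_{x\leqslant A} \,\leqslant\, h_A(x) \,\leqslant\, x^\gamma \mathbf{1}_{x\leqslant A+1}.
\end{equation}
By Lemma~\ref{lem:convergence_en_loi_Xvep} and the Portmanteau theorem for bounded continuous test functions, $\mathrm{E}[h_A(X_\vep)] \to \mathrm{E}[h_A(X_0)] \geqslant \mathrm{E}[X_0^\gamma \mathbf{1}_{X_0\leqslant A}]\geqslant 2M$ as $\vep\to 0$. In particular, for $\vep$ small enough we have simultaneously $\mathrm{E}[h_A(X_\vep)]\geqslant M$ and $\vep^2(A+1)\leqslant B$, so that
\begin{equation}
\mathrm{E}[X_\vep^\gamma \mathbf{1}_{\vep^2 X_\vep \leqslant B}] \,\geqslant\, \mathrm{E}[X_\vep^\gamma \mathbf{1}_{X_\vep \leqslant A+1}] \,\geqslant\, \mathrm{E}[h_A(X_\vep)] \,\geqslant\, M.
\end{equation}
Since $M$ is arbitrary, the truncated moment diverges.

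There is no genuine obstacle in this proof: both assertions are bookkeeping consequences of results already gathered. The only point that requires a little care is to approximate the indicator $\mathbf{1}_{x\leqslant A}$ by a continuous function \emph{from above} (i.e.\ $h_A \geqslant x^\gamma \mathbf{1}_{x\leqslant A}$) so that weak convergence delivers a lower bound on $\mathrm{E}[X_\vep^\gamma \mathbf{1}_{\vep^2 X_\vep \leqslant B}]$ and not an upper bound, which would be useless here.
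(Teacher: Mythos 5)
Your proof is correct, and for the second part it follows a genuinely different route than the paper's. The paper exploits the stochastic dominance $X_\vep\preccurlyeq X_0$ to put $\tilde X_\vep\leqslant\tilde X_0$ on a common probability space and then applies Fatou's lemma to the product $\tilde X_\vep^\gamma\mathbf{1}_{\vep^2\tilde X_0\leqslant B}$, using that $\tilde X_\vep\to\tilde X_0$ almost surely under the coupling. You instead sandwich $x^\gamma\mathbf{1}_{x\leqslant A}\leqslant h_A(x)\leqslant x^\gamma\mathbf{1}_{x\leqslant A+1}$ by a fixed bounded continuous test function $h_A$, invoke Portmanteau directly from Lemma~\ref{lem:convergence_en_loi_Xvep}, and use that $\{X_\vep\leqslant A+1\}\subseteq\{\vep^2 X_\vep\leqslant B\}$ once $\vep^2(A+1)\leqslant B$. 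Both arguments are valid; yours avoids constructing a coupling and avoids Fatou entirely, at the cost of introducing the cutoff $\phi_A$, and it notably uses the stochastic dominance only for Part~1 (and implicitly for the tightness behind Lemma~\ref{lem:convergence_en_loi_Xvep}), whereas the paper uses it directly in Part~2 as well. The care you flag at the end --- approximating the indicator from above rather than below --- is exactly the right concern, and you handle it correctly.
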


\begin{proof}
Recall that $\mathrm{E}[X_0^\gamma]$ is finite if and only if $\mathrm{E}[Z^\gamma]<1$ (Lemma~\ref{lem:X0}). With the stochastic dominance $X_\vep \preccurlyeq X_0$ provided by Lemma~\ref{lem:existence_Xvep}, we get $\mathrm{E}[X_\vep^\gamma ] = \O(1)$ when $\mathrm{E}[Z^\gamma] <1$. On the other hand if $\mathrm{E}[Z^\gamma]\geqslant 1$, then $\mathrm{E}[X_0^\gamma]=+\infty$ (Lemma~\ref{lem:X0}).
Since $X_\vep \preccurlyeq X_0$, one can pick representatives $\tilde X_\vep$ and $\tilde X_0$ such that $\tilde X_\vep \leqslant \tilde X_0$ almost surely. It gives the lower bound
\begin{equation}
\mathrm{E}\left[X_\vep^{\gamma} \mathbf{1}_{\vep^2 X_\vep \leqslant B}\right] \geqslant \mathrm{E}\left[\tilde X_\vep^{\gamma} \mathbf{1}_{\vep^2 \tilde X_0 \leqslant B}\right]
\end{equation}
for any $B >0$. By Fatou's lemma and the convergence in distribution provided by Lemma~\ref{lem:convergence_en_loi_Xvep}, the latter lower bound goes to $+\infty$ as $\vep$ goes to $0$.
\end{proof}

\section{Regular expansion (Theorem~\ref{The:dvpt_avec_erreur_general}: upper bound)}
\label{sec:regular}

In this section we prove the existence of a regular expansion for the Lyapunov exponent $\L(\vep)$. We also  lay out the method, which will be used twice more: for the generalization of this result in Appendix~\ref{appendix:generalization} and in Section~\ref{sec:error} to obtain the lower bound of the error. It is based on the study of a regular expansion for the moments of $X_\vep$ which are bounded as $\vep $ goes to~0.
Let us first state the main result of the section.

\begin{prop}
\label{prop:dvpt_lyapunov}
Pick an integer $K \in \A \cup\{0\}$, and fix $\beta \in [K,K+1]$.
The following expansion holds when $\vep$ goes to $0$,
\begin{equation}
\label{eq:prop:dvpt_lyapunov}
\L(\vep) = \sum_{k=1}^K (-1)^{k+1} \ell_k \vep^{2 k} + \O(\vep^{2 \beta}\mathrm{E}[X_\vep^\beta]),
\end{equation}
where, for $k \leqslant K$, the coefficient $\ell_k$ is a positive rational function of $\mathrm{E}[Z], \ldots, \mathrm{E}[Z^k]$.
\end{prop}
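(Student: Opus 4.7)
The plan is to combine the Lyapunov formula $\L(\vep)=\mathrm{E}[\log(1+\vep^2 X_\vep)]$ from Lemma~\ref{lem:existence_Xvep} with a bootstrap on the moments of $X_\vep$ exploiting the fixed-point identity~\eqref{eq:invariance_Xvep}. First I would establish the elementary interpolation
\begin{equation*}
\Bigl|\log(1+y)-\sum_{k=1}^K \frac{(-1)^{k+1}}{k}y^k\Bigr|\leqslant C_\beta\, y^\beta\qquad(y\geqslant 0,\ \beta\in[K,K+1]),
\end{equation*}
by comparing the Taylor remainder on $[0,1]$ with a direct estimate on $[1,\infty)$. Applied with $y=\vep^2 X_\vep$ and integrated, this reduces the claim to showing, for each $k\in\{1,\dots,K\}$, an expansion $\mathrm{E}[X_\vep^k]=\sum_{j=0}^{K-k} m_{k,j}\vep^{2j}+O(\vep^{2(\beta-k)}\mathrm{E}[X_\vep^\beta])$ with coefficients $m_{k,j}$ rational in the moments of $Z$. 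Note that log-convexity of $p\mapsto \mathrm{E}[Z^p]$ together with $\mathrm{E}[Z^0]=1$ and $\mathrm{E}[Z^K]<1$ yields $\mathrm{E}[Z^k]<1$ for every $k\leqslant K$, so by Corollary~\ref{corol:moments_Xvep_premier_resultat} all the moments involved remain bounded as $\vep\to 0$.

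The core of the argument is the derivation of these moment expansions. Using~\eqref{eq:invariance_Xvep} and the independence of $Z$ from $X_\vep$, one writes
\begin{equation*}
\mathrm{E}[X_\vep^k]=\mathrm{E}[Z^k]\sum_{i=0}^k\binom{k}{i}\mathrm{E}\!\left[\frac{X_\vep^i}{(1+\vep^2 X_\vep)^k}\right].
\end{equation*}
For each $i$, I would Taylor-expand $(1+y)^{-k}$ at $0$ to an $i$-dependent order $N_i-1$ with remainder $R_{N_i,k}$. Splitting $y\leqslant 1$ and $y\geqslant 1$ yields the sharp bound $|R_{N_i,k}(y)|\leqslant C\,y^\gamma$ for every $\gamma\in[N_i-1,N_i]$ and all $y\geqslant 0$. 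Choosing $N_i=\max(1,\lceil\beta-i\rceil)$ and $\gamma=\beta-i$, the integrated remainder is $O(\vep^{2(\beta-i)}\mathrm{E}[X_\vep^\beta])$, which is dominated by $\vep^{2(\beta-k)}\mathrm{E}[X_\vep^\beta]$ since $i\leqslant k$. Moreover this choice ensures that only moments $\mathrm{E}[X_\vep^m]$ with $m=i+j\leqslant K$ appear in the polynomial part.

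Grouping these identities for $k=1,\dots,K$ gives a linear system $(I-A(\vep^2))\vec u=\vec b(\vep^2)+\vec r(\vep)$, where $\vec u=(\mathrm{E}[X_\vep^k])_{k=1}^K$, the matrix $A$ and the vector $\vec b$ are polynomial in $\vep^2$ with coefficients rational in $\mathrm{E}[Z],\dots,\mathrm{E}[Z^K]$, and $\vec r(\vep)_k=O(\vep^{2(\beta-k)}\mathrm{E}[X_\vep^\beta])$. At $\vep=0$, $A(0)$ is lower triangular with diagonal entries $\mathrm{E}[Z^k]<1$, hence $I-A(0)$ is invertible; expanding $(I-A(\vep^2))^{-1}$ as a Neumann series in $\vep^2$ yields the sought-after expansion of each $\mathrm{E}[X_\vep^k]$, with rational coefficients, and substitution into the $\log$ expansion produces the stated formula for $\L(\vep)$ (positivity of the $\ell_k$ being then extracted from the explicit recursion). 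The main technical obstacle is precisely the bootstrap step: a uniform truncation order would force moments $\mathrm{E}[X_\vep^m]$ with $m>K$ into the recursion, and these generally diverge as $\vep\to 0$. The $i$-dependent choice of $N_i$, made possible by the sharp bound on $R_{N,k}$, is what keeps the bootstrap inside the range of bounded moments and closes the argument.
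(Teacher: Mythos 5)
Your route starts exactly as the paper does — formula $\L(\vep)=\mathrm{E}[\log(1+\vep^2 X_\vep)]$, Taylor expansion of the logarithm with a $\beta$-th order remainder, and a reduction to an expansion of $\mathrm{E}[X_\vep^k]$ for $k\leqslant K$ with $k$-dependent error $O(\vep^{2(\beta-k)}\mathrm{E}[X_\vep^\beta])$, bootstrapping on the fixed-point identity and a sharp $[N-1,N]$-interpolated bound on the truncated binomial series (in the paper this is~\eqref{eq:dvpt_(1+x)^-l}). The underlying bootstrap is the same; the organizational difference is that the paper carries it out by a course-of-values double induction on $(m,j)$ that at each step isolates the $(i,r)=(0,l)$ term and divides by $1-\mathrm{E}[Z^l]$, producing an explicit positive recursion~\eqref{eq:defgls2} for the coefficients $g_{l,k}$, whereas you assemble the $K$ relations into a single linear system $(I-A(\vep^2))\vec u=\vec b(\vep^2)+\vec r(\vep)$ and invert.

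There is a genuine gap in the inversion step. You correctly note $\vec r(\vep)_k=O(\vep^{2(\beta-k)}\mathrm{E}[X_\vep^\beta])$, and the conclusion you need is that the component-$k$ error after solving the system is still $O(\vep^{2(\beta-k)}\mathrm{E}[X_\vep^\beta])$. But the naive operator-norm bound $\|(I-A(\vep^2))^{-1}\|=O(1)$ only gives the uniform estimate $((I-A)^{-1}\vec r)_k=O(\vep^{2(\beta-K)}\mathrm{E}[X_\vep^\beta])$ for every $k$, because $\vec r_K$ is the largest component. Plugging that into the logarithm expansion at $k=1$ gives a contribution $O(\vep^{2(1+\beta-K)}\mathrm{E}[X_\vep^\beta])$, which for $K\geqslant 2$ is strictly larger than the target $O(\vep^{2\beta}\mathrm{E}[X_\vep^\beta])$. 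To rescue the argument you would need the structural observation that the $(k,m)$ entry of $A(\vep^2)$, and hence of $(I-A(\vep^2))^{-1}$, has minimal $\vep^2$-degree $\max(0,m-k)$ (equivalently, pass to the rescaled unknowns $w_k=\vep^{2k}\mathrm{E}[X_\vep^k]$, for which the source errors become uniformly $O(\vep^{2\beta}\mathrm{E}[X_\vep^\beta])$). This band/degree bookkeeping is precisely what the paper's double induction makes automatic, and it is not visible from the Neumann series as you invoked it. A secondary issue is the parenthetical ``positivity of the $\ell_k$ being then extracted from the explicit recursion'': the Neumann series does not naturally deliver the sign structure, while the paper's induction produces the manifestly positive formula~\eqref{eq:defgls2} after rewriting $\binom{-l}{i}=(-1)^i\binom{l+i-1}{i}$, and this positivity is then essential for Section~\ref{sec:error}.
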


\begin{Rem}
\label{Rem:O(xalpha_xbeta)}
With some extra effort, $\O(\vep^{2 \beta}\mathrm{E}[X_\vep^\beta])$ can be replaced by
$\O(\mathrm{E}[(\vep^2 X_\vep)^{\beta_1} \wedge (\vep^2 X_\vep)^{\beta_2}])$ for any $\beta_1, \beta_1 \in [K,K+1]$. It will only be needed to explain some generalizations discussed in Remark~\ref{Rem:remarks_thmA}.
\end{Rem}

\begin{proof} We use identity $\L(\vep) = \mathrm{E}[\log (1 + \vep^2 X_\vep)]$ (Lemma~\ref{lem:existence_Xvep}) and expand the logarithm. There exists $C>0$ such that for all $x \geqslant 0$,
\begin{equation}
\left| \log(1+x) - \sum_{j =1}^K \frac{(-1)^{j+1}}{j} x^j \right| \leqslant C x^{\beta}.
\end{equation}
Consequently
\begin{equation}
\label{eq:proof_of_prop_regular_expansion_logarithm}
\L(\vep) = \sum_{j =1}^K \frac{(-1)^{j+1}}{j} \vep^{2 j} \mathrm{E}[X_\vep^j] + \O \left( \vep^{2 \beta} \mathrm{E}[X_\vep^{\beta}] \right).
\end{equation}

\begin{lem}
\label{lem:dvptpuissance}
For all $l \leqslant K$, the following expansion holds,
\begin{equation}
\label{eq:dvptpuissance}
\mathrm{E}[ X_\vep^l] = \sum_{k=0}^{K-l} (-1)^k g_{l, k}  \vep^{2 k} + \O(\vep^{2( \beta -l)} \mathrm{E}[X_\vep^\beta]),
\end{equation}
where, for all $ l \geqslant 1$ and $k \geqslant 0$, the coefficient $ g_{l,k}$ is a positive rational function of $\mathrm{E}[Z], \ldots, \mathrm{E}[Z^{l+k}]$.
\end{lem}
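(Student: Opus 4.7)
The plan is to extract the expansion directly from the fixed-point identity \eqref{eq:invariance_Xvep}. Raising it to the $l$-th power and taking expectation, the independence of $Z$ and $X_\vep$ gives
\begin{equation*}
\mathrm{E}[X_\vep^l] = \mathrm{E}[Z^l]\cdot\mathrm{E}\!\left[\frac{(1+X_\vep)^l}{(1+\vep^2 X_\vep)^l}\right],
\end{equation*}
so the entire problem reduces to expanding $(1+\vep^2 X_\vep)^{-l}$. The key elementary bound I would use is
\begin{equation*}
\Bigl|\frac{1}{(1+y)^l} - \sum_{k=0}^{K-l}(-1)^k \binom{l+k-1}{k} y^k\Bigr| \leq C_\beta\, y^{\beta-l} \qquad (y \geq 0),
\end{equation*}
valid precisely because $\beta-l \in [K-l,K-l+1]$: the small-$y$ Taylor remainder forces $\beta-l \leq K-l+1$, and the large-$y$ regime (where $1/(1+y)^l$ is bounded but the truncated polynomial is of order $y^{K-l}$) forces $\beta-l \geq K-l$. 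Substituting $y=\vep^2 X_\vep$, expanding $(1+X_\vep)^l$ by the binomial formula, and taking expectation (with an elementary bound such as $(1+x)^l x^{\beta-l} \leq 2^l(x^{\beta-l}+x^\beta)$ and the stochastic bound $Z\preccurlyeq X_\vep$ of Lemma~\ref{lem:existence_Xvep} to control the lower-order moment by $\mathrm{E}[X_\vep^\beta]$ up to a multiplicative constant), one arrives at
\begin{equation*}
\mathrm{E}[X_\vep^l] = \mathrm{E}[Z^l]\sum_{k=0}^{K-l}\sum_{j=0}^{l}(-1)^k\binom{l+k-1}{k}\binom{l}{j}\vep^{2k}\mathrm{E}[X_\vep^{j+k}] + O\!\left(\vep^{2(\beta-l)}\mathrm{E}[X_\vep^\beta]\right).
\end{equation*}

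The essential algebraic point is that the only term on the right-hand side proportional to $\mathrm{E}[X_\vep^l]$ comes from the $(k,j)=(0,l)$ contribution, which carries the coefficient $\mathrm{E}[Z^l]<1$ (since $l\leq K\in\A$). Moving it to the left and dividing by $1-\mathrm{E}[Z^l]$ converts the identity into an explicit formula for $\mathrm{E}[X_\vep^l]$ in terms of $\mathrm{E}[X_\vep^{j+k}]$ with $(k,j)\neq (0,l)$, all of integer order at most $K$, and with moments of order strictly greater than $l$ prefactored by at least one power of $\vep^2$. Treating the whole collection of these identities for $l=0,\dots,K$ as a linear system, it is lower-triangular at $\vep=0$ with strictly positive diagonal entries $1-\mathrm{E}[Z^l]$, and using $\mathrm{E}[X_\vep^0]=1$ the leading coefficients $g_{l,0}$ are determined recursively as rational functions of $\mathrm{E}[Z],\dots,\mathrm{E}[Z^l]$. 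Feeding these back into the right-hand side and gathering successive powers of $\vep^2$ produces the $g_{l,k}$ with $k\geq 1$; the rationality and moment-dependence assertion follow by tracking which $\mathrm{E}[Z^j]$ are brought in at each step.

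The only non-bookkeeping step I anticipate is the positivity of the $g_{l,k}$. I would establish it by induction on $l+k$: the two sources of signs, namely the $(-1)^k$ from the expansion of $(1+y)^{-l}$ and the $(-1)^{k'}$ from the already-established expansions of the $\mathrm{E}[X_\vep^{j+k}]$, combine so that the $\vep^{2m}$ coefficient of the right-hand side carries exactly the overall sign $(-1)^m$. The resulting $g_{l,m}$ is then a finite sum of terms, each of which is a product of the positive factor $\mathrm{E}[Z^l]/(1-\mathrm{E}[Z^l])$, positive binomials, and (by the induction hypothesis) positive $g_{l',k'}$ with $l'+k'<l+m$, which yields positivity.
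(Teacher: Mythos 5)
Your overall strategy matches the paper's proof: the fixed-point identity $\mathrm{E}[X_\vep^l]=\mathrm{E}[Z^l]\,\mathrm{E}[(1+X_\vep)^l/(1+\vep^2 X_\vep)^l]$, a uniform bound on the Taylor remainder of $(1+y)^{-l}$ with exponent $\beta-l\in[K-l,K-l+1]$, isolating the $(k,j)=(0,l)$ term and dividing by $1-\mathrm{E}[Z^l]$, and then bootstrapping order by order in $\vep^2$. This is exactly the course-of-values double induction the paper runs, with the same coefficient recursion.

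The one step that would fail as written is the positivity induction. You propose induction on $l+k$ and assert that $g_{l,m}$ is built from $g_{l',k'}$ with $l'+k'<l+m$. That inequality is not true in general: the constraint on the sum excludes only $(i,j)=(0,l)$, so the terms with $j=l$, $i\geqslant 1$, namely $\binom{l+i-1}{i}\vep^{2i}\mathrm{E}[X_\vep^{l+i}]$, are present. Extracting the $\vep^{2m}$ coefficient from such a term requires $g_{l+i,\,m-i}$, whose index-sum is $(l+i)+(m-i)=l+m$, not strictly smaller. So induction on $l+k$ does not close. What does strictly decrease for these terms is the expansion order, since $m-i<m$ (because $i\geqslant 1$); for the $i=0$, $j<l$ terms it is the moment order that decreases. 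Hence the correct induction order is lexicographic on (expansion order, moment order) — precisely the order implicit in your ``gathering successive powers of $\vep^2$'' and the one the paper uses explicitly — and with that replacement your positivity argument goes through.
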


We first admit Lemma~\ref{lem:dvptpuissance} and conclude the proof of Proposition~\ref{prop:dvpt_lyapunov}. The substitution of~\eqref{eq:dvptpuissance} into~\eqref{eq:proof_of_prop_regular_expansion_logarithm} yields
\begin{equation}
\L(\vep) = \sum_{j=1}^K \sum_{k=0}^{K-j} \frac{(-1)^{j+k+1}}{j} \vep^{2 (j+k)} g_{j, k} + \O \left( \vep^{2 \beta} \mathrm{E}[X_\vep^{\beta}] \right).
\end{equation}
It can be rewritten
\begin{equation}
\L(\vep) = \sum_{s=1}^K (-1)^{s+1} \ell_s \vep^{2 s}  + \O \left( \vep^{2 \beta} \mathrm{E}[X_\vep^{\beta}] \right), \qquad \text{with} \qquad
\ell_s = \sum_{j=1}^K \sum_{k=0}^{K-j} \frac{  g_{j, k}}{j} \mathbf{1}_{j+k=s},
\end{equation}
and $\ell_s$ is a positive rational function of $\mathrm{E}[Z], \ldots, \mathrm{E}[Z^s]$ by inspection.
\end{proof}

We are left with the proof of Lemma~\ref{lem:dvptpuissance}, for which we briefly explain the strategy. Write the identity
\begin{equation}
\mathrm{E}[X_\vep^k] = \mathrm{E}[Z^k] \mathrm{E} \left[ \left( \frac{1+X_\vep}{1+ \vep^2 X_\vep}\right)^k \right].
\end{equation}
Then by expanding the denominator one gets
\begin{equation}
\label{eq:heuristique_dvpt_Xvep^k}
\mathrm{E}[X_\vep^k] = \mathrm{E}[Z^k] \sum_{j=0}^{n}  { -k \choose j} \vep^{2 j} \mathrm{E} \left[ (1+X_\vep)^k    X_\vep^j \right]+ \text{Remainder}.
\end{equation} 
It gives a relation between the moments of $X_\vep$ which will be used via a bootstrap procedure: the substitution of a regular expansion for $X_\vep$'s first moments into~\eqref{eq:heuristique_dvpt_Xvep^k} will provide a more precise expansion of $\mathrm{E}[X_\vep^k]$. That new expansion will in turn be injected into~\eqref{eq:heuristique_dvpt_Xvep^k} (for another $k$), to obtain a more precise regular expansion for that other moment, et cætera. 
Of course that procedure should be done in a specific order. Doing it rigorously will require a double induction, on $k$ and the length of the expansions. 
Let's now proceed to the detailed proof.

\begin{proof}[Proof of Lemma~\ref{lem:dvptpuissance}] 
Set $\delta = \beta -K$. We prove, using a course-of-values double induction with the lexicographic order on $(m,j)$, that if $j+m \leqslant K$, then $\mathrm{E}[X_\vep^j]$ has an expansion up to the order $\vep^{2 m}$: 
\begin{equation}
\label{eq:dvptpuissance_HR}
\mathrm{E}[ X_\vep^j] = \sum_{k=0}^{m} (-1)^k g_{j, k}  \vep^{2 k} + \O \left(\vep^{2( m+ \delta)} \mathrm{E}[X_\vep^\beta]\right),
\end{equation}
where for every $ j \geqslant 1$ and $k \geqslant 0$, the coefficient $ g_{j,k} $ is a positive rational function of $\mathrm{E}[Z], \ldots, \mathrm{E}[Z^{j+k}]$.
Of course $\mathrm{E}[X_\vep^0]$ admits such an expansion, up to any order. All that remains is the inductive step. Fix $l \geqslant 1$ and $n \geqslant 0$ such that $l+ n \leqslant K$ and suppose that~\eqref{eq:dvptpuissance_HR} holds
\begin{enumerate}[label=(\Alph*)]
\item  \label{item:proofdvptpuissance1} for all $j \leqslant K$ and $m \leqslant (n-1)\wedge (K -j)$;
\item \label{item:proofdvptpuissance2} for all $j \leqslant l-1$, and $m \leqslant n$.
\end{enumerate}
We want to show that it also holds for $(j,m)=(l,n)$. To this end, write
\begin{equation}
\label{eq:dvptXvep^l}
\mathrm{E}[X_\vep^l] = \mathrm{E} \left[ \left( Z \frac {1 + X_\vep}{1+\vep^2 X_\vep} \right)^l \right] = \mathrm{E}[Z^l] \sum_{r=0}^l {l \choose r} \mathrm{E} \left[ \frac{X_\vep^r}{(1+\vep^2 X_\vep)^l} \right].
\end{equation}
We want to expand the denominator with respect to $\vep$. Let $C>0$ be such that for any $x\geqslant 0$ and $l,m \leqslant K$,
\begin{equation}
\label{eq:dvpt_(1+x)^-l}
\left| \frac{1}{(1+ x)^l} - \sum_{i=0}^m {-l \choose i} x^i \right| \leqslant C x^{m+ \delta}.
\end{equation}
Thus, for every $r \leqslant l$,
\begin{equation}
\label{eq:dvptquotient}
\left| \mathrm{E}\left[\frac{X_\vep^r}{(1+ \vep^2 X_\vep)^l}\right] - \sum_{i=0}^n {-l \choose i} \vep^{ 2 i} \mathrm{E}[X_\vep^{i+r}]\right| \leqslant C \vep^{2(n+ \delta)}\mathrm{E}[X_\vep^{r+n+\delta}] 
\leqslant C \vep^{2(n+ \delta)} \max_{0 \leqslant k \leqslant K}\mathrm{E}[X_\vep^{k+\delta}].
\end{equation}
Actually
\begin{equation}
\label{eq:X_vep^beta}
\max_{k \leqslant K}\mathrm{E}[X_\vep^{k+\delta}] = \O(\mathrm{E}[X_\vep^\beta]).
\end{equation}
Indeed, if $1 \leqslant k \leqslant K-1$, then $\mathrm{E}[Z^{k+\delta}] <1$, so $\mathrm{E}[X_\vep^{k+\delta}]\leqslant \mathrm{E}[X_0^{k+\delta}] <+\infty$ (Lemmas~\ref{lem:X0} and~\ref{lem:existence_Xvep}). On the other hand $\mathrm{E}[X_\vep^{K+\delta}] =\mathrm{E}[X_\vep^\beta] \geqslant \mathrm{E}[Z^{\beta}] >0$ (Lemma~\ref{lem:existence_Xvep}).
Thus, with~\eqref{eq:dvptquotient} and~\eqref{eq:X_vep^beta}, we can write, for every $r \leqslant l$,
\begin{equation}
\mathrm{E}\left[\frac{X_\vep^r}{(1+ \vep^2 X_\vep)^l}\right] = \sum_{i=0}^n {-l \choose i} \vep^{ 2 i} \mathrm{E}[X_\vep^{i+r}] + \O(\vep^{2(n+ \delta)}\mathrm{E}[X_\vep^\beta]).
\end{equation}
And then, injecting it into~\eqref{eq:dvptXvep^l}, we get
\begin{equation}
\mathrm{E}[X_\vep^l]  = \mathrm{E}[Z^l] \sum_{r=0}^l {l \choose r} \sum_{i =0}^n {-l \choose i} \vep^{ 2 i} \mathrm{E} \left[ X_\vep^{i+r}\right] + \O(\vep^{2(n+\delta)}\mathrm{E}[X_\vep^\beta] ).
\end{equation}
We then isolate the term ``$(i,r)=(0,l)$'' --- that is $\mathrm{E}[Z^l] \mathrm{E}[X_\vep^l]$ --- on the left-hand side and divide by $1-\mathrm{E}[Z^l]$, to get
\begin{equation}
\label{eq:isolating_Xvep^l}
\mathrm{E}[X_\vep^l]= \frac{\mathrm{E}[Z^l]}{1-\mathrm{E}[Z^l]} \sum_{\substack{0 \leqslant r \leqslant l, \, 0 \leqslant i \leqslant n \\ (i,r) \neq (0,l)}} {l \choose r}  {-l \choose i} \vep^{ 2 i} \mathrm{E} \left[ X_\vep^{i+r}\right]+ \O(\vep^{2(n+\delta)} \mathrm{E}[X_\vep^\beta]).
\end{equation}
We claim that the induction hypothesis provides expansions for all these terms, up to the required order. The induction hypothesis~\eqref{eq:dvptpuissance_HR} on $\mathrm{E}[X_\vep^{i+r}]$ (induction hypothesis with $j=i+r$ and $m=n-i$, which is contained in the item~\ref{item:proofdvptpuissance2} if $i=0$ and in the item~\ref{item:proofdvptpuissance1} if $i \geqslant 1$), states that
\begin{equation}
\mathrm{E}[ X_\vep^{i+r}  ] =\sum_{k=0}^{n-i} \vep^{2 k} (-1)^k g_{i+r, k} + \O \left(\vep^{2( n-i + \delta)} \mathrm{E}[X_\vep^\beta]\right).
\end{equation} 
We then inject it into~\eqref{eq:isolating_Xvep^l}. It yields
\begin{equation}
\label{eq:demo_lem_dvptpuissance_étape_n-1}
 \mathrm{E}[X_\vep^l] = \frac{\mathrm{E}[Z^l]}{1-\mathrm{E}[Z^l]} \sum_{\substack{0 \leqslant r \leqslant l, \, 0 \leqslant i \leqslant n \\ (i,r) \neq (0,l)}}   {l \choose r}  {-l \choose i} \left(   \vep^{2 i} \sum_{k=0}^{n-i} \vep^{2 k} (-1)^k g_{i+r,k} +  \O \left(\vep^{2( n + \delta)} \mathrm{E}[X_\vep^\beta]\right) \right).
\end{equation}
One can already observe that it is a regular expansion of $\mathrm{E}[X_\vep^l]$ up to the order $n$, as expected. The following lines intend to derive a recursive formula for $g_{l,k}$ so as to check its sign. First note that
\begin{equation}
\label{eq:binomial_negatif}
{-l \choose i}=(-1)^i {l+i-1 \choose i}.
\end{equation}
Thus~\eqref{eq:demo_lem_dvptpuissance_étape_n-1} becomes
\begin{equation}
 \mathrm{E}[X_\vep^l] = \frac{\mathrm{E}[Z^l]}{1-\mathrm{E}[Z^l]} \sum_{\substack{0 \leqslant r \leqslant l, \, 0 \leqslant i \leqslant n \\ (i,r) \neq (0,l)\\ 0 \leqslant k \leqslant n-i}}   {l \choose r}  {l+i-1 \choose i}  \vep^{2 (k+i)} (-1)^{k+i} g_{i+r,k} +  \O \left(\vep^{2( n + \delta)} \mathrm{E}[X_\vep^\beta] \right).
\end{equation}
Eventually, it can be written as
\begin{equation}
\label{eq:defgls1}
 \mathrm{E}[X_\vep^l]=\sum_{s=0}^n  (-1)^s g_{l,s} \vep^{2  s} +  \O \left(\vep^{2( n + \delta)} \mathrm{E}[X_\vep^\beta] \right),
\end{equation}
with, for every $s \leqslant n$,
\begin{equation}
\label{eq:defgls2}
g_{l,s} = \frac{\mathrm{E}[Z^l]}{1-\mathrm{E}[Z^l]} \sum_{\substack{0 \leqslant r \leqslant l, \, 0 \leqslant i \leqslant n \\ (i,r) \neq (0,l)\\ 0 \leqslant k \leqslant n-i}}   {l \choose r}  {l+i-1 \choose i}  g_{i+r,k} \mathbf{1}_{i+k=s}.
\end{equation}
Thanks to the induction hypothesis, it is a positive rational function of $\mathrm{E}[Z], \ldots, \mathrm{E}[Z^{l+n}]$.
The inductive step is proved, and the lemma follows.
\end{proof}

\section{Theorem~\ref{The:dvpt_avec_erreur_general}: lower bound on the error}
\label{sec:error}

We prove here the lower bound on the error given in Theorem~\ref{The:dvpt_avec_erreur_general}, formula~\eqref{eq:The_encadrement_Rvep}. 
We already saw in Proposition~\ref{prop:dvpt_lyapunov}'s proof, when we studied the signs before the coefficients $\ell_k$ or $g_{l,k}$, that when expanding the algebraic fractions $(1+ \vep^2 X_\vep)^{-r}$, the term $\vep^{2 n}$ always comes with the sign $(-1)^n$. The same occurs for the error, at each step, at the order $\vep^{K+1}$: it comes with the sign $(-1)^{K+1}$. As a result, the error terms, which invariably accumulate with the same sign, effectively add up and cannot offset one another. In practice, these error terms can also be bounded from below. It yields the next result.

\begin{prop}
\label{prop:error_estimation}
Fix an integer $K \in \A \cup\{0\}$ and $B>0$. There exists $c>0$ such that, for all $\vep >0$,
\begin{equation}
(-1)^{K+2} \left[ \L(\vep) - \sum_{k=1}^K (-1)^{k+1} \ell_k \vep^{2 k} \right]  \geqslant c  \vep^{2(K+1)}\mathrm{E}[X_\vep^{K+1}\mathbf{1}_{\vep^2 X_\vep \leqslant B}],
\end{equation}
where the coefficients $(\ell_k)$ are the same as in Proposition~\ref{prop:dvpt_lyapunov}.
\end{prop}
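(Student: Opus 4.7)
The plan is to revisit the bootstrap proof of Proposition~\ref{prop:dvpt_lyapunov} (via Lemma~\ref{lem:dvptpuissance}), replacing every two-sided Taylor estimate by its one-sided integral-form remainder. The two Taylor series involved, $\log(1+x) = \sum (-1)^{j+1} x^j / j$ and $(1+x)^{-l} = \sum (-1)^i \binom{l+i-1}{i} x^i$, are both alternating, so each integral remainder has a definite sign; the bootstrap preserves this sign information, and the lower bound drops out from the logarithm step alone, restricted to $\{\vep^2 X_\vep \leqslant B\}$.

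The two one-sided ingredients are exact identities. First, Taylor's integral-form remainder gives, for every $x \geqslant 0$,
\begin{equation*}
\log(1+x) - \sum_{j=1}^K \frac{(-1)^{j+1}}{j} x^j = (-1)^K \int_0^x \frac{(x-t)^K}{(1+t)^{K+1}}\,dt,
\end{equation*}
which has sign $(-1)^K = (-1)^{K+2}$ and, for $x \in [0,B]$, modulus at least $\frac{x^{K+1}}{(K+1)(1+B)^{K+1}}$. Second, an analogous identity shows that the remainder $R_{n,l}(x)$ in $(1+x)^{-l} = \sum_{i=0}^n (-1)^i \binom{l+i-1}{i} x^i + R_{n,l}(x)$ has sign $(-1)^{n+1}$ for all $x \geqslant 0$. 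Both identities hold for every $\vep >0$, not just in the $\vep \to 0$ regime.

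Next, I would rerun the double induction of Lemma~\ref{lem:dvptpuissance} with these exact identities plugged in. The key point is that the factors entering the recursion --- namely $\mathrm{E}[Z^l]/(1-\mathrm{E}[Z^l])$ (positive since $l \in \A$), $\binom{l}{r}$ and $\binom{l+i-1}{i}$ --- are all positive. A course-of-values induction on $(l,n)$ with $l+n\leqslant K$ then upgrades the $O(\cdot)$ of~\eqref{eq:defgls1} to the signed estimate
\begin{equation*}
(-1)^{n+1}\left(\mathrm{E}[X_\vep^l] - \sum_{k=0}^{n} (-1)^k g_{l,k} \vep^{2k}\right) \geqslant 0 \qquad \text{for every } \vep >0.
\end{equation*}
The base case $n=0$ is the stochastic dominance $X_\vep \preccurlyeq X_0$ of Lemma~\ref{lem:existence_Xvep}, which gives $\mathrm{E}[X_\vep^l] \leqslant \mathrm{E}[X_0^l] = g_{l,0}$.

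Finally, I would plug these signed expansions into $\L(\vep) = \mathrm{E}[\log(1+\vep^2 X_\vep)]$ together with the signed log identity. Because the coefficients $\ell_k$ are uniquely determined as the $\vep^{2k}$ coefficients of the $\vep \to 0$ expansion, they coincide with those of Proposition~\ref{prop:dvpt_lyapunov}, and every contribution to the resulting remainder shares the common sign $(-1)^{K+2}$. Discarding every contribution other than the log remainder and restricting to $\{\vep^2 X_\vep \leqslant B\}$ already gives
\begin{equation*}
(-1)^{K+2}\left[\L(\vep) - \sum_{k=1}^K (-1)^{k+1} \ell_k \vep^{2k}\right] \geqslant \frac{1}{(K+1)(1+B)^{K+1}}\,\vep^{2(K+1)}\, \mathrm{E}\!\left[X_\vep^{K+1} \mathbf{1}_{\vep^2 X_\vep \leqslant B}\right].
\end{equation*}
The main obstacle is the sign bookkeeping: one must check at every step of the induction that the remainder genuinely retains sign $(-1)^{n+1}$ after the multiplications by the various positive combinatorial factors and after reassembly in $\L(\vep)$, so that nothing cancels against the log remainder.
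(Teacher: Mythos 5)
Your proof is correct and rests on the same key idea as the paper's --- a sign-preserving bootstrap through the recursion of Lemma~\ref{lem:dvptpuissance} --- but it streamlines it in a way worth recording. The paper's Lemma~\ref{lem:dvptpuissance_erreur} carries a quantitative lower bound $c\,\vep^{2(n+1)}\mathrm{E}[X_\vep^{\,\cdot}\mathbf{1}_{\vep^2 X_\vep\leqslant B}]$ through every step of the induction, so the one-sided estimate for $(1+x)^{-l}$ must already include the indicator $\mathbf{1}_{x\leqslant B}$ at each stage, and the final constant in Proposition~\ref{prop:error_estimation} is assembled from contributions of every moment $\mathrm{E}[X_\vep^l]$, $1\leqslant l\leqslant K$, together with the log term. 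You observe that this quantitative bookkeeping is superfluous: because the exact integral remainder of each alternating Taylor expansion has a definite sign on $[0,\infty)$, it suffices to propagate only the sign $(-1)^{n+1}$ through the bootstrap, with no indicator, no constant $\tilde C$, and a cleaner base case (literally $X_\vep\preccurlyeq X_0$ together with $g_{l,0}=\mathrm{E}[X_0^l]$, which holds because $g_{l,0}$ and $\mathrm{E}[X_0^l]$ satisfy the same recursion). The quantitative lower bound then enters only at the logarithm step, via the integral-remainder bound on $\{\vep^2 X_\vep\leqslant B\}$, and all other remainder contributions may simply be discarded because they share the common sign $(-1)^{K+2}$. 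This yields the explicit constant $c=\bigl((K+1)(1+B)^{K+1}\bigr)^{-1}$. Two small remarks: the lexicographic order in the course-of-values induction should be on $(n,l)$ (expansion length first, then the power), as in Lemma~\ref{lem:dvptpuissance}, not on $(l,n)$, since for $i\geqslant 1$ the induction hypothesis is invoked at a power $i+r$ that may exceed $l$; and the identification of the coefficients $\ell_k$ with those of Proposition~\ref{prop:dvpt_lyapunov} is most cleanly seen from the fact that the recursion producing them is identical, rather than by appealing to uniqueness of asymptotic coefficients.
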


Unsurprisingly, a similar scheme as in Proposition~\ref{prop:dvpt_lyapunov}'s proof will be used. We will proceed to a double induction, corresponding to an underlying bootstrap procedure. The only actual difference compared to Section~\ref{sec:regular} is that the estimate~\eqref{eq:dvpt_(1+x)^-l} is replaced by the lower bound
\begin{equation}
\frac{1}{(1+ x)^m} - \sum_{i=0}^{r}  {-m \choose i} x^i  \geqslant  C (- x)^{r+1} \mathbf{1}_{x \leqslant B}.
\end{equation}
We begin with the equivalent of Lemma~\ref{lem:dvptpuissance} in this new perspective.

\begin{lem} 
\label{lem:dvptpuissance_erreur}
Fix an integer $K \in \A \cup\{0\}$ and $B>0$. There exists $c>0$ such that for all $1 \leqslant l \leqslant K$, and $0 \leqslant n \leqslant K-l$, the following holds, for the same real coefficients~$(g_{l,k})$ as in Lemma~\ref{lem:dvptpuissance}
\begin{equation}
\mathrm{E}[ X_\vep^l] -\sum_{k=0}^{n} (-1)^k g_{l, k} \vep^{2 k}  \; \,
\begin{cases}
\geqslant  c (-1)^{n+1} \vep^{2(n+1)}\mathrm{E}[X_\vep^{l+n}\mathbf{1}_{\vep^2 X_\vep \leqslant B}] \qquad \text{if $n+1$ is even,}\\
\leqslant  c (-1)^{n+1} \vep^{2(n+1)}\mathrm{E}[X_\vep^{l+n}\mathbf{1}_{\vep^2 X_\vep \leqslant B}]
\qquad \text{if $n+1$ is odd.}
\end{cases}
\end{equation}
\end{lem}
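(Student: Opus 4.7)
The plan is to mimic the bootstrap double induction of Lemma~\ref{lem:dvptpuissance}, replacing the two-sided estimate~\eqref{eq:dvpt_(1+x)^-l} by a signed version obtained from the Lagrange form of the remainder. Namely, for every positive integer $m$ and every $n \geqslant 0$, the function
$$S_m^{(n)}(x) \,:=\, \frac{1}{(1+x)^m} - \sum_{i=0}^n \binom{-m}{i} x^i$$
has constant sign $(-1)^{n+1}$ on $[0, +\infty)$ --- indeed $\binom{-m}{n+1} = (-1)^{n+1}\binom{m+n}{n+1}$ and $(1+\xi)^{-m-n-1}>0$ --- and satisfies the pointwise lower bound $(-1)^{n+1}\, S_m^{(n)}(x) \geqslant c\, x^{n+1}$ for every $x \in [0, B]$, with $c=c(m,B)>0$. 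This is exactly the one-sided estimate announced in the text preceding the lemma.

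I then run the same lex-ordered double induction on $(l, n)$ (with $l+n\leqslant K$) as in Lemma~\ref{lem:dvptpuissance}. At step $(l, n)$, I start from
$$\mathrm{E}[X_\vep^l] \,=\, \mathrm{E}[Z^l]\sum_{r=0}^l \binom{l}{r}\, \mathrm{E}\!\left[\frac{X_\vep^r}{(1+\vep^2 X_\vep)^l}\right],$$
expand $(1+\vep^2 X_\vep)^{-l}$ up to order $i=n$ using the signed remainder above, isolate the term $(i,r)=(0,l)$ and divide by $1-\mathrm{E}[Z^l]>0$ (legitimate since $l \leqslant K \in \A$) to obtain
$$\mathrm{E}[X_\vep^l] \,=\, \frac{\mathrm{E}[Z^l]}{1-\mathrm{E}[Z^l]} \sum_{(i,r)\neq(0,l)} \binom{l}{r}\binom{-l}{i}\vep^{2i}\mathrm{E}[X_\vep^{i+r}] + \frac{\mathrm{E}[Z^l]}{1-\mathrm{E}[Z^l]}\sum_{r=0}^l \binom{l}{r}\,\mathrm{E}[X_\vep^r\, S_l^{(n)}(\vep^2 X_\vep)].$$
Every term in the last (Taylor-remainder) sum has sign $(-1)^{n+1}$, and already the $r = l-1$ contribution gives it a modulus of at least $c\,\vep^{2(n+1)}\mathrm{E}[X_\vep^{l+n}\mathbf{1}_{\vep^2 X_\vep \leqslant B}]$.

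For each $(i,r)\neq(0,l)$ in the first sum, the inductive hypothesis at $(j,m)=(i+r,n-i)$ applies --- case \ref{item:proofdvptpuissance1} when $i\geqslant 1$, case \ref{item:proofdvptpuissance2} when $i=0$, $r\leqslant l-1$ --- and produces the polynomial part $\sum_{k=0}^{n-i}(-1)^k g_{i+r,k}\vep^{2k}$ from Section~\ref{sec:regular} together with an error $e_{i+r,n-i}(\vep)$ of sign $(-1)^{n-i+1}$ whose modulus is at least $c\,\vep^{2(n-i+1)}\mathrm{E}[X_\vep^{r+n}\mathbf{1}_{\vep^2 X_\vep \leqslant B}]$. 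Upon multiplication by $\binom{-l}{i}\vep^{2i}=(-1)^i \binom{l+i-1}{i}\vep^{2i}$, its sign becomes $(-1)^i\cdot(-1)^{n-i+1}=(-1)^{n+1}$, which is exactly the sign of the Taylor-remainder contribution. This alignment is the heart of the proof: all errors generated along the bootstrap share the common sign $(-1)^{n+1}$ and therefore accumulate rather than cancel.

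Collecting the polynomial parts reproduces $\sum_{k=0}^n (-1)^k g_{l,k}\vep^{2k}$ with the same coefficients as in Lemma~\ref{lem:dvptpuissance} (the recursion~\eqref{eq:defgls2} depends only on the shared Taylor polynomial, not on the shape of the remainder), and since dividing by the positive $1-\mathrm{E}[Z^l]$ preserves signs, the accumulated error satisfies
$$(-1)^{n+1}\Bigl(\mathrm{E}[X_\vep^l] - \sum_{k=0}^n (-1)^k g_{l,k}\vep^{2k}\Bigr) \,\geqslant\, c\,\vep^{2(n+1)}\mathrm{E}[X_\vep^{l+n}\mathbf{1}_{\vep^2 X_\vep \leqslant B}],$$
which is exactly the statement of the lemma. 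The main obstacle is therefore not analytical but combinatorial: one has to verify systematically that parities line up so that every error term produced by the bootstrap carries the common sign $(-1)^{n+1}$; once this is done, the rest of the argument is mechanically identical to Section~\ref{sec:regular}.
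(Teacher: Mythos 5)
Correct, and this is essentially the paper's own proof: the signed Lagrange bound for the remainder of $(1+x)^{-l}$'s Taylor polynomial on $[0,B]$, the same course-of-values double induction (lexicographic on the error order first), and the same sign computation $(-1)^i(-1)^{n-i+1}=(-1)^{n+1}$ showing that the Taylor remainder and every inductively produced error accumulate with a common sign after division by $1-\mathrm{E}[Z^l]>0$. The only detail left implicit --- that a single constant $c$ works for all admissible $(l,n)$ --- is handled in the paper by fixing $\tilde C := C\min_{1\leqslant l\leqslant K}\frac{\mathrm{E}[Z^l]}{1-\mathrm{E}[Z^l]}$ and propagating it through the induction; in your version, where you discard the (same-signed) inductive contributions and retain only one Taylor-remainder term, this is just a minimization over the finitely many admissible pairs $(l,n)$ and is not a real gap.
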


\begin{proof}If $K=0$ the statement is empty, so suppose $K \geqslant 1$. 
It will be useful to recall formula~\eqref{eq:binomial_negatif}.
Fix $B >0$. There exists $C >0$ such that for all $1 \leqslant l \leqslant K+1$ and $n\leqslant K+1$, and for all $x \geqslant 0$,
\begin{equation}
\label{eq:encadrement_(1+x)^-l1}
\frac{1}{(1+ x)^l} - \sum_{i=0}^{n-1}  {l+i-1\choose i} (-x)^i  
\begin{cases}
\geqslant  C (- x)^{n} \mathbf{1}_{x \leqslant B} 
\qquad \text{if $n$ is even,}
\\
\leqslant  C (- x)^{n} \mathbf{1}_{x \leqslant B}
\qquad \text{if $n$ is odd.}
\end{cases}
\end{equation}

As in Lemma~\ref{lem:dvptpuissance}, we carry out a proof by course-of-values double induction. More precisely, set 
\begin{equation}
\tilde C := C \min_{ 1 \leqslant l \leqslant K} \frac{\mathrm{E}[Z^l]}{1-\mathrm{E}[Z^l]}.
\end{equation}
We prove that if $j \geqslant 1$, $m \geqslant 0$ and $j+m \leqslant K+1$ then
\begin{equation}
\label{eq:sing_HR}
   \mathrm{E}[ X_\vep^j  ] -\sum_{k=0}^{m-1} \vep^{2 k} (-1)^k g_{j, k}  \geqslant  \tilde C (-1)^m \vep^{2 m} \mathrm{E}[X_\vep^{j+m}\mathbf{1}_{\vep^2 X_\vep \leqslant B}]
\end{equation}
if $m$ is even; and the same with an inequality in the opposite direction if $m$ is odd. The base case $m=0$ is immediate. For the inductive step, we fix $l \geqslant 1$, $n \geqslant 1$ such that $l+n \leqslant K+1$ and we suppose that~\eqref{eq:sing_HR} holds for all $(j,m)$ with $m \leqslant n-1$ and $1 \leqslant j \leqslant K+1-m$, and for all $(j,n)$ with $1 \leqslant j \leqslant l-1$. We want to prove~\eqref{eq:sing_HR} for $(j,m)=(l,n)$. For the sake of simplicity, the proof will only be written for $n$ even (inequalities would be in the opposite direction if $n$ is odd). First write the identity
\begin{equation}
\label{eq:dvpt_sing_Xvep^l}
\mathrm{E}[X_\vep^l] = \mathrm{E} \left[ \left( Z \frac {1 + X_\vep}{1+\vep^2 X_\vep} \right)^l \right] = \mathrm{E}[Z^l] \sum_{r=0}^l {l \choose r} \mathrm{E} \left[ \frac{X_\vep^r}{(1+\vep^2 X_\vep)^l} \right].
\end{equation}
Using~\eqref{eq:encadrement_(1+x)^-l1} we get,
\begin{equation}
\mathrm{E}[X_\vep^l] \geqslant \mathrm{E}[Z^l] \sum_{r=0}^l  \left \{ {l \choose r} \sum_{i=0}^{n-1}  {l+i-1\choose i} (-1)^i \vep^{2 i}  \mathrm{E}[X_\vep^{i+r}] + C (-1)^n  \vep^{2 n} \mathrm{E}[X_\vep^{n+r}\mathbf{1}_{\vep^2 X_\vep \leqslant B}] \right \}.
\end{equation}
We subtract the term $\mathrm{E}[Z^l] \mathrm{E}[X_\vep^l]$ (term $(i,r)=(0,l)$) and divide by $1-\mathrm{E}[Z^l]$ (which is positive) to obtain
\begin{equation}
\label{eq:encadrement_sing_LB}
\begin{aligned}
 \mathrm{E}[X_\vep^l] \geqslant \frac{\mathrm{E}[Z^l]}{1-\mathrm{E}[Z^l]}  \sum_{\substack{0 \leqslant r \leqslant l, \,  0 \leqslant i \leqslant n-1 \\ (i,r) \neq (0,l)}}   {l \choose r} \Bigg \{ & {l+i-1\choose i} (-1)^i  \vep^{2 i}  \mathrm{E}[X_\vep^{i+r}] \\
 &+ C (-1)^n  \vep^{2 n} \mathrm{E}[X_\vep^{n+r}\mathbf{1}_{\vep^2 X_\vep \leqslant B}] \Bigg \}.
 \end{aligned}
\end{equation}
We use the induction hypothesis on $\mathrm{E}[X_\vep^{i+r}]$ (induction hypothesis~\eqref{eq:sing_HR} with $j=i+r$ and $m=n-i$), that is
\begin{equation}
\mathrm{E}[ X_\vep^{i+r}  ] -\sum_{k=0}^{n-i-1} \vep^{2 k} (-1)^k g_{i+r, k} \geqslant \tilde C (-1)^{n-i} \vep^{2 (n-i)}  \mathrm{E}[X_\vep^{r+n}\mathbf{1}_{\vep^2 X_\vep \leqslant B}] ,
\end{equation} 
if $n-i$ is even, and the opposite if it is odd. In any case, injecting these lower bounds into~\eqref{eq:encadrement_sing_LB} yields
\begin{equation}
\begin{aligned}
 \mathrm{E}[X_\vep^l] \geqslant \frac{\mathrm{E}[Z^l]}{1-\mathrm{E}[Z^l]} &\sum_{\substack{0 \leqslant r \leqslant l, \, 0 \leqslant i \leqslant n-1 \\ (i,r) \neq (0,l)}}   {l \choose r} \Bigg \{ {l+i-1\choose i} \Bigg( (-1)^i \vep^{2 i} \sum_{k=0}^{n-i-1} \vep^{2 k} (-1)^k g_{i+r,k} \\
 &+\tilde  C (-1)^n \vep^{2 n} \mathrm{E}[X_\vep^{n+r}\mathbf{1}_{\vep^2 X_\vep \leqslant B}] \Bigg) + C (-1)^n \vep^{2 n}\mathrm{E}[X_\vep^{n+r}\mathbf{1}_{\vep^2 X_\vep \leqslant B}] \Bigg \} .
\end{aligned}
\end{equation}
The first line corresponds to the regular part already found in Lemma~\ref{lem:dvptpuissance} equations~\eqref{eq:defgls1} and~\eqref{eq:defgls2}; the second line contains the $\vep^{2 n}$-terms which we want to bound from below:
\begin{equation}
 \mathrm{E}[X_\vep^l] \geqslant \sum_{s=0}^{n-1} \vep^{2 s} (-1)^k g_{l, s} +  (-1)^n \vep^{2 n} Q_n,
\end{equation}
with
\begin{equation}
Q_n =\frac{\mathrm{E}[Z^l]}{1-\mathrm{E}[Z^l]}  \sum_{\substack{0 \leqslant r \leqslant l, \, 0 \leqslant i \leqslant n-1 \\ (i,r) \neq (0,l)}}   {l \choose r} \Bigg \{ {l+i-1\choose i} \tilde  C +C \Bigg \} \mathrm{E}[X_\vep^{n+r}\mathbf{1}_{\vep^2 X_\vep \leqslant B}].
\end{equation}
Since all the terms in $Q_n$ are non-negative, it is larger than any of them
\begin{equation}
Q_n \geqslant \frac{\mathrm{E}[Z^l]}{1-\mathrm{E}[Z^l]} C \mathrm{E}\left[X_\vep^{n+l}\mathbf{1}_{\vep^2 X_\vep \leqslant B}\right] \geqslant \tilde C \mathrm{E}\left[X_\vep^{n+l}\mathbf{1}_{\vep^2 X_\vep \leqslant B}\right].
\end{equation}
This concludes the proof of the inductive step and thus the proof of the lemma.
\end{proof}

\begin{proof}[Proof of Proposition~\ref{prop:error_estimation}]
Let $c'=c'(B,K) >0$ be such that for all $x \geqslant 0$,
\begin{equation}
\log(1+x) \geqslant \sum_{l=1}^K \frac{(-x)^{l+1}}{l} + c' (-x)^{K+2}\mathbf{1}_{x\leqslant B}
\end{equation}
if $K$ is even; and the same with an inequality in the opposite direction if $K$ is odd. For the sake of simplicity we suppose that $K$ is even in what follows.
Writing $\L(\vep) = \mathrm{E} \log(1+ \vep^2 X_\vep)$, we get
\begin{equation}
 \L(\vep) \geqslant \sum_{l=1}^K \frac{(-1)^{l+1}}{l} \vep^{2{l}} \mathrm{E}[X_\vep^l] + c' (-1)^{K+2} \vep^{2(K+1)} \mathrm{E}[X_\vep^{K+1}\mathbf{1}_{\vep^2 X_\vep \leqslant B}],
\end{equation}
and Lemma~\ref{lem:dvptpuissance_erreur} provides a lower bound for each term in the sum: for every $1 \leqslant l \leqslant K$,
\begin{equation}
(-1)^{l+1} \vep^{2{l}} \mathrm{E}[X_\vep^l] \geqslant (-1)^{l+1}  \vep^{2{l}} \sum_{k=0}^{K-l} \vep^{2 k} (-1)^k g_{l, k} +  c (-1)^{l+1} (-1)^{K+1-l}  \vep^{2(K+1)} \mathrm{E}[X_\vep^{K+1}\mathbf{1}_{\vep^2 X_\vep \leqslant B}].
\end{equation}
The conclusion results from the latter two inequalities.
\end{proof}

\section{Limiting behaviour of \texorpdfstring{$X_\vep$'s}{} divergent moments}

\label{section:Xvep^K+1}

First note that Theorem~\ref{The:dvpt_avec_erreur_general} is an immediate consequence of Propositions~\ref{prop:dvpt_lyapunov} and~\ref{prop:error_estimation}.
The goal of this section is to obtain estimates of the error $R_K(\vep)$, for which we now have
\begin{equation}
c \vep^{2 (K+1)} \mathrm{E}[X_\vep^{K+1} \mathbf{1}_{ \vep^2 X_\vep \leqslant B}] \leqslant  R_{K}(\vep) \leqslant C_\beta \vep^{2 \beta} \mathrm{E}[X_\vep^{\beta}].
\end{equation}
In order to give explicit estimates of $R_K(\vep)$ in terms of powers of $\vep$, one needs to understand the limiting behaviour of $X_\vep$'s moments (or truncated moments). The issue was partially addressed by Corollary~\ref{corol:moments_Xvep_premier_resultat}, which pinpointed the regimes of convergence or divergence of these moments. Namely $\mathrm{E}[X_\vep^\gamma]$ is bounded as $\vep$ goes to $0$ if $\mathrm{E}[Z^\gamma]<1$ and diverges if $\mathrm{E}[Z^\gamma]\geqslant 1$. In the following section we address the issue of the divergence speed when $\mathrm{E}[Z^{\gamma}]\geqslant 1$.

The first paragraph, based on renewal theory results, describing the heavy tail of $X_0$, will provide upper bounds for $X_\vep$'s divergent moments. The second paragraph will give lower bounds for these moments under the restriction that $Z$ is bounded.

\subsection{Upper bounds}
\label{subsection:moments_upperbound}

We will need the following result, which combine results by H. Kesten and A. K. Grincevi\v{c}ius depending if $\log Z$ has an arithmetic support or not (see \cite[Theorems~1,~3]{Kevei_2017} for a review).

\begin{lem}
\label{lem:tail_X0}
If $\mathrm{E}[Z^\alpha \log_+ Z] <+\infty$, then, as $x$ goes to $+\infty$,
\begin{equation}
\label{eq:tail_X_0}
\mathrm{P}(X_0 \geqslant x) =\O( x^{-\alpha}).
\end{equation}
\end{lem}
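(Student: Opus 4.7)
The plan is to split on whether $\mathrm{E}[Z^\alpha]$ equals $1$ or is strictly less than $1$, these being the only two possibilities for $\alpha = \sup \A$ under Assumptions~\ref{hyp}. In the subcritical case $\mathrm{E}[Z^\alpha] < 1$, the exponent $\alpha$ itself lies in $\A$, so Lemma~\ref{lem:X0} gives $\mathrm{E}[X_0^\alpha] < +\infty$ and Markov's inequality directly yields $\mathrm{P}(X_0 \geqslant x) \leqslant x^{-\alpha} \mathrm{E}[X_0^\alpha] = \O(x^{-\alpha})$, without even needing the extra integrability hypothesis.

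The substantive case is the critical one $\mathrm{E}[Z^\alpha] = 1$, where Lemma~\ref{lem:X0} tells us $\mathrm{E}[X_0^\alpha] = +\infty$, so Markov is useless. Here I would exploit the distributional identity $X_0 \overset{(d)}{=} Z(1 + X_0) = Z + Z X_0$, which identifies $X_0$ as the stationary solution of the random affine recursion $X_{n+1} = A_{n+1} X_n + B_{n+1}$ with $A_{n+1} = B_{n+1} = Z_{n+1}$. The tail behaviour of such a stationary solution is governed by the Kesten--Grincevi\v{c}ius theory, a self-contained version of which, covering both the non-arithmetic case (due to Kesten) and the arithmetic case (due to Grincevi\v{c}ius), is recorded in \cite[Theorems~1 and~3]{Kevei_2017}.

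To apply that result it remains to verify its hypotheses in our setting: $A = Z$ is positive and non-deterministic by Assumptions~\ref{hyp}; the Cram\'er condition $\mathrm{E}[A^\alpha] = 1$ is the case assumption; $\mathrm{E}[|B|^\alpha] = \mathrm{E}[Z^\alpha] = 1 < +\infty$ is immediate; and the logarithmic moment condition $\mathrm{E}[A^\alpha \log_+ A] = \mathrm{E}[Z^\alpha \log_+ Z] < +\infty$ is exactly the hypothesis of the lemma. The cited theorem then delivers $\mathrm{P}(X_0 > x) \sim C_+ x^{-\alpha}$ as $x \to +\infty$ in the non-arithmetic case, and an $\O(x^{-\alpha})$ upper bound, possibly with an oscillating prefactor, in the arithmetic case; either way the claim follows. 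Since the deep work is outsourced to the cited results, the main obstacle is purely bookkeeping: correctly identifying the $(A,B)$ pair and matching the integrability hypotheses line by line so that the same constant $\alpha$ that enters the Cram\'er condition is indeed the quantity defined in the present paper.
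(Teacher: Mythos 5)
Your proposal is correct and takes essentially the same route as the paper, which proves the lemma purely by invoking the Kesten--Grincevi\v{c}ius theorem through Kevei's survey; the paper offers no further argument. The one worthwhile refinement you add is the explicit split on whether $\mathrm{E}[Z^\alpha]=1$ or $\mathrm{E}[Z^\alpha]<1$: the cited Kesten--Grincevi\v{c}ius statements require the Cram\'er condition $\mathrm{E}[A^\alpha]=1$, so they do not literally cover the subcritical case, and your Markov-inequality argument (using Lemma~\ref{lem:X0}, which gives $\mathrm{E}[X_0^\alpha]<+\infty$ when $\mathrm{E}[Z^\alpha]<1$) cleanly disposes of it --- in fact without even needing the hypothesis $\mathrm{E}[Z^\alpha\log_+Z]<+\infty$. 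The paper sidesteps this by only ever applying the lemma under the standing assumption $\mathrm{E}[Z^\alpha]=1$ (see the proof of Theorem~\ref{The:thm_concret}) and relegates the subcritical regime to Remark~\ref{Rem:proof_E[Zalpha]<1_EZlogZ}, where a different result of Kevei yielding $o(x^{-\alpha})$ is invoked; your Markov bound is a simpler way to get the $\O(x^{-\alpha})$ claimed in the lemma as stated. In the critical case your verification of the hypotheses of the cited theorems --- identifying $(A,B)=(Z,Z)$ from the perpetuity $X_0\overset{(d)}{=}Z+ZX_0$, confirming $\mathrm{E}[\log A]<0$, non-degeneracy, $\mathrm{E}[A^\alpha]=1$, $\mathrm{E}[B^\alpha]<+\infty$, and the log-moment condition --- is exactly the bookkeeping one needs, and your remark that the non-arithmetic and arithmetic cases are handled by Theorems~1 and~3 of \cite{Kevei_2017} respectively matches the paper's citation.
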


It readily gives the next two results. They provide explicit upper bounds for the speed of divergence of $X_\vep$'s moments. If you believe Conjecture~\ref{conj}, these upper bounds (except the first one when $\mathrm{E}[Z^\alpha]<1$) are of the good order of $\vep$. The first one will be used for $\alpha \in \{1, 2 , \ldots\}$ whereas the second will be needed when $\alpha$ is not an integer.

\begin{lem}
\label{lem:moments_upperbound_alpha}
If~$\mathrm{E}[Z^\alpha \log_+ Z] <+\infty$, then, as $\vep$ goes to $0$,
\begin{equation}
\mathrm{E}[X_\vep^\alpha] =\O \left( \log (1/\vep) \right).
\end{equation}
\end{lem}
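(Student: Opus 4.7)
The statement is trivial in the subcritical case $\mathrm{E}[Z^\alpha]<1$: Lemma~\ref{lem:X0} then gives $\mathrm{E}[X_0^\alpha]<+\infty$, and the stochastic dominance $X_\vep \preccurlyeq X_0$ of Lemma~\ref{lem:existence_Xvep} yields $\mathrm{E}[X_\vep^\alpha] \leqslant \mathrm{E}[X_0^\alpha]=\O(1)$, which is absorbed into $\O(\log(1/\vep))$. So I focus on the critical case $\mathrm{E}[Z^\alpha]=1$, where $\mathrm{E}[X_\vep^\alpha]$ may genuinely diverge as $\vep \to 0$. The key move is to raise the invariance identity~\eqref{eq:invariance_Xvep} to the power $\alpha$ and use the independence of $Z$ and $X_\vep$ on the right-hand side:
\begin{equation*}
\mathrm{E}[X_\vep^\alpha] \;=\; \mathrm{E}[Z^\alpha]\cdot\mathrm{E}\!\left[\left(\frac{1+X_\vep}{1+\vep^2 X_\vep}\right)^{\!\alpha}\right] \;=\; \mathrm{E}\!\left[\left(\frac{1+X_\vep}{1+\vep^2 X_\vep}\right)^{\!\alpha}\right].
\end{equation*}
I would then split this expectation at the natural threshold $X_\vep = 1/\vep^2$, which is precisely where $\vep^2 X_\vep$ becomes of order one.

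On $\{X_\vep \leqslant 1/\vep^2\}$, the denominator is bounded between $1$ and $2^\alpha$, so the integrand is dominated by $(1+X_\vep)^\alpha \leqslant C_\alpha(1+X_\vep^\alpha)$. Using tail integration, $\mathrm{E}[X_\vep^\alpha\mathbf{1}_{X_\vep \leqslant 1/\vep^2}] \leqslant \alpha\int_0^{1/\vep^2}x^{\alpha-1}\mathrm{P}(X_\vep\geqslant x)\,\mathrm{d}x$; combining the dominance $X_\vep \preccurlyeq X_0$ with the tail estimate $\mathrm{P}(X_0\geqslant x)=\O(x^{-\alpha})$ provided by Lemma~\ref{lem:tail_X0}, the integrand is $\O(1/x)$ past a fixed constant, so the integral is $\O(\log(1/\vep))$. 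On $\{X_\vep > 1/\vep^2\}$, I exploit the saturation of the Möbius-like map: for $X_\vep \geqslant 1/\vep^2 \geqslant 1$ one has $(1+X_\vep)/(1+\vep^2 X_\vep) \leqslant 2X_\vep/(\vep^2 X_\vep) = 2/\vep^2$. Paired with $\mathrm{P}(X_\vep > 1/\vep^2) \leqslant \mathrm{P}(X_0 > 1/\vep^2) \leqslant C\vep^{2\alpha}$ (again from Lemma~\ref{lem:tail_X0}), this contribution is at most $(2/\vep^2)^\alpha\cdot C\vep^{2\alpha}=\O(1)$. Summing the two pieces finishes the proof.

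The only real obstacle is choosing the right starting identity. A brute-force tail integration $\mathrm{E}[X_\vep^\alpha]=\alpha\int_0^{+\infty}x^{\alpha-1}\mathrm{P}(X_\vep\geqslant x)\,\mathrm{d}x$ combined with the crude almost sure bound $X_\vep \preccurlyeq Z/\vep^2$ (coming from $(1+y)/(1+\vep^2 y)\leqslant 1/\vep^2$) would produce a divergent contribution of order $\vep^{-2\alpha}$ from the range $x \gg 1/\vep^2$. Going through the invariance identity at the outset, so that the critical condition $\mathrm{E}[Z^\alpha]=1$ is used to cancel the would-be dominant term, is precisely what captures the logarithmic blow-up rather than a polynomial one.
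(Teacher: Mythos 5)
Your proof is correct and follows essentially the same route as the paper's: both start from the invariance identity to expose the factor $\mathrm{E}[Z^\alpha]$, bound the Möbius-type ratio by roughly $\min(1+X_\vep,\vep^{-2})$, transfer to $X_0$ via the stochastic dominance $X_\vep\preccurlyeq X_0$, and then invoke the tail estimate of Lemma~\ref{lem:tail_X0} to split the tail integral at the threshold $\vep^{-2}$, getting $\O(\log(1/\vep))$ from the bulk and $\O(1)$ from the extreme tail. The only cosmetic difference is that you split the expectation with indicators while the paper bounds the ratio by the pointwise minimum $(1+X_\vep)\wedge\vep^{-2}$ and then does a single tail integration; the case distinction on $\mathrm{E}[Z^\alpha]<1$ versus $=1$ is not needed since $\mathrm{P}(X_0\geqslant x)=\O(x^{-\alpha})$ holds in both cases.
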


\begin{proof}
The identity $X_\vep \overset{\normalfont{(d)}}{=} Z \frac{1+X_\vep}{1 + \vep^2 X_\vep}$
yields, for $\gamma \geqslant 0$,
\begin{equation}
\begin{aligned}
\mathrm{E}[X_\vep^{\gamma}] = \mathrm{E}[Z^{\gamma}] \mathrm{E}\left[\left( \frac{1+X_\vep}{1 + \vep^2 X_\vep}\right)^{\gamma} \right]
&\leqslant \mathrm{E}[Z^{\gamma}] \mathrm{E}\left[\left((1+ X_\vep) \wedge \vep^{-2}\right)^{\gamma}\right] \\
\label{eq:majoration_EXvep^gamma1}
&\leqslant \mathrm{E}[Z^{\gamma}] \mathrm{E}\left[\left((1+ X_0) \wedge \vep^{-2}\right)^{\gamma}\right].
\end{aligned}
\end{equation}
It can be rewritten
\begin{equation}
\label{eq:majoration_EXvep^gamma2}
\mathrm{E}[X_\vep^{\gamma}] \leqslant \mathrm{E}[Z^{\gamma}] \left( \gamma   \int_0^{\vep^{-2}} x^{\gamma-1} \mathrm{P}(X_0 > x-1) \d x + \vep^{-2 \gamma } \mathrm{P}(X_0 \geqslant \vep^{-2}-1) \right).
\end{equation}
With $\gamma = \alpha$, Lemma~\ref{lem:tail_X0} gives upper bounds for these two terms:
\begin{equation}
\vep^{-2 \alpha} \mathrm{P}(X_0 \geqslant \vep^{-2}-1) =\O(1) \qquad \text{and} \qquad
\int_{0}^{\vep^{-2}} x^{\alpha-1}  \mathrm{P}(X_0 >x-1) \d x =\O( \log (1/\vep)).
\end{equation}
\end{proof}

\begin{lem}
\label{lem:moments_upperbound_kappa}
Fix $\gamma > \alpha$ and assume that $\mathrm{E}[Z^{\gamma}]$ is finite. Then, as $\vep$ goes to $0$,
\begin{equation}
\mathrm{E}[X_\vep^\gamma] =\O( \vep^{2\alpha- 2 \gamma}).
\end{equation}
\end{lem}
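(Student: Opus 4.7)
My plan is to mimic directly the strategy used in the proof of Lemma~\ref{lem:moments_upperbound_alpha}, simply pushing the computation through with the new exponent $\gamma > \alpha$ in place of $\alpha$. The starting point would again be the elementary bound $\frac{1+X_\vep}{1+\vep^2 X_\vep}\leqslant (1+X_\vep)\wedge \vep^{-2}$ combined with $X_\vep\preccurlyeq X_0$, applied to the fixed-point identity $X_\vep \overset{(d)}{=} Z\,\tfrac{1+X_\vep}{1+\vep^2 X_\vep}$. This yields exactly inequality~\eqref{eq:majoration_EXvep^gamma2} of the previous proof, namely
\begin{equation*}
\mathrm{E}[X_\vep^{\gamma}] \leqslant \mathrm{E}[Z^{\gamma}] \Big( \gamma   \int_0^{\vep^{-2}} x^{\gamma-1} \mathrm{P}(X_0 > x-1)\, \d x + \vep^{-2 \gamma } \mathrm{P}(X_0 \geqslant \vep^{-2}-1) \Big),
\end{equation*}
which is valid for every $\gamma\geqslant 0$ as soon as $\mathrm{E}[Z^\gamma]<+\infty$, which is exactly our hypothesis.

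Before invoking the tail estimate $\mathrm{P}(X_0\geqslant x)=\O(x^{-\alpha})$ of Lemma~\ref{lem:tail_X0}, I would need to verify its hypothesis $\mathrm{E}[Z^\alpha\log_+ Z]<+\infty$. Since $\gamma>\alpha$ and $\log_+ u\leqslant (\gamma-\alpha)^{-1} u^{\gamma-\alpha}$ for $u\geqslant 1$, one has $Z^\alpha\log_+ Z \leqslant (\gamma-\alpha)^{-1}Z^\gamma$, so the assumption $\mathrm{E}[Z^\gamma]<+\infty$ forces $\mathrm{E}[Z^\alpha\log_+ Z]<+\infty$. Note also that the assumption $\mathrm{E}[Z^\gamma]<+\infty$ for some $\gamma>\alpha$ places us in the case $\mathrm{E}[Z^\alpha]=1$, so that $\alpha$ here is indeed the critical exponent that Lemma~\ref{lem:tail_X0} is built around.

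It then remains to plug the tail bound into the two terms. For the integral, since $\gamma-\alpha-1>-1$ we get
\begin{equation*}
\int_0^{\vep^{-2}} x^{\gamma-1}\mathrm{P}(X_0 > x-1)\,\d x \;=\; \O\Big(\int_0^{\vep^{-2}} x^{\gamma-\alpha-1}\d x\Big) = \O(\vep^{-2(\gamma-\alpha)}),
\end{equation*}
and for the boundary term $\vep^{-2\gamma}\mathrm{P}(X_0\geqslant \vep^{-2}-1)=\O(\vep^{-2\gamma}\cdot \vep^{2\alpha})=\O(\vep^{2\alpha-2\gamma})$. Both contributions are of order $\vep^{2\alpha-2\gamma}$, which concludes the proof.

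The only subtlety I foresee is the verification that the tail bound of Lemma~\ref{lem:tail_X0} is applicable (i.e.\ checking $\mathrm{E}[Z^\alpha\log_+ Z]<+\infty$); once that is cleared, the estimate is a routine one-line integral computation, so there is no genuine obstacle.
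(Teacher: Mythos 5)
Your proof is correct and follows exactly the same route as the paper: reuse inequality~\eqref{eq:majoration_EXvep^gamma2} and feed in the tail estimate of Lemma~\ref{lem:tail_X0}. The paper leaves the verification of the hypothesis $\mathrm{E}[Z^\alpha\log_+ Z]<+\infty$ implicit (``Lemma~\ref{lem:tail_X0}, which applies here''), whereas you spell it out via $Z^\alpha\log_+ Z \leqslant (\gamma-\alpha)^{-1} Z^\gamma$; that is a welcome clarification, not a different argument.
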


\begin{proof}
We reuse inequality~\eqref{eq:majoration_EXvep^gamma2}. Lemma~\ref{lem:tail_X0}, which applies here, yields
\begin{equation}
\vep^{-2 \gamma} \mathrm{P}(X_0 \geqslant \vep^{-2}-1) =\O( \vep^{2\alpha- 2 \gamma})
\qquad
\text{and}
\qquad
\int_{0}^{ \vep^{-2}} x^{\gamma-1}  \mathrm{P}(X_0 >x-1) \d x =\O( \vep^{2\alpha- 2 \gamma}).
\end{equation}
\end{proof}

\begin{Rem}
\label{Rem:proof_E[Zalpha]<1_EZlogZ}
If $\mathrm{E}[Z^\alpha \log_+ Z]<+\infty$, the same techniques yields $\mathrm{E}[(\vep^2 X_\vep)^{\alpha} \wedge (\vep^2 X_\vep)^{\lceil \alpha \rceil}]= o(\vep^{2 \alpha})$. So, with Remark~\ref{Rem:O(xalpha_xbeta)}, we get a proof of the result claimed in Remark~\ref{Rem:remarks_thmA}, item~\ref{item:rem_thmA_ZlogZ}. Similarly using again the upper bound on the error provided by Remark~\ref{Rem:O(xalpha_xbeta)}, one obtains~\eqref{eq:expansionLvep_The_entier_pathologique}. To this end, an alternative version of Lemma~\ref{lem:tail_X0} should be used: when $\mathrm{E}[Z^\alpha]<1$, and under some extra technical assumptions, one has $\mathrm{P}(X_0 \geqslant x) = o(x^{-\alpha})$ (see \cite[Theorem 1.3]{Kevei_2016} and \cite[Theorem 8]{Kevei_2017}).
\end{Rem}

\subsection{Lower bounds when \texorpdfstring{$Z$}{} is bounded}
\label{subsection:moments_lowerbound}

We start with a quite general, albeit quite complex, lower bound for $X_\vep$'s moments.

\begin{lem}
\label{lem:lowerboundgamma}
Fix $\gamma \geqslant 1$, $C>0$, $B>1$ and $N \in \N$ and set $\tau = \frac{  \gamma }{B-1}\left(\frac{B}{B-1} +C \right)$. One has
\begin{equation}
\label{eq:lem_lowerboundgamma}
\mathrm{E}[X_\vep^\gamma] \geqslant \sum_{k=1}^N \mathrm{E}\left[Z^\gamma \mathbf{1}_{Z \leqslant B} \right]^{k} \exp \left( - \tau \vep^2  B^{k} \right) \mathrm{P}(X_\vep \leqslant C).
\end{equation}
\end{lem}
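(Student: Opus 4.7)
The plan is to realize $X_\vep$ as the $N$-th iterate of the forward dynamics and then extract the desired sum via a telescoping decomposition of that iterate, combined with super-additivity of $t \mapsto t^\gamma$ on $\R_+$ (valid since $\gamma \geq 1$).

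Fix an iid sequence $(Z_n)_{n \geq 1}$ of copies of $Z$, independent of $X_\vep^{(0)} \overset{(d)}{=} X_\vep$, and set $x_0 := X_\vep^{(0)}$ and $x_k := Z_k(1+x_{k-1})/(1+\vep^2 x_{k-1})$. By invariance (Lemma~\ref{lem:existence_Xvep}), $x_k \overset{(d)}{=} X_\vep$ for every $k$, so $\mathrm{E}[X_\vep^\gamma] = \mathrm{E}[x_N^\gamma]$. Decomposing each step as $x_k = \frac{Z_k}{1+\vep^2 x_{k-1}} + \frac{Z_k}{1+\vep^2 x_{k-1}}\, x_{k-1}$ and unfolding recursively yields the identity
$$x_N = \sum_{k=1}^N T_k + x_0\, T_N, \qquad T_k := \frac{Z_N Z_{N-1} \cdots Z_{N-k+1}}{\prod_{i=N-k}^{N-1}(1+\vep^2 x_i)}.$$
Dropping the nonnegative remainder and applying $(\sum a_k)^\gamma \geq \sum a_k^\gamma$ (true for $a_k \geq 0$ and $\gamma \geq 1$) gives $x_N^\gamma \geq \sum_{k=1}^N T_k^\gamma$, hence $\mathrm{E}[X_\vep^\gamma] \geq \sum_{k=1}^N \mathrm{E}[T_k^\gamma]$.

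To recover the $k$-dependent exponential factor $\exp(-\tau \vep^2 B^k)$ in each term (rather than a single $\exp(-\tau \vep^2 B^N)$), I would restrict each $T_k^\gamma$ to the event
$$A_k := \{x_{N-k} \leq C\} \cap \{Z_{N-k+1} \leq B, \ldots, Z_N \leq B\}.$$
On $A_k$, running the forward dynamics from $x_{N-k} \leq C$ with all $Z_i \leq B$ gives $x_{N-k+l} \leq B^l K$ for $l = 0, \ldots, k-1$, with $K := B/(B-1) + C$. Using $1+y \leq e^y$,
$$\prod_{i=N-k}^{N-1}(1+\vep^2 x_i) \;\leq\; \exp\!\Big(\vep^2 K \tfrac{B^k - 1}{B - 1}\Big) \;\leq\; \exp\!\Big(\tfrac{K \vep^2 B^k}{B-1}\Big),$$
so $T_k^\gamma \mathbf{1}_{A_k} \geq (Z_{N-k+1} \cdots Z_N)^\gamma \mathbf{1}_{A_k} \exp(-\tau \vep^2 B^k)$ with $\tau = \gamma K/(B-1)$, exactly as in the statement.

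To conclude I use independence: $x_{N-k}$ is $\sigma(x_0, Z_1, \ldots, Z_{N-k})$-measurable, hence independent of $(Z_{N-k+1}, \ldots, Z_N)$, and $x_{N-k} \overset{(d)}{=} X_\vep$. Factorising the expectation of $(Z_{N-k+1} \cdots Z_N)^\gamma \mathbf{1}_{A_k}$ along this independence produces $\mathrm{P}(X_\vep \leq C) \cdot \mathrm{E}[Z^\gamma \mathbf{1}_{Z \leq B}]^k$, and summing over $k$ delivers the claim. The main obstacle is identifying the telescoping decomposition $x_N = \sum_k T_k + x_0 T_N$: it is what replaces the crude factor $B^N$ (which would arise by bounding the denominators over the single global event $\{x_0 \leq C, Z_1 \leq B, \ldots, Z_N \leq B\}$ and using only the identity $\mathrm{E}[x_N^\gamma] = \mathrm{E}[X_\vep^\gamma]$) by the sharp $k$-indexed factor $B^k$, and thereby makes the entire sum — not just a single term — available as a lower bound.
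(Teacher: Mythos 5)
Your proof is correct and follows essentially the same route as the paper's: iterate the fixed-point map $N$ times, split $X_\vep^\gamma$ into $N$ positive pieces (you via a pointwise telescoping identity followed by a single application of super-additivity, the paper via iterated use of $(1+x)^\gamma \geqslant 1+x^\gamma$, which amounts to the same thing), restrict each piece to the event where the iterate $k$ steps back is at most $C$ and the last $k$ innovations $Z_i$ are at most $B$, bound the accumulated denominators by $\exp(\tau\vep^2 B^k)$, and factor using independence of that iterate from the subsequent $Z_i$'s.
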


\begin{proof} 
Let $X_\vep^{(N)}$ be a copy of $X_\vep$ and $(Z_k)$ be iid copies of $Z$, independent of $X_\vep^{(N)}$. Define recursively, for $0 \leqslant k \leqslant N-1$,
\begin{equation}
\label{eq:Xvep^(k)}
X_\vep^{(k)} = Z_{k+1} \frac{1+X_\vep^{(k+1)}}{1+\vep^2 X_\vep^{(k+1)}}.
\end{equation}
For every $k \leqslant N$, one has $X_\vep^{(k)} \overset{\normalfont{(d)}}{=} X_\vep$.
On the other hand, one can derive the following lower bounds:
\begin{equation}
(X_\vep^{(0)})^\gamma=  Z_{1}^\gamma \frac{(1+X_\vep^{(1)})^\gamma}{(1+\vep^2 X_\vep^{(1)})^\gamma}  \geqslant  \frac{Z_{1}^\gamma}{(1+\vep^2 X_\vep^{(1)})^\gamma} +  \frac{Z_{1}^\gamma}{(1+\vep^2 X_\vep^{(1)})^\gamma} (X_\vep^{(1)})^\gamma .
\end{equation}
Here the condition $\gamma \geqslant 1$ is used through the convexity inequality $(1+x)^\gamma \geqslant 1 + x^\gamma$. Then, inductively,
\begin{equation}
(X_\vep^{(0)})^\gamma \geqslant \frac{Z_{1}^\gamma}{(1+\vep^2 X_\vep^{(1)})^\gamma} +  \frac{Z_{1}^\gamma}{(1+\vep^2 X_\vep^{(1)})^\gamma} \frac{Z_{2}^\gamma}{(1+\vep^2 X_\vep^{(2)})^\gamma} + \cdots + \prod_{j=1}^N \frac{Z_j^\gamma}{(1+\vep^2 X_\vep^{(j)})^\gamma} .
\end{equation}
By taking the expectation we get
\begin{equation}
\label{eq:minorationEXalpha}
\mathrm{E}[X_\vep^\gamma]  \geqslant \sum_{k=1}^{N} \mathrm{E} \left[  \prod_{j=1}^k \frac{Z_j^\gamma}{(1+\vep^2 X_\vep^{(j)})^\gamma} \right].
\end{equation}
If $X_\vep^{(k)} \leqslant C$ and $Z_k \leqslant B$, then, with definition~\eqref{eq:Xvep^(k)}, $X_\vep^{(k-1)} \leqslant B(1+C)$.
So, inductively, if $X_\vep^{(k)} \leqslant C$, and $Z_0, \ldots, Z_k \leqslant B$, then, for every $j \leqslant k$,
\begin{equation}
X_\vep^{(k-j)}\leqslant \sum_{i=1}^j B^i + B^{j} C \leqslant B^j \left(\frac{B}{B-1} +C \right)=B^j \sigma,
\end{equation}
with $\sigma=\frac{B}{B-1} +C$.
Thus,
\begin{equation}
\label{eq:fractionBsigma}
\prod_{j=1}^k \frac{Z_j^\gamma}{(1+\vep^2 X_\vep^{(j)})^\gamma} 
\geqslant \left[\prod_{j=1}^k \frac{Z_j^\gamma \mathbf{1}_{Z_j \leqslant B} }{(1+ \sigma \vep^2  B^{k-j})^{\gamma}} \right] \mathbf{1}_{X_\vep^{(k)} \leqslant C}  .
\end{equation}
We compute
\begin{equation}
 \prod_{j=1}^k \frac{1}{(1+ \sigma \vep^2  B^{k-j})^{\gamma}}  \geqslant \exp \left( - \sigma \gamma \vep^2  \sum_{j=1}^k B^{k-j} \right) \geqslant \exp \left( - \frac{\sigma  \gamma }{B-1} \vep^2  B^{k} \right) =\exp \left( - \tau \vep^2  B^{k} \right).
\end{equation}
Taking the expectation in~\eqref{eq:fractionBsigma} and using that $X_\vep^{(k)}$ and $Z_1, \ldots, Z_k$ are independent, we obtain
\begin{equation}
\mathrm{E} \left[  \prod_{j=1}^k \frac{Z_j^\gamma}{(1+\vep^2 X_\vep^{(j)})^\gamma} \right] 
\geqslant \mathrm{E}\left[Z^\gamma \mathbf{1}_{Z \leqslant B} \right]^{k} \exp \left( - \tau \vep^2  B^{k} \right) \mathrm{P}(X_\vep \leqslant C).
\end{equation}
The conclusion follows by injecting this lower bound into~\eqref{eq:minorationEXalpha}.
\end{proof}

One could expect to use that general lower bound for any given $Z$. However it only gives satisfactory results when $Z$ is bounded. In that case we can get rid of the indicator $\mathbf{1}_{Z \leq B}$ in~\eqref{eq:lem_lowerboundgamma}.
\newline

\begin{lem}
\label{lem:Xvep_borné}
If $Z$ has a bounded support then $X_\vep \leqslant \vep^{-2} \|Z \|_{L^\infty}$ almost surely.
\end{lem}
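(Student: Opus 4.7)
The plan is to derive this almost-sure bound directly from the fixed-point identity~\eqref{eq:invariance_Xvep} of Lemma~\ref{lem:existence_Xvep}, namely $X_\vep \overset{\normalfont{(d)}}{=} Z(1+X_\vep)/(1+\vep^2 X_\vep)$ with $Z$ independent of $X_\vep$ on the right-hand side, combined with a deterministic pointwise bound on the map $x \mapsto (1+x)/(1+\vep^2 x)$.

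Write $M := \|Z\|_{L^\infty}$. The key step is the elementary inequality
\begin{equation*}
\frac{1+x}{1+\vep^2 x} \leqslant \frac{1}{\vep^2} \qquad \text{for every } x \geqslant 0,
\end{equation*}
which rearranges to $\vep^2 \leqslant 1$ and so holds in the small-$\vep$ regime relevant to the paper. Multiplying by $Z$ gives
\begin{equation*}
Z \, \frac{1+X_\vep}{1+\vep^2 X_\vep} \;\leqslant\; \frac{Z}{\vep^2} \;\leqslant\; \frac{M}{\vep^2} \qquad \text{a.s.}
\end{equation*}
Since the left-hand side has the same law as $X_\vep$ by~\eqref{eq:invariance_Xvep}, this forces $\mathrm{P}(X_\vep > M/\vep^2) = 0$, which is the claim.

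Alternatively, the same bound can be obtained via the iterative construction from the proof of Lemma~\ref{lem:existence_Xvep}: starting from $x_0 = 0$, the recursion $x_{n+1} = Z_{n+1}(1+x_n)/(1+\vep^2 x_n)$ together with the above inequality yields $x_n \leqslant M/\vep^2$ almost surely for every $n \geqslant 1$, and this deterministic upper bound is inherited by the distributional limit $X_\vep$. There is no substantive difficulty here; the argument reduces to a one-line estimate, and the only caveat worth flagging is the implicit restriction $\vep \leqslant 1$, which aligns with the paper's asymptotic regime $\vep \to 0$.
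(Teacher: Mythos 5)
Your proof is correct and uses essentially the same argument as the paper: the fixed-point identity~\eqref{eq:invariance_Xvep} combined with the pointwise bound $\frac{1+x}{1+\vep^2 x}\leqslant \vep^{-2}$ for all $x\geqslant 0$. You correctly flag the implicit restriction $\vep\leqslant 1$ (which the paper leaves unstated but assumes throughout, since its interest is the regime $\vep\to 0$); otherwise the two arguments coincide.
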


\begin{proof} It is an immediate consequence of the invariance identity $X_\vep \overset{\normalfont{(d)}}{=} Z \frac{1+ X_\vep}{1+ \vep^2 X_\vep}$ and of the inequality~$\frac{1+ x}{1+ \vep^2 x} \leqslant \vep^{-2}$, which holds for every $x \geqslant 0$.
\end{proof}

Lemma~\ref{lem:Xvep_borné} justifies that we only study $X_\vep$'s moments instead of its truncated moments: as long as $B$ is chosen larger than $\|Z\|_{L^\infty}$ one has
\begin{equation}
\mathrm{E}\left[X_\vep^{K+1} \mathbf{1}_{ \vep^2 X_\vep \leqslant B}\right] = \mathrm{E}\left[X_\vep^{K+1}\right].
\end{equation}
In the next two lemmas we give a lower bound for $X_\vep$'s moments when $Z$ is bounded. In that instance, note that $\mathrm{E}[Z^\alpha]=1$: the set $\A$ cannot takes the form $\A = (0, \alpha]$.
Lemma~\ref{lem:moments_lowerbound_entier} will be used if $\alpha$ is an integer, and Lemma~\ref{lem:moments_lowerbound_non_entier} when $\alpha$ is not an integer. However, both of them hold true regardless of the nature of $\alpha$.

\begin{lem} 
\label{lem:moments_lowerbound_entier}
For $\alpha \geqslant 1$, if $Z$ has a bounded support then, for some $c>0$, and $\vep$ sufficiently small,
\begin{equation}
\mathrm{E}[X_\vep^\alpha] \geqslant c \log (1/\vep) .
\end{equation}
\end{lem}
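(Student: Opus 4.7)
The plan is to invoke Lemma~\ref{lem:lowerboundgamma} with $\gamma = \alpha$ and then tune the free parameters $B$, $C$, $N$ in a $\vep$-dependent way. Because $\alpha\geqslant 1$ is finite and $Z$ is bounded and non-deterministic, $\|Z\|_{L^\infty}>1$ (otherwise $Z\leqslant 1$ almost surely, giving $\alpha=+\infty$). Taking $B:=\|Z\|_{L^\infty}>1$ makes $\mathbf{1}_{Z\leqslant B}=1$ almost surely, and continuity of $\gamma\mapsto\mathrm{E}[Z^\gamma]$ on $[0,+\infty)$ (by dominated convergence, since $Z$ is bounded) combined with the description of $\A$ given just before Theorem~\ref{The:thm_concret} forces $\mathrm{E}[Z^\alpha]=1$. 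Hence Lemma~\ref{lem:lowerboundgamma} becomes
\begin{equation*}
\mathrm{E}[X_\vep^\alpha] \;\geqslant\; \mathrm{P}(X_\vep\leqslant C)\sum_{k=1}^N \exp\bigl(-\tau\vep^2 B^k\bigr),
\end{equation*}
with $\tau=\tfrac{\alpha}{B-1}\bigl(\tfrac{B}{B-1}+C\bigr)$ depending only on $C$.

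Then I would fix $C$ as a continuity point of $X_0$'s distribution large enough that $\mathrm{P}(X_0\leqslant C)\geqslant 3/4$, which is possible because $X_0<+\infty$ almost surely by Lemma~\ref{lem:X0}; by the weak convergence $X_\vep\to X_0$ of Lemma~\ref{lem:convergence_en_loi_Xvep}, this gives $\mathrm{P}(X_\vep\leqslant C)\geqslant 1/2$ for all $\vep$ small enough, and freezes the value of $\tau$. The last and key choice is $N(\vep):=\lfloor 2\log(1/\vep)/\log B\rfloor$: this ensures $B^k\leqslant B^N\leqslant \vep^{-2}$ for every $k\leqslant N$, so that $\tau\vep^2 B^k\leqslant\tau$ and $\exp(-\tau\vep^2 B^k)\geqslant e^{-\tau}$. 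Summing the $N$ terms yields
\begin{equation*}
\mathrm{E}[X_\vep^\alpha] \;\geqslant\; \tfrac{1}{2}\, N\, e^{-\tau} \;\geqslant\; c\log(1/\vep)
\end{equation*}
for some $c>0$ and all $\vep$ sufficiently small.

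The substantial work has already been done in Lemma~\ref{lem:lowerboundgamma}: the factor $\exp(-\tau\vep^2 B^k)$ precisely isolates the maximal useful length $N\asymp\log(1/\vep)/\log B$ of the iterative construction underlying that lemma, while the bounded-support hypothesis plays a double role here, simultaneously trivializing the truncation $\mathbf{1}_{Z\leqslant B}$ and producing the identity $\mathrm{E}[Z^\alpha]=1$ which prevents the geometric decay of $\mathrm{E}[Z^\alpha\mathbf{1}_{Z\leqslant B}]^k$. Once these two points are recognized, no genuine obstacle remains, and the proof reduces to the bookkeeping above.
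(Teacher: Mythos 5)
Your proof is correct and takes essentially the same route as the paper's: same application of Lemma~\ref{lem:lowerboundgamma} with $\gamma=\alpha$ and $B=\|Z\|_{L^\infty}$, same use of $\mathrm{E}[Z^\alpha]=1$ (the paper justifies this via the classification of $\A$ stated just before Theorem~\ref{The:thm_concret}, you via continuity of $\gamma\mapsto\mathrm{E}[Z^\gamma]$, but both reduce to the same fact), same choice $N\asymp\log(1/\vep)/\log B$, and the same control $\mathrm{P}(X_\vep\leqslant C)\geqslant\text{const}$ via Lemma~\ref{lem:convergence_en_loi_Xvep}. The only cosmetic difference is that you bound each summand by $e^{-\tau}$ while the paper bounds the whole sum by $N$ times its smallest term, which is the same bound.
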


\begin{proof} Recall that since $Z$ is bounded, $\mathrm{E}[Z^\alpha]=1$. Choose $\gamma = \alpha$ and $B = \|Z\|_\infty$ in Lemma~\ref{lem:lowerboundgamma} to get
\begin{align}
\mathrm{E}[X_\vep^\alpha]  \geqslant \sum_{k=1}^N \mathrm{E}\left[Z^\alpha \right]^{k} \exp \left( - \tau \vep^2  B^{k} \right) \mathrm{P}(X_\vep \leqslant C) 
\geqslant N  \exp \left( - \tau \vep^2  B^{N}\right) \mathrm{P}(X_\vep \leqslant C).
\end{align}
First note that, thanks to Lemma~\ref{lem:convergence_en_loi_Xvep},
\begin{equation}
\mathrm{P}(X_\vep \leqslant C) \longrightarrow \mathrm{P}(X_0 \leqslant C),
\end{equation}
which is positive if $C$ is large enough. Choosing $N= N_\vep = \lfloor 2 \frac{1}{\log B} \log \frac{1}{\vep} \rfloor$, we obtain
\begin{equation}
\mathrm{E}[X_\vep^\alpha] \geqslant N_\vep \exp (- \tau ) \mathrm{P}(X_\vep \leqslant C) \geqslant c  \log (1/\vep).
\end{equation}
\end{proof}

\begin{Rem}
If $Z$ is not bounded but $\mathrm{E}[Z^{\kappa}] <+\infty$ for some $\kappa > \alpha$ then, with another choice of $B_\vep$ and $N_\vep$, one can get the slightly weaker lower bound
$\mathrm{E}[X_\vep^\alpha] \geqslant c \frac{\log (1/\vep)}{\log \log (1/\vep)}$.
\end{Rem}

\begin{lem}
\label{lem:moments_lowerbound_non_entier}
If $Z$ is bounded, and if $\gamma \geqslant 1$ is such that $\mathrm{E}[Z^\gamma]>1$, then, for some $c>0$, and for $\vep$ sufficiently small,
\begin{equation}
\mathrm{E}[X_\vep^\gamma] \geqslant c \vep^{-2 \eta},
\qquad \text{ where} \qquad
\eta = \frac{\log \mathrm{E}[Z^\gamma]}{\log \|Z\|_\infty} \in (0,\gamma - \alpha).
\end{equation}
\end{lem}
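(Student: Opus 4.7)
The plan is to apply Lemma~\ref{lem:lowerboundgamma} with the natural choice $B = \|Z\|_\infty$: since $Z \leqslant B$ almost surely, the truncation $\mathbf{1}_{Z \leqslant B}$ disappears and $\mathrm{E}[Z^\gamma \mathbf{1}_{Z \leqslant B}] = \mathrm{E}[Z^\gamma] > 1$. First I would fix $C>0$ large enough that $\mathrm{P}(X_0 \leqslant C) > 0$; then by Lemma~\ref{lem:convergence_en_loi_Xvep}, $\mathrm{P}(X_\vep \leqslant C)$ is bounded below by some $p>0$ for all sufficiently small $\vep$. Note also that $B > 1$: otherwise $Z \leqslant 1$ a.s. and $\mathrm{E}[Z^\gamma] \leqslant 1$, contradicting the hypothesis.

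Next, the key step is to tune $N = N_\vep$ to balance the two opposing factors in~\eqref{eq:lem_lowerboundgamma}: the factor $\mathrm{E}[Z^\gamma]^k$ grows exponentially in $k$, while $\exp(-\tau \vep^2 B^k)$ collapses once $B^k$ exceeds $\vep^{-2}$. The right choice is
\begin{equation}
N_\vep = \left\lfloor \frac{2}{\log B} \log (1/\vep) \right\rfloor,
\end{equation}
for which $B^{N_\vep} \leqslant \vep^{-2}$, so $\exp(-\tau \vep^2 B^{N_\vep}) \geqslant e^{-\tau}$. Keeping only the single term $k = N_\vep$ in the sum yields
\begin{equation}
\mathrm{E}[X_\vep^\gamma] \geqslant p \, e^{-\tau} \, \mathrm{E}[Z^\gamma]^{N_\vep}.
\end{equation}
Since $\mathrm{E}[Z^\gamma]^{N_\vep} = \exp(N_\vep \log \mathrm{E}[Z^\gamma])$ and $N_\vep \log \mathrm{E}[Z^\gamma] = 2 \eta \log(1/\vep) + O(1)$, we obtain $\mathrm{E}[X_\vep^\gamma] \geqslant c \, \vep^{-2\eta}$, absorbing the bounded correction into the constant $c$.

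Finally, I would verify that $\eta \in (0, \gamma - \alpha)$. Positivity is immediate: $\mathrm{E}[Z^\gamma] > 1$ and $B>1$, so $\eta > 0$. For the upper bound, the pointwise inequality $Z^\gamma = Z^\alpha \cdot Z^{\gamma-\alpha} \leqslant Z^\alpha B^{\gamma - \alpha}$ gives
\begin{equation}
\mathrm{E}[Z^\gamma] \leqslant B^{\gamma - \alpha} \, \mathrm{E}[Z^\alpha] \leqslant B^{\gamma - \alpha},
\end{equation}
where the last inequality uses that $\alpha \in \overline{\A}$ forces $\mathrm{E}[Z^\alpha] \leqslant 1$ (by monotone convergence, since $Z$ is bounded). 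The inequality is strict because $Z$ is non-deterministic, so $\mathrm{P}(Z < B) > 0$, and hence $\mathrm{P}(Z^{\gamma - \alpha} < B^{\gamma-\alpha}) > 0$.

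The main obstacle is essentially already absorbed into the statement of Lemma~\ref{lem:lowerboundgamma}; the only genuine work here is identifying the optimal scaling $N_\vep \asymp \log(1/\vep)/\log B$ that matches the decay of $\exp(-\tau \vep^2 B^k)$ against the growth of $\mathrm{E}[Z^\gamma]^k$. Boundedness of $Z$ is crucial: it allows us to drop the truncation on $Z$ and it makes $B$ a finite constant independent of $\vep$, so that the optimization produces a genuine power of $\vep$ rather than a logarithmic correction (as happens in the unbounded case discussed in the remark following Lemma~\ref{lem:moments_lowerbound_entier}).
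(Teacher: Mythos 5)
Your proof is correct and follows essentially the same route as the paper: apply Lemma~\ref{lem:lowerboundgamma} with $B=\|Z\|_\infty$, keep only the last term $k=N_\vep$, and choose $N_\vep = \lfloor \tfrac{2}{\log B}\log(1/\vep)\rfloor$ so that $\exp(-\tau\vep^2 B^{N_\vep})$ stays bounded below while $\mathrm{E}[Z^\gamma]^{N_\vep}$ produces the power $\vep^{-2\eta}$. The only additions beyond the paper's argument are the explicit verification that $\eta\in(0,\gamma-\alpha)$ and the (correct) reminder that $\mathrm{P}(X_\vep\leqslant C)$ is bounded below via Lemma~\ref{lem:convergence_en_loi_Xvep}, both of which the paper leaves implicit.
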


\begin{proof} Set $B = \|Z\|_{L^\infty}$ in Lemma~\ref{lem:lowerboundgamma} to get
\begin{align}
\mathrm{E}[X_\vep^\gamma]  \geqslant \sum_{k=1}^N \mathrm{E}\left[Z^\gamma \right]^{k} \exp \left( - \tau \vep^2  B^{k} \right) \mathrm{P}(X_\vep \leqslant C) \geqslant \mathrm{E}\left[Z^\gamma \right]^{N} \exp \left( - \tau \vep^2  B^{N}\right) \mathrm{P}(X_\vep \leqslant C).
\end{align}
Choosing again $N_\vep = \lfloor 2 \frac{1}{\log B} \log \frac{1}{\vep} \rfloor$, we obtain
$
\mathrm{E}[X_\vep^\gamma] \geqslant \mathrm{E}\left[Z^\gamma \right]^{N_\vep} \exp \left( - \tau \right) \mathrm{P}(X_\vep \leqslant C) \geqslant c  \vep^{- 2 \eta}.
$
\end{proof}

\subsection{Proof of Theorem~\ref{The:thm_concret}} 
\label{section:proof_theorem}

\begin{proof}[\unskip\nopunct]
We recall here the upper and lower bounds provided by Theorem~\ref{The:dvpt_avec_erreur_general}:
\begin{equation}
\label{eq:proof_the_encadrement_Rvep}
c \vep^{2 (K+1)} \mathrm{E}[X_\vep^{K+1} \mathbf{1}_{ \vep^2 X_\vep \leqslant B}] \leqslant  R_{K}(\vep) \leqslant C_\beta \vep^{2 \beta} \mathrm{E}[X_\vep^{\beta}],
\end{equation}
If $\alpha=+\infty$ then $R_K(\vep) \leqslant C_{K+1} \vep^{2(K+1)} \mathrm{E}[X_\vep^{K+1}] \leqslant \vep^{2(K+1)} \mathrm{E}[X_0^{K+1}]$ (Lemma~\ref{lem:existence_Xvep}). Since $\mathrm{E}[X_0^{K+1}]$ is finite (Lemma~\ref{lem:X0}), the result~\eqref{eq:expansionLvep_the_alpha_infini} follows.

From now on we suppose that $\alpha$ is finite and $\mathrm{E}[Z^\alpha]=1$ and we set $K = \lceil \alpha \rceil-1$. By Corollary~\ref{corol:moments_Xvep_premier_resultat},
\begin{equation}
R(\vep) \geqslant c \vep^{2 (K+1)} \mathrm{E}[X_\vep^{K+1} \mathbf{1}_{ \vep^2 X_\vep \leqslant B}] \gg \vep^{2 (K+1)}.
\end{equation}
If $\alpha$ is an integer then the lower and upper bounds given by~\eqref{eq:The_encadrement_Rvep_entier} or~\eqref{eq:The_encadrement_Rvep_entier_Z_borne} follow from Lemmas~\ref{lem:moments_upperbound_alpha} and~\ref{lem:moments_lowerbound_entier}.
If $\alpha$ is not an integer then the lower and upper bounds~\eqref{eq:The_encadrement_Rvep_nonentier} given by Theorem~\ref{The:thm_concret} are a consequence of Lemmas~\ref{lem:moments_upperbound_kappa} (with $\gamma$ such that $\mathrm{E}[Z^\gamma] <+\infty$) and~\ref{lem:moments_lowerbound_non_entier} (with $\gamma = K+1$).
\end{proof}

\appendix
\section{Generalization to higher dimension}
\label{appendix:generalization}

The techniques developed in the previous sections are sufficiently robust to be used in more general settings. We apply them to a square matrix of size $d+1$ which is a perturbation of a matrix alike $\Diag(1,Z)$, which still have a preferred direction. Since the proofs are only slightly different from the previous sections, they will be only sketched in this appendix. We will just point out the arguments that must be adapted and many details will be omitted.
\newline

We now consider the $(d+1) \times (d+1)$ matrix
\begin{equation}
M_\vep = \begin{pmatrix}
1 & \vep L_\vep \\ \vep C_\vep & N_\vep
\end{pmatrix},
\end{equation}
where $L_\vep$ and $C_\vep$ are random vectors of size $d$, and $N_\vep$ is a random matrix, of size $d \times d$.
We are still interested in the Lyapunov exponent, defined by the limit
\begin{equation}
\L (\vep) = \lim_{n \to + \infty} \frac{1}{n} \log \left \| M_{n,\vep} \cdots M_{1,\vep}\right \|,
\end{equation}
where $(M_{k,\vep})_{k \geqslant 1}$ are iid copies of $M_\vep$. This limit exists almost surely and is deterministic (see again~\cite{Furstenberg_Kesten_60}) as soon as for every $\vep >0$, $\mathrm{E}[\log_+ \|M_\vep\|] < + \infty$.

We derive in this section a regular expansion for $\L(\vep)$, alike the expansion provided by Proposition~\ref{prop:dvpt_lyapunov} in the previous setting. However, no lower bound on the error will be given here. We start by deriving a formula alike ``$\L(\vep)= \mathrm{E}[\log(1+\vep^2 X_\vep)]$'' (Lemma~\ref{lem:expressionlyapunov_generalized}).
\newline

In the whole section $\| \cdot \|$ will denote a given norm on $\R^d$ or $\R^{d+1}$, as well as the induced operator norm on $\M_d(\R)$ or $\M_{d+1}(\R)$. On another note, if $x,y \in \R^d$, we will write $x \leqslant y$ if the inequality holds coordinatewise. Similarly the stochastic dominance $\preccurlyeq$ will be extended to random vectors: $X \preccurlyeq Y$ means that there exists a copy $\tilde X$ of $X$ and a copy $\tilde Y$ of $Y$ satisfying $\tilde X \leqslant \tilde Y$ almost surely (coordinatewise).
\newline

Let's introduce the assumptions under which we will work in the section. Observe that under these assumptions, the condition $\mathrm{E}[\log_+ \|M_\vep\|] < + \infty$ is fulfilled so the Lyapunov exponent is well defined.

\begin{Hyp}
\label{hyp:generalized}
We assume that the following holds, for every $\vep \in (0, \vep_0)$.
\begin{enumerate}[label=(\alph*)]
\item \label{hyp:generalized_irred} The random matrix $M_\vep$ has non-negative entries. And, almost surely, there exists $N\geqslant 1$ such that the product $M_{N,\vep} \cdots M_{1,\vep}$ has positive entries.
\item \label{hyp:generalized_Nvep_Cvep} There exists $\delta_\vep >0$ such that $\mathrm{E}[\| N_\vep \|^{\delta_\vep}]<1$ and $\mathrm{E}[\|C_\vep\|^{\delta_\vep}]<+\infty$.
\item \label{hyp:generalized_Lvep} $\mathrm{E}[\log_+ \|L_\vep\|] <+ \infty$.
\end{enumerate}
\end{Hyp}

Before deriving the formula for the Lyapunov exponent, we introduce the random vector $Y_\vep$, which will play the same role as $X_0$ in our new setting (except that here it will depend on $\vep$). Namely it will be used through stochastic dominances.

\begin{lem}
\label{lem:def_Yvep}
Fix $\vep \in (0,\vep_0)$ and let $(N_{\vep,k}, C_{\vep,k})$ be iid copies of $(N_\vep, C_\vep)$. The series
\begin{equation}
\label{eq:def_Yvep}
Y_\vep = \sum_{n=0}^{+\infty} N_{\vep,1} \ldots N_{\vep,n-1} C_{\vep,n}
\end{equation}
converges almost surely. Moreover $\mathrm{E}[\log_+ \|Y_\vep\|]$ is finite. If, in addition,
\begin{equation}
\label{eq:control_EXvep^beta_generalized}
\limsup_{\vep \to 0} \mathrm{E}[\|N_\vep\|^\beta] <1 \qquad \text{and} \qquad \limsup_{\vep \to 0} \mathrm{E}[\|C_\vep\|^\beta] <+\infty,
\end{equation}
then $\mathrm{E}[\|Y_\vep\|^\beta] = \O(1)$ as $\vep $ goes to $0$.
\end{lem}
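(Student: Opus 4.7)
The plan is to mimic closely the proof of Lemma~\ref{lem:X0}. The key tool is the submultiplicativity of the operator norm, $\|AB\| \leqslant \|A\|\|B\|$, which reduces each of the three claims to the corresponding scalar estimate applied to the dominating series $\sum_n \|N_{\vep,1}\| \cdots \|N_{\vep,n-1}\| \, \|C_{\vep,n}\|$; the arguments already used for $X_0$ then carry over essentially verbatim, the only novelty being that everything depends on~$\vep$.

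First, for the almost sure convergence, I would bound any partial sum by
\[
\Bigl\| \sum_{n=1}^{N} N_{\vep,1} \cdots N_{\vep,n-1} C_{\vep,n} \Bigr\| \leqslant \sum_{n=1}^{N} \|N_{\vep,1}\| \cdots \|N_{\vep,n-1}\| \, \|C_{\vep,n}\|,
\]
raise to the power $\delta_\vep$ and take expectations. Using either Minkowski's inequality (if $\delta_\vep \geqslant 1$) or the subadditivity inequality $(x+y)^{\delta_\vep} \leqslant x^{\delta_\vep} + y^{\delta_\vep}$ (if $\delta_\vep \leqslant 1$), together with the independence of the pairs $(N_{\vep,k}, C_{\vep,k})$ and Assumptions~\ref{hyp:generalized}~\ref{hyp:generalized_Nvep_Cvep}, I obtain the geometric estimate
\[
\mathrm{E}\bigl[\|Y_\vep\|^{\delta_\vep}\bigr]^{1 \wedge 1/\delta_\vep} \leqslant \frac{\mathrm{E}[\|C_\vep\|^{\delta_\vep}]^{1 \wedge 1/\delta_\vep}}{1 - \mathrm{E}[\|N_\vep\|^{\delta_\vep}]^{1 \wedge 1/\delta_\vep}} < +\infty,
\]
exactly as in~\eqref{eq:minkowski1}--\eqref{eq:minkowski2}. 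In particular the series defining $Y_\vep$ converges absolutely almost surely.

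The finiteness of $\mathrm{E}[\log_+ \|Y_\vep\|]$ then follows immediately from the inequality $\log_+ x \leqslant \frac{1}{\delta_\vep} \log(1 + x^{\delta_\vep})$ combined with Jensen's inequality applied to the concave function $\log(1+\cdot)$, exactly as at the end of the proof of Lemma~\ref{lem:X0}.

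Finally, under the uniform control~\eqref{eq:control_EXvep^beta_generalized}, I would rerun the very same estimate with $\beta$ in place of $\delta_\vep$: for $\vep$ small enough, $\mathrm{E}[\|N_\vep\|^\beta]$ stays bounded away from~$1$ and $\mathrm{E}[\|C_\vep\|^\beta]$ remains bounded, so the denominator and numerator of the geometric bound are controlled uniformly in $\vep$, yielding $\mathrm{E}[\|Y_\vep\|^\beta] = \O(1)$. I do not anticipate any genuine obstacle: the whole proof is a direct vectorial transcription of the scalar argument, with submultiplicativity of the operator norm doing all the work.
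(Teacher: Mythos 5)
Your proof is correct and takes essentially the same approach as the paper's (which simply references the argument of Lemma~\ref{lem:X0} via Minkowski's inequality or the subadditivity $(x+y)^\gamma\leqslant x^\gamma+y^\gamma$). The only thing you make more explicit is the use of submultiplicativity of the operator norm and the Jensen step for $\mathrm{E}[\log_+\|Y_\vep\|]$, both of which the paper leaves implicit.
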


\begin{proof}
Since all the entries of $M_\vep$ are non-negative, the sum~\eqref{eq:def_Yvep} is always defined. A priori, some of its entries could be $+ \infty$. Denote by $Y_\vep$ the random vector defined by this infinite sum. Using Minkowski's inequality or another convexity inequality as for Lemma~\ref{lem:X0}, one proves, under Assumption~\ref{hyp:generalized}~\ref{hyp:generalized_Nvep_Cvep}, that $\mathrm{E}[\|Y_\vep\|^{\delta_\vep}]$ is finite. So $Y_\vep$'s entries are almost surely finite. With the same technique, we prove the rest of the lemma.
\end{proof}

The next lemma provides the desired formula for $\L(\vep)$.

\begin{lem} 
\label{lem:expressionlyapunov_generalized}
There exists a random vector $X_\vep \in \R^d$, with non-negative entries, satisfying
\begin{equation}
\begin{pmatrix}
1 \\ \vep X_\vep 
\end{pmatrix}
\overset{\normalfont{(d)}}{=}
\begin{pmatrix}
1 & \vep L_\vep \\ \vep C_\vep & N_\vep
\end{pmatrix} \begin{pmatrix}
1 \\ \vep X_\vep 
\end{pmatrix} \qquad \text{ in the projective space } \P^{d}(\R),
\end{equation}
or equivalently,
\begin{equation}
\label{eq:point_fixe_generalized}
X_\vep \overset{\normalfont{(d)}}{=}  \frac{C_\vep+ N_\vep X_\vep}{1+ \vep^2 L_\vep X_\vep},
\end{equation}
where $C_\vep, N_\vep$ and $L_\vep$ are the blocks of the random matrix $M_\vep$, independent of $X_\vep$.
One has $X_\vep \preccurlyeq Y_\vep$. Moreover the Lyapunov exponent can be written as
\begin{equation}
\label{eq:expression_lyapunov_generalization}
\L(\vep) = \mathrm{E}[\log(1+\vep^2 L_\vep X_\vep)] .
\end{equation}
And for every $\vec x, \vec y \in \R_+^{d+1}$, 
\begin{equation}
\label{eq:limite_coeffs}
\L(\vep) = \lim_{n \to  \infty} \frac{1}{n} \log \left \langle \vec x , M_{n,\vep} \cdots M_{1,\vep}  \vec y \right \rangle.
\end{equation}
\end{lem}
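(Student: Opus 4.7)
The plan is to follow the same structure as Lemma~\ref{lem:existence_Xvep}, with $Y_\vep$ playing the role of $X_0$. First I construct $X_\vep$ by the standard Markov chain argument. Fix an iid sequence $(C_{\vep,n}, N_{\vep,n}, L_{\vep,n})$ of copies of the blocks of $M_\vep$, set $x_0 = 0 \in \R^d$, and define recursively
\begin{equation*}
x_{n+1} = \frac{C_{\vep,n+1} + N_{\vep,n+1}\, x_n}{1 + \vep^2 L_{\vep,n+1}\, x_n}.
\end{equation*}
By induction, $x_n$ has non-negative entries and
\begin{equation*}
x_n \leqslant C_{\vep,n} + N_{\vep,n} C_{\vep,n-1} + \cdots + N_{\vep,n}\cdots N_{\vep,2}\, C_{\vep,1} \overset{\normalfont{(d)}}{\preccurlyeq} Y_\vep,
\end{equation*}
so the laws $\nu_n$ of $x_n$ are tight. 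The Cesàro averages $\rho_N = \frac{1}{N}\sum_{n<N} \nu_n$ admit a weak limit point $\rho_\infty$; any random variable $X_\vep$ with law $\rho_\infty$ is invariant under the random transformation and inherits $X_\vep \preccurlyeq Y_\vep$.

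Next I derive formula~\eqref{eq:expression_lyapunov_generalization}. The key algebraic identity is the factorization in projective coordinates,
\begin{equation*}
M_\vep \begin{pmatrix} 1 \\ \vep X_\vep^{(k)} \end{pmatrix} = \bigl(1 + \vep^2 L_{\vep,k+1} X_\vep^{(k)}\bigr) \begin{pmatrix} 1 \\ \vep X_\vep^{(k+1)} \end{pmatrix},
\end{equation*}
where $X_\vep^{(k+1)}$ is obtained from $X_\vep^{(k)}$ by applying the random transformation with the $(k+1)$-th sample. Iterating yields
\begin{equation*}
M_{n,\vep}\cdots M_{1,\vep} \begin{pmatrix} 1 \\ \vep X_\vep^{(0)} \end{pmatrix} = \Biggl[\prod_{k=0}^{n-1}\bigl(1 + \vep^2 L_{\vep,k+1} X_\vep^{(k)}\bigr)\Biggr] \begin{pmatrix} 1 \\ \vep X_\vep^{(n)} \end{pmatrix}.
\end{equation*}
Taking the logarithm of the norm, dividing by $n$, and taking expectations gives
\begin{equation*}
\frac{1}{n}\mathrm{E}\Bigl[\log \bigl\|M_{n,\vep}\cdots M_{1,\vep}\, {}^t(1,\vep X_\vep^{(0)})\bigr\|\Bigr] = \mathrm{E}[\log(1 + \vep^2 L_\vep X_\vep)] + \frac{1}{n} \mathrm{E}\bigl[\log \|(1,\vep X_\vep^{(n)})\|\bigr],
\end{equation*}
by stationarity of each $X_\vep^{(k)}$. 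The last term vanishes because $\mathrm{E}[\log_+ \|X_\vep\|] \leqslant \mathrm{E}[\log_+ \|Y_\vep\|] < +\infty$ by Lemma~\ref{lem:def_Yvep}.

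It remains to identify the left-hand side with $\L(\vep)$. Here I invoke Hennion's theorem~\cite{Hennion_97}: Assumption~\ref{hyp:generalized}~\ref{hyp:generalized_irred} (non-negative entries that eventually become positive) is precisely the hypothesis under which the convergence~\eqref{eq:limite_coeffs} of every entry to $\L(\vep)$ holds, almost surely and in $L^1$. Sandwiching
\begin{equation*}
(1,0,\ldots,0)\, M_{n,\vep}\cdots M_{1,\vep}\, {}^t(1,0,\ldots,0) \leqslant \bigl\|M_{n,\vep}\cdots M_{1,\vep}\, {}^t(1,\vep X_\vep^{(0)})\bigr\| \leqslant \|M_{n,\vep}\cdots M_{1,\vep}\|\cdot(1 + \vep\|X_\vep^{(0)}\|)
\end{equation*}
and taking logarithms and expectations, both ends converge to $\L(\vep)$ and~\eqref{eq:expression_lyapunov_generalization} follows.

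The main obstacle, compared with the $2\times 2$ proof, is the presence of the $\vep^2 L_\vep X_\vep$ factor in the denominator of~\eqref{eq:point_fixe_generalized}: it involves an inner product rather than a scalar, so the monotonicity arguments used in Lemma~\ref{lem:existence_Xvep} for uniqueness no longer apply coordinatewise. This is precisely why the statement only claims existence (not uniqueness) of $X_\vep$; uniqueness would require additional contraction estimates which are not needed for our purposes. A second small point to verify carefully is that $\mathrm{E}[\log(1 + \vep^2 L_\vep X_\vep)]$ is finite: this follows from $\log(1+\vep^2 L_\vep X_\vep) \leqslant \vep^2 L_\vep X_\vep$ combined with Assumption~\ref{hyp:generalized}~\ref{hyp:generalized_Lvep} and the independence of $L_\vep$ from $X_\vep$, together with the stochastic domination $X_\vep \preccurlyeq Y_\vep$.
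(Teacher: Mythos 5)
Your proof follows the same route as the paper: build $X_\vep$ as a weak limit of Cesàro averages of the forward Markov chain dominated coordinatewise by $Y_\vep$, then use the projective factorization identity together with Hennion's theorem (justified by Assumption~\ref{hyp:generalized}~\ref{hyp:generalized_irred}) to equate the product growth rate with $\L(\vep)$, exactly as the paper's terse sketch indicates. One small correction to your closing remark on the finiteness of $\mathrm{E}[\log(1+\vep^2 L_\vep X_\vep)]$: the bound $\log(1+x)\leqslant x$ would require $\mathrm{E}[L_\vep X_\vep]<+\infty$, but Assumption~\ref{hyp:generalized}~\ref{hyp:generalized_Lvep} and Lemma~\ref{lem:def_Yvep} only guarantee finiteness of $\mathrm{E}[\log_+\|L_\vep\|]$ and $\mathrm{E}[\log_+\|Y_\vep\|]$, not first moments; the correct estimate is the submultiplicative one, $\log(1+\vep^2 L_\vep X_\vep)\leqslant C_d+\log_+\|L_\vep\|+\log_+\|X_\vep\|$, which integrates precisely because of those log-moment assumptions and the domination $X_\vep\preccurlyeq Y_\vep$.
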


\begin{proof}
The method is the same as in Lemma~\ref{lem:existence_Xvep}'s proof for $2 \times 2$ matrices. We fix iid copies $(M_{\vep,n})$ of $M_\vep$ and set $x_0 = 0_{\R^d}$. Then define inductively, for $n \geqslant 0$, the random variables
\begin{equation}
x_{n+1} = \frac{C_{\vep,n}+ N_{\vep,n} x_n}{1+ \vep^2 L_{\vep,n} x_n}.
\end{equation}
Observe that since all the vectors have non-negative entries, one can write, coordinatewise,
\begin{equation}
x_{n+1} \leqslant C_{\vep,n}+ N_{\vep,n} x_n.
\end{equation}
So, by an easy induction, $x_n \preccurlyeq Y_\vep$ for every $n \geqslant 0$.
The end of the proof is the same as for Lemma~\ref{lem:existence_Xvep}. We do not reiterate all the details here. Just note that we do not claim the uniqueness of a non-negative solution to~\eqref{eq:point_fixe_generalized} and that Assumption~\ref{hyp:generalized}~\ref{hyp:generalized_irred} is a sufficient condition for H. Hennion's result to apply.
\end{proof}

To state our main result, and more precisely to formulate its premises, some multi-index notations will be required, which we set in the next lines. The norm of a multi-index $\bm{\lambda} \in \N^d$ will be denoted by~$|\bm{\lambda}|$:
\begin{equation}
|\bm{\lambda}|:=\lambda_1 + \ldots \lambda_{d}.
\end{equation}
For every $l \geqslant 0$, there are ${ l+d-1 \choose d-1}$ multi-indices with norm $l$: it is the number of (weak) compositions of $l$ into $d$ non-negative integers. For a vector $x \in \R^d$ and a multi-index $\bm{\lambda} \in \N^d$, we define the multi-index power
\begin{equation}
x^{\bm{\lambda}} = x_1^{\lambda_1} \times \cdots \times x_d^{\lambda_d}.
\end{equation}
Similarly, for a matrix $A \in \M_d(\R)$ and a multi-index $\bm{\omega} \in \N^{d^2} \simeq \M_d(\N)$, define
\begin{equation}
A^{\bm{ \omega}} = \prod_{i,j} (A_{i,j})^{\omega_{i,j}} \qquad \text{and} \qquad |\bm{\omega}| =  \sum_{i,j} \omega_{i,j}.
\end{equation}
There should be no confusion with a standard matrix power since $\bm{\omega}$ is a multi-index.

For $l \geqslant 0$, consider the square matrix $G^{(l)}$ with size ${ l+d-1 \choose d-1}$, whose elements are
\begin{equation}
G^{(l)}_{\bm{\lambda},\bm{\lambda'}} = \sum_{\substack{\bm{\omega} \in \N^{d^2} \\ \sum_j \omega_{i,j} = \lambda_i \\ \sum_i \omega_{i,j} = \lambda'_j}} \lim_{\vep \to 0} \mathrm{E}\left[ N_\vep^{\bm{\omega}}\right], \qquad \text{for } \bm{\lambda}, \bm{\lambda'} \in \N^{d} \, \text{  such that } |\bm{\lambda}|= |\bm{\lambda'}|=l.
\end{equation}
Note that all the multi-indices $\bm{\omega}$ in the sum have norm $|\bm{\omega} |=l$.
The matrix $G^{(l)}$ will play a similar role as $\mathrm{E}[Z^l]$ in this generalized context. Of course these matrices, which require the existence of $\lim_{\vep \to 0}\mathrm{E}\left[ N_\vep^{\bm{\omega}}\right]$, are not always defined.
\newline

We have set enough notations to state the generalization of Proposition~\ref{prop:dvpt_lyapunov}, giving a regular expansion of the Lyapunov exponent $\L(\vep)$.

\begin{prop}
\label{prop:dvpt_lyapunov_generalized}
Fix $K \geqslant 0$ and $\beta \in (K,K+1]$. Suppose that
\begin{enumerate}
\item For all multi-indices $\bm{\lambda},\bm{\mu} \in \N^{d}$, $\bm{\omega} \in \N^{d^2}$ such that $l=|\bm{\lambda}|+|\bm{\mu}|+|\bm{\omega}| \leqslant K$, $\mathrm{E}[L_\vep^{\bm{ \lambda}} C_\vep^{\bm{\mu}} N_\vep^{\bm{ \omega}}]$ is finite and admits a regular expansion, as $\vep$ goes to~0, up to the order $2(K-l)$:
\begin{equation}
\label{eq:dvpt_lyapunov_generalized_hyp1}
\mathrm{E}[L_\vep^{\bm{ \lambda}} C_\vep^{\bm{\mu}} N_\vep^{\bm{ \omega}}] = \sum_{r=0}^{2(K-l)} c_{\bm{\lambda},\bm{\mu},\bm{\omega},r} \vep^r + \O(\vep^{2(\beta-l)});
\end{equation}
\item For all $1 \leqslant l \leqslant K$, the matrix $I-G^{(l)}$ is invertible;
\item $\limsup_{\vep \to 0} \mathrm{E}[\|L_\vep\|^\beta]$ is finite.
\end{enumerate}
Then there exist real coefficients $q_2, \ldots q_{2 K}$ such that, as $\vep$ goes to $0$,
\begin{equation}
\L (\vep) = \sum_{k=2}^{2 K} q_k \vep^{ k} +\O( \vep^{2 \beta} \mathrm{E}[1+\|X_\vep\|^\beta ]).
\end{equation}
\end{prop}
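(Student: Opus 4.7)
The plan is to transpose the proof of Proposition~\ref{prop:dvpt_lyapunov} to the multi-index setting, replacing the scalar division by $1-\mathrm{E}[Z^l]$ with the inversion of the matrix $I-G^{(l)}$. The two pillars are unchanged: expand the logarithm in $\L(\vep) = \mathrm{E}[\log(1+\vep^2 L_\vep X_\vep)]$, then bootstrap on regular expansions for the moments of $X_\vep$ via the fixed point equation~\eqref{eq:point_fixe_generalized}.

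From the identity of Lemma~\ref{lem:expressionlyapunov_generalized}, I would first expand the logarithm to order $K$, writing
\begin{equation*}
\L(\vep) = \sum_{j=1}^K \frac{(-1)^{j+1}}{j}\, \vep^{2j}\, \mathrm{E}[(L_\vep X_\vep)^j] + \O\!\left(\vep^{2\beta}\, \mathrm{E}[\|L_\vep\|^\beta \|X_\vep\|^\beta]\right),
\end{equation*}
where the error is harmless thanks to hypothesis~(3) and the independence of $X_\vep$ from $(L_\vep, C_\vep, N_\vep)$. A multinomial expansion yields
\begin{equation*}
\mathrm{E}[(L_\vep X_\vep)^j] = \sum_{|\bm{\lambda}|=j} \binom{j}{\bm{\lambda}}\, \mathrm{E}[L_\vep^{\bm{\lambda}}]\, \mathrm{E}[X_\vep^{\bm{\lambda}}],
\end{equation*}
and hypothesis~(1), applied with $\bm{\mu} = \bm{\omega} = 0$, provides regular expansions of $\mathrm{E}[L_\vep^{\bm{\lambda}}]$. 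All that is left is the analog of Lemma~\ref{lem:dvptpuissance}: a regular expansion of $\mathrm{E}[X_\vep^{\bm{\lambda}}]$ up to order $\vep^{2(K-|\bm{\lambda}|)}$.

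To produce it, I would raise the fixed point equation to the multi-power $\bm{\lambda}$,
\begin{equation*}
X_\vep^{\bm{\lambda}} \overset{\normalfont{(d)}}{=} \frac{(C_\vep + N_\vep X_\vep)^{\bm{\lambda}}}{(1+\vep^2 L_\vep X_\vep)^{|\bm{\lambda}|}},
\end{equation*}
expand the denominator using~\eqref{eq:dvpt_(1+x)^-l}, and expand the numerator multinomially, first as $(C_\vep + N_\vep X_\vep)^{\bm{\lambda}} = \sum_{\bm{\nu}+\bm{\rho}=\bm{\lambda}} \binom{\bm{\lambda}}{\bm{\nu}} C_\vep^{\bm{\nu}} (N_\vep X_\vep)^{\bm{\rho}}$ and further as $(N_\vep X_\vep)^{\bm{\rho}} = \sum_{\bm{\omega}:\, \text{rowsum}(\bm{\omega})=\bm{\rho}} \binom{\bm{\rho}}{\bm{\omega}} N_\vep^{\bm{\omega}} X_\vep^{\text{colsum}(\bm{\omega})}$. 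After taking expectations and using independence once more, the $\bm{\nu} = 0$ terms with $\mathrm{E}[N_\vep^{\bm{\omega}}]$ replaced by its limit $G^{(l)}_{\bm{\lambda}, \bm{\mu}}$ (where $l = |\bm{\lambda}|$ and $\bm{\mu} = \text{colsum}(\bm{\omega})$) contribute exactly $\sum_{|\bm{\mu}|=l} G^{(l)}_{\bm{\lambda}, \bm{\mu}}\, \mathrm{E}[X_\vep^{\bm{\mu}}]$; every other term either carries an explicit factor of $\vep$ (from the denominator expansion, or from $\mathrm{E}[N_\vep^{\bm{\omega}}] - G^{(l)}_{\bm{\lambda}, \bm{\mu}}$, whose expansion is given by hypothesis~(1)) or involves $\mathrm{E}[X_\vep^{\bm{\mu}}]$ with $|\bm{\mu}| < l$. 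This yields, at each level $l \leqslant K$, a linear system
\begin{equation*}
(I - G^{(l)})\, \mathbf{E}[X_\vep^{\otimes l}] = \Phi_\vep^{(l)},
\end{equation*}
where $\mathbf{E}[X_\vep^{\otimes l}] := (\mathrm{E}[X_\vep^{\bm{\mu}}])_{|\bm{\mu}|=l}$ and $\Phi_\vep^{(l)}$ gathers lower-order moments and $\O(\vep)$ remainders.

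Invertibility of $I - G^{(l)}$ (hypothesis~(2)) lets me solve for $\mathbf{E}[X_\vep^{\otimes l}]$, and a double course-of-values induction on $(l, m)$ with $l+m \leqslant K$ — identical in spirit to the one in Lemma~\ref{lem:dvptpuissance} — then produces the required regular expansion of $\mathrm{E}[X_\vep^{\bm{\lambda}}]$ with an error of the form $\O(\vep^{2(K-|\bm{\lambda}|+\delta)}\, \mathrm{E}[\|X_\vep\|^\beta])$, where $\delta = \beta - K$. Substituting back into the expansion of $\L(\vep)$ and collecting powers of $\vep$ gives the announced formula. The main obstacle is notational: one must carefully organise the multinomial sums and verify that the $O(\vep)$ discrepancies between $\mathrm{E}[N_\vep^{\bm{\omega}}]$ and $G^{(l)}_{\bm{\lambda}, \bm{\mu}}$ genuinely drop into the lower-order pile, so that the inversion of $I - G^{(l)}$ at each step is a bona fide inversion of the \emph{limit} matrix and not a perturbation to be controlled uniformly in $\vep$.
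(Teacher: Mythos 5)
Your proposal takes essentially the same route as the paper: expand $\log(1+\vep^2 L_\vep X_\vep)$, reduce to multi-index moments $\mathrm{E}[X_\vep^{\bm\lambda}]$, expand those via the fixed-point identity raised to the multi-power $\bm\lambda$, and replace the scalar division by $1-\mathrm{E}[Z^l]$ with the inversion of $I-G^{(l)}$ after assembling, at each level $l$, the joint linear system over all multi-indices of norm $l$. The paper's own argument for Lemma~\ref{lem:dvptpuissance_generalization} is only sketched, and you fill in the same details it leaves implicit — in particular the point you flag about absorbing the $\O(\vep)$ discrepancy between $\mathrm{E}[N_\vep^{\bm\omega}]$ and its limit $G^{(l)}$ into the lower-order side $(\diamondsuit)$, which is exactly what lets the induction hypothesis at $(l,n-1)$ handle those terms so that one inverts the fixed limiting matrix rather than a $\vep$-dependent one.
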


\begin{Rem}For Proposition~\ref{prop:dvpt_lyapunov_generalized} to be usable, one needs to control $\mathrm{E}[\|X_\vep\|^\beta]$. With Lemmas~\ref{lem:def_Yvep} and~\ref{lem:expressionlyapunov_generalized}, one has $\mathrm{E}[\|X_\vep\|^\beta] = \O(1)$ as $\vep$ goes to~0 as soon as~\eqref{eq:control_EXvep^beta_generalized} hold.
\end{Rem}

\begin{Rem}
One could be surprised that the upper bound involves $\mathrm{E}[1+\|X_\vep\|^\beta ]$ instead of $\mathrm{E}[\|X_\vep\|^\beta ]$. Such a caution was not necessary in the previous context since the latter was bounded form below as $\vep$ goes to $0$. Here, a priori, it could happen that $\mathrm{E}[\|X_\vep\|^\beta ]$ vanishes as $\vep$ goes to $0$.
\end{Rem}

\begin{Rem}
The existence of $G^{(l)}$, for $l \leqslant K$, is ensured by the assumption~\eqref{eq:dvpt_lyapunov_generalized_hyp1}, which gives $\lim_{\vep \to 0} \mathrm{E}[N_\vep^{\bm{\omega}}] = c_{\bm{0}, \bm{0}, \bm{\omega}, 0}$.
The invertibility of $I-G^{(l)}$ is the counterpart of the assumption ``$\mathrm{E}[Z^l] <1$'' in Proposition~\ref{prop:dvpt_lyapunov}.
\end{Rem}

\begin{proof}
The same proof as for Proposition~\ref{prop:dvpt_lyapunov} works: one expands the logarithm:
\begin{equation}
\begin{aligned}
\label{eq:lyapunov_firstexpansion_generalized}
\mathrm{E}[ \log(1+\vep^2 & L_\vep X_\vep)]  = \sum_{k=0}^K \frac{(-1)^{k+1}}{k} \vep^{2 k} \mathrm{E}[(L_\vep X_\vep)^k] + \O(\vep^{2 \beta}  \mathrm{E}[(L_\vep X_\vep)^\beta])\\
&
 = \sum_{k=0}^K \frac{(-1)^{k+1}}{k} \vep^{2 k}  \sum_{1 \leqslant r_1, \ldots, r_k \leqslant d} \mathrm{E}\left[ \prod_{i=1}^k L^{(r_i)}_\vep \right] \mathrm{E}\left[ \prod_{i=1}^k X^{(r_i)}_\vep \right] +\O(\vep^{2 \beta}  \mathrm{E}[(L_\vep X_\vep)^\beta]),
 \end{aligned}
\end{equation}
where $x^{(r)}$ stands for the $r^\text{th}$ coordinate of $x$. Note that
\begin{equation}
\mathrm{E}[(L_\vep X_\vep)^\beta] \leqslant \mathrm{E}[\|L_\vep\|^\beta] \mathrm{E}[\|X_\vep\|^\beta] \leqslant C \mathrm{E}[\|X_\vep\|^\beta],
\end{equation}
and that for any $r_1, \ldots, r_k$ there exists $\bm{\lambda} \in \N^d$, with norm $k$ such that
$
\mathrm{E}\left[ \prod_{i=1}^k X^{(r_i)}_\vep \right] = \mathrm{E}[ X_{\vep}^{\bm{\lambda}}].
$
Thus we need expansions for $X_\vep$'s moments. They are given by the next lemma. By substituting the regular expansion~\eqref{eq:dvptpuissance_generalization}, given in Lemma~\ref{lem:dvptpuissance_generalization}, in the expansion~\eqref{eq:lyapunov_firstexpansion_generalized} of $\L(\vep)$, the proof of Proposition~\ref{prop:dvpt_lyapunov_generalized} will be complete.
\end{proof}

\begin{lem}
\label{lem:dvptpuissance_generalization}
Under Proposition~\ref{prop:dvpt_lyapunov_generalized}'s premises, for all $l \leqslant K$, and $\bm{\lambda}\in \N^{d}$, such that $|\bm{\lambda}|  = l$, the following expansion holds, for some real coefficients $(g_{\bm{\lambda},k})$:
\begin{equation}
\label{eq:dvptpuissance_generalization}
\mathrm{E}[ X_{\vep}^{\bm{\lambda}}] = \sum_{k=0}^{2(K-l)} \vep^{k} g_{\bm{\lambda}, k} + \O( \vep^{2 (\beta-l)} \mathrm{E}[1+\|X_\vep\|^\beta ]).
\end{equation}
\end{lem}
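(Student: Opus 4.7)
The plan is to imitate the double course-of-values induction used in the proof of Lemma~\ref{lem:dvptpuissance}, replacing the scalar division by $1-\mathrm{E}[Z^l]$ by the inversion of the matrix $I-G^{(l)}$. The induction runs on the pair $(l,n)$ in lexicographic order, where $l=|\bm{\lambda}|$ and $n\in\{0,\ldots,2(K-l)\}$ is the order up to which the expansion~\eqref{eq:dvptpuissance_generalization} has been obtained. The base case $\bm{\lambda}=\bm{0}$ is immediate. Set $\delta = \beta - K \in (0,1]$.

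For the inductive step at $(l,n)$, the starting point is the identity
\begin{equation*}
\mathrm{E}[X_\vep^{\bm{\lambda}}] \; = \; \mathrm{E}\!\left[\frac{(C_\vep + N_\vep X_\vep)^{\bm{\lambda}}}{(1+\vep^2 L_\vep X_\vep)^{l}}\right]
\end{equation*}
obtained by taking the $\bm{\lambda}$-power of the fixed-point equation~\eqref{eq:point_fixe_generalized} and using the independence between $X_\vep$ and the random blocks $(L_\vep, C_\vep, N_\vep)$ on the right-hand side. I would expand the numerator by the multinomial formula into monomials $C_\vep^{\bm{\mu}} N_\vep^{\bm{\omega}} X_\vep^{\bm{\lambda}'}$ with $|\bm{\mu}|+|\bm{\lambda}'|=l$ and $|\bm{\omega}|=l-|\bm{\mu}|$, and expand the denominator as in~\eqref{eq:dvpt_(1+x)^-l} in powers of $\vep^{2} L_\vep X_\vep$ up to order $\lfloor n/2\rfloor$ with pointwise remainder $\O((\vep^{2} L_\vep X_\vep)^{\lfloor n/2\rfloor + \delta})$. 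Taking expectations yields a linear combination of terms of the shape $\mathrm{E}[L_\vep^{\bm{\sigma}} C_\vep^{\bm{\mu}} N_\vep^{\bm{\omega}}]\,\mathrm{E}[X_\vep^{\bm{\lambda}'}]$, in each of which the first factor is then replaced by its regular expansion~\eqref{eq:dvpt_lyapunov_generalized_hyp1}. The surviving remainders are of the shape $\O(\vep^{2(\beta-l)}\mathrm{E}[\|L_\vep\|^{\beta-l}]\,\mathrm{E}[\|X_\vep\|^{\beta-l}])$, which falls into the target $\O(\vep^{2(\beta-l)}\mathrm{E}[1+\|X_\vep\|^\beta])$ thanks to the assumption $\limsup_{\vep \to 0}\mathrm{E}[\|L_\vep\|^{\beta}]<+\infty$ and the elementary comparison $x^{p}\leqslant 1 + x^\beta$ for $0\leqslant p\leqslant\beta$.

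Next, the \emph{resonant} contributions --- those arising from $\bm{\sigma}=\bm{0}$, $\bm{\mu}=\bm{0}$, and the zeroth-order term of the denominator expansion, so that an $X_\vep$-factor of norm exactly $l$ appears without any accompanying power of $\vep$ --- are gathered on the left-hand side. Running over all $\bm{\lambda}$ with $|\bm{\lambda}|=l$, they assemble into a linear system
\begin{equation*}
(I-G^{(l)})\,\vec m_l(\vep) \; = \; \vec r_l(\vep), \qquad \vec m_l(\vep) := \bigl(\mathrm{E}[X_\vep^{\bm{\lambda}}]\bigr)_{|\bm{\lambda}|=l}.
\end{equation*}
Every entry of $\vec r_l(\vep)$ either involves a moment $\mathrm{E}[X_\vep^{\bm{\lambda}'}]$ with $|\bm{\lambda}'|<l$, or an $\mathrm{E}[X_\vep^{\bm{\lambda}'}]$ with $|\bm{\lambda}'|=l$ multiplied by a factor $\vep^{p}$ with $p\geqslant 1$ (coming either from a positive order in the denominator expansion or from a higher-order coefficient $c_{\bm{0},\bm{0},\bm{\omega},r}$ with $r\geqslant 1$). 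Consequently the inductive hypothesis, applied at a strictly smaller value of $l$ or at the same $l$ with $n$ replaced by $n-p$, supplies an expansion of $\vec r_l(\vep)$ to order $\vep^{n}$ with the correct remainder. The invertibility assumption on $I-G^{(l)}$ then permits solving for $\vec m_l(\vep)$ and reading off the coefficients $g_{\bm{\lambda},k}$ together with the next-order remainder, thereby closing the induction.

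The principal obstacle is combinatorial: correctly tracking how the three multi-indices $\bm{\sigma},\bm{\mu},\bm{\omega}$ couple to the exponent $\bm{\lambda}'$ of $X_\vep$ through the multinomial expansion of the numerator, and verifying that every non-resonant contribution strictly decreases the lexicographic pair $(l,n)$ so that the double induction actually closes. Once this bookkeeping is in place, the linear-algebraic isolation step via $(I-G^{(l)})^{-1}$ is a direct transcription of the scalar division performed in Lemma~\ref{lem:dvptpuissance}.
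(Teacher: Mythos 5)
Your overall strategy coincides with the paper's sketch: expand the fixed-point identity, isolate the resonant degree-$l$ terms into the linear system $(I-G^{(l)})\,\vec m_l(\vep)=\vec r_l(\vep)$, invert, and bootstrap. The gap lies in the lexicographic order you chose. Running the induction on $(l,n)$ with the degree $l$ primary forces your claim that $\vec r_l(\vep)$ only involves $X_\vep$-moments of degree at most $l$, and that claim is false: expanding $(1+\vep^2 L_\vep X_\vep)^{-l}$ at order $j$ brings in $j$ additional powers of $X_\vep$, so against a numerator term of degree $l-|\bm{\mu}|$ one obtains moments $\mathrm{E}[X_\vep^{\bm{\lambda}''}]$ with $|\bm{\lambda}''| = l - |\bm{\mu}| + j$, which exceeds $l$ as soon as $j>|\bm{\mu}|$ (already $\bm{\mu}=\bm{0}$, $j=1$ gives degree $l+1$). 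For those contributions the pair at which you would need the hypothesis is $(|\bm{\lambda}''|, n-2j)$, strictly larger than $(l,n)$ in your degree-primary order, so the induction does not close.

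The same issue is already visible in the scalar Lemma~\ref{lem:dvptpuissance}: display~\eqref{eq:isolating_Xvep^l} contains $\vep^{2i}\mathrm{E}[X_\vep^{i+r}]$ with $i+r$ possibly larger than $l$, which the paper handles by ordering on precision first (the $(m,j)$ order, items~\ref{item:proofdvptpuissance1} and~\ref{item:proofdvptpuissance2}), and the appendix accordingly runs the generalized induction ``on $(n,l=|\bm{\lambda}|)$'' with $n$ primary. Switch to a precision-first order, so that any contribution carrying a positive power of $\vep$ only requires the hypothesis at a strictly smaller precision, regardless of its $X_\vep$-degree, and the argument closes. The remaining parts of your proposal --- identifying the resonant block $G^{(l)}$, inverting $I-G^{(l)}$, and absorbing the remainder through $\limsup_{\vep\to 0}\mathrm{E}[\|L_\vep\|^\beta]<\infty$ together with $x^p\leqslant 1+x^\beta$ --- are correct and match the paper.
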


\begin{proof}[Sketch of proof of Lemma~\ref{lem:dvptpuissance_generalization}] 
We can follow the same proof as for Lemma~\ref{lem:dvptpuissance}. We go back to that proof to understand how the present one must be adjusted. The only point which merits special attention is the line~\eqref{eq:isolating_Xvep^l} where the term $\mathrm{E}[Z^l] \mathrm{E}[X_\vep^l]$ is isolated on the left-hand side. That line could be summarized as follow: we wrote
\begin{equation}
\mathrm{E}[X_\vep^l] = \mathrm{E}[Z^l] \mathrm{E}[X_\vep^l] +(\diamondsuit_l),
\end{equation}
where $(\diamondsuit_l)$ stands for all the terms in the expansion of $\mathrm{E}[X_\vep^{l}]$ for which the induction hypothesis provided an expansion up to the required order. To be explicit,
\begin{equation}
(\diamondsuit_l) = \mathrm{E}[Z^l] \sum_{\substack{0 \leqslant j \leqslant l, \, 0 \leqslant i \leqslant n \\ (i,j) \neq (0,l)}} {l \choose j}  {-l \choose i} \vep^{ 2 i} \mathrm{E} \left[ X_\vep^{i+j}\right]+ \O(\vep^{2(n+\delta)} \mathrm{E}[X_\vep^\beta]).
\end{equation}
Then we could conclude by writing
\begin{equation}
\mathrm{E}[X_\vep^l] = \frac{1}{1-\mathrm{E}[Z^l]} (\diamondsuit_l),
\end{equation}
and applying the induction hypothesis. That is where was used the condition ``$\mathrm{E}[Z^l]<1$'' (actually $\mathrm{E}[Z^l] \neq 1$ was enough), and this is where will be used the invertibility of $1-G^{(l)}$. 

In our generalized setting, we still carry out an induction on $(n, l= |\lambda|)$ (equipped with the lexicographic order). For the inductive step, there are a lot of multi-indices with given norm $l$. They will be solved simultaneously, by writing a joint system satisfied by all these multi-indices moments $\mathrm{E}[X_\vep^{\bm{\lambda}}]$ with $|\bm{\lambda}|=l$. To this end, use the identity
\begin{align}
\mathrm{E}[X_\vep^{\bm{\lambda}}] =\mathrm{E} \left[ \left(\frac{C_\vep+ N_\vep X_\vep}{1+ \vep^2 L_\vep X_\vep} \right)^{\bm{\lambda}} \right]=\mathrm{E} \left[ \frac{(C_\vep+ N_\vep X_\vep)^{\bm{\lambda}}}{(1+ \vep^2 L_\vep X_\vep)^l} \right].
\end{align}
Then develop the denominator
\begin{equation}
\mathrm{E}[X_\vep^{\bm{\lambda}}]=\mathrm{E}\left[ (C_\vep+ N_\vep X_\vep)^{\bm{\lambda}} \left( \sum_{j=0}^n  {-l \choose j} \vep^{2 j} (L_\vep X_\vep)^j + \vep^{2(n+\delta)} O( (L_\vep X_\vep)^{n+\delta}) \right) \right].
\end{equation}
Eventually, after manipulation, that moment takes the form
\begin{equation}
\mathrm{E}[X_\vep^{\bm{\lambda}}] = \sum_{ \bm{\lambda'} : | \bm{\lambda'}|=l} G^{(l)}_{\bm{\lambda},\bm{\lambda'}}\mathrm{E}[X_\vep^{\bm{\lambda'}}] +(\diamondsuit_{\bm{\lambda}}),
\end{equation}
where, again, $(\diamondsuit_{\bm{\lambda}})$ stands for all the term in the expansion of $\mathrm{E}[X_\vep^{\bm{\lambda}}]$ for which the induction hypothesis, and the premise~\eqref{eq:dvpt_lyapunov_generalized_hyp1} of Proposition~\ref{prop:dvpt_lyapunov_generalized}, provide an expansion up to the required order. Then, since $I-G^{(l)}$ is invertible, one can solve that joint system satisfied by the family $(\mathrm{E}[X_\vep^{\bm{\lambda}}])$:
\begin{equation}
\mathrm{E}[X_\vep^{\bm{\lambda}}] = \left[(I-G^{(l)})^{-1} (\diamondsuit)\right] _{\bm{\lambda}} = \sum_{ \bm{\lambda'} : | \bm{\lambda'}|=l} \left((I- G^{(l)})^{-1}\right)_{\bm{\lambda},\bm{\lambda'}}(\diamondsuit_{\bm{\lambda'}}).
\end{equation}
That concludes the proof of the induction step and thus the proof of the lemma.
\end{proof}

\begin{Rem}
The same methods as in Section~\ref{sec:error} can produce the lower bound on the error
\begin{equation}
(-1)^{K+2} R_K(\vep) \geqslant c \vep^{2(K+1)}  \mathrm{E} \left[ (L_\vep X_\vep)^{K+1} \mathbf{1}_{L_\vep X_\vep \leqslant B} \right] + O(\vep^{2(K+1)}) ,
\end{equation}
as long as~\eqref{eq:dvpt_lyapunov_generalized_hyp1} holds with $\beta = K+1$.
\end{Rem}

\paragraph{Application to a 1D Ising model} The product of random matrices considered in the first sections appeared in \cite{Derrida_Hilhorst_83} to express the free energy of the nearest-neighbour Ising model on the line with inhomogeneous magnetic field. The generalization considered in this appendix allows finite range interactions to be included. Let us be more precise.
Consider the Ising model on $\T_N:=\Z / N \Z$, with homogeneous interactions up to the distance $d$ and inhomogeneous magnetic field $(h_k)$. It is the spin model with configurations\footnote{We choose $\{0,1\}^{\T_N}$ instead of $\{-1,1\}^{\T_N}$ to simplify the formulas. They are equivalent by easy manipulations.} ${\underline{\sigma} \in \{0,1\}^{\T_N}}$ whose Hamiltonian is
\begin{equation}
H(\underline{\sigma}) = \sum_{k \in \T_N} \left( h_k \sigma_k + \sum_{l=1}^d \alpha_l \mathbf{1}_{\sigma_k \neq \sigma_{k+l}} \right).
\end{equation}
The magnetic field $(h_k)_{k \in \T_N}$ is supposed to be iid. 
Thanks to a transfer matrix approach, the free energy in the thermodynamic limit can be expressed through a random matrix products:
\begin{equation}
f (T) = \lim_{N \to + \infty} \frac{1}{N} \log \Tr \left( \prod_{n=1}^N A_n \right),
\end{equation}
where $A_n$ is a $2^{d} \times 2^{d}$ sparse matrix (two non-zero entries on each line and each column) whose entries are the following. If $\underline{\tau},\underline{\upsilon} \in \{0,1\}^d$, which represent the partial configuration $(\sigma_n, \ldots, \sigma_{n+d-1})$ and its shift $(\sigma_{n+1}, \ldots, \sigma_{n+d})$, then
\begin{equation}
A_n (\underline{\tau},\underline{\upsilon}) = \exp \left( - \frac{1}{T} \tau_1 h_n - \frac{1}{T}  \sum_{l=1}^d \alpha_l \tau_l \upsilon_l \right)\mathbf{1}_{\tau_2 = \upsilon_1, \ldots, \tau_{d-1}=\upsilon_d}.
\end{equation}
One can check that Assumption~\ref{hyp:generalized}~\ref{hyp:generalized_irred} holds with $N=d$. Proposition~\ref{prop:dvpt_lyapunov_generalized} provides an expansion for the free energy $f(T)$ when the coupling constants $\alpha_l$ tend to be very large.
Set $Z_n = \exp(-h_n/T)$ and $\vep_l = \exp(-  \alpha_l/T)$ for every $l \leqslant d$. The parameters $\vep_l$ vanish when the coupling constants $\alpha_l$ tend to be very large. Then $A_n$ is a random perturbation of $\Diag(1, 0, \ldots, 0, Z_n)$ if one writes the configurations $\underline{\tau},\underline{\upsilon}$ in lexicographic order. Thus, Proposition~\ref{prop:dvpt_lyapunov_generalized} yields
\begin{equation}
f(T) = \sum_{\bm{\lambda} \in \N^d : |\bm{\lambda}| < \beta} c_{\bm{\lambda}} \vep_1^{\lambda_1} \cdots \vep_d^{\lambda_d} + \O \left( \sum_{l=1}^d \vep_l^{\beta} \right), 
\end{equation}
as soon as $\mathrm{E}[Z^\beta] <1$ (note that $\beta$ is not the inverse temperature here).

\begin{Rem}
Similarly, the results apply for an Ising model on a strip of finite width $s$ (i.e. $[N] \times [s]$), or a cylinder ($[N] \times \Z/s \Z$) with an inhomogeneous magnetic field and finite-range interactions, with free, fixed or periodic boundary conditions.
\end{Rem}

\paragraph{Acknowledgments} This work is part of my PhD thesis supervised by Giambattista Giacomin. I would like to thank him for giving me the opportunity to work on this subject and for fruitful discussions.

\small
\bibliographystyle{myplain}
\bibliography{biblioDHalpha}

\providecommand{\bysame}{\leavevmode\hbox to3em{\hrulefill}\thinspace}
\providecommand{\MR}{\relax\ifhmode\unskip\space\fi MR }
\providecommand{\MRhref}[2]{%
  \href{http://www.ams.org/mathscinet-getitem?mr=#1}{#2}
}
\providecommand{\href}[2]{#2}
\begin{thebibliography}{10}

\bibitem{Bougerol}
P. Bougerol and J. Lacroix, \emph{{Products of random matrices with application
  to Schr\"odinger operators}}.

\bibitem{Calan_Luck}
C.~d. Calan, J.-M. Luck, T.~M. Nieuwenhuizen, and D. Petritis, \emph{On the
  distribution of a random variable occurring in 1d disordered systems}, J.
  Phys. A \textbf{18} (1985), no.~3, 501.

\bibitem{Campanino_Klein_90}
M. Campanino and A. Klein, \emph{Anomalies in the one-dimensional anderson
  model at weak disorder}, Comm. Math. Phys. \textbf{130} (1990), no.~3,
  441--456.

\bibitem{Comets_Giacomin_Greenblatt_2017}
F. Comets, G. Giacomin, and R.~L. Greenblatt, \emph{{Continuum limit of random
  matrix products in statistical mechanics of disordered systems}}, 2017,
  arXiv:1712.09373.

\bibitem{Crisanti_Paladin_Vulpiani}
A. Crisanti, G. Paladin, and A. Vulpiani, \emph{{Products of Random Matrices in
  Statistical Physics}}, vol. 104, Springer Ser. Solid-State Sci., 1993.

\bibitem{Derrida_Hilhorst_83}
B. Derrida and H. Hilhorst, \emph{{Singular behaviour of certain infinite
  products of random 2$\times$2 matrices}}, J. Phys. \textbf{16} (1983),
  no.~12, 2641.

\bibitem{Derrida_Zanon_88}
B. Derrida and N. Zanon, \emph{{Weak disorder expansion of Liapunov exponents
  in a degenerate case}}, J. Stat. Phys. \textbf{50} (1988), 509--528.

\bibitem{Dyson_53}
F.~J. Dyson, \emph{The dynamics of a disordered linear chain}, Phys. Rev.
  \textbf{92} (1953), 1331--1338.

\bibitem{Furstenberg_63}
H. Furstenberg, \emph{{Non-commuting Random Products}}, Trans. Amer. Math. Soc.
  \textbf{081} (1963), 377--428.

\bibitem{Furstenberg_Kesten_60}
H. Furstenberg and H. Kesten, \emph{Products of random matrices}, Ann. Math.
  Statist. \textbf{31} (1960), 457--469.

\bibitem{Furstenberg_Kifer_83}
H. Furstenberg and Y. Kifer, \emph{Random matrix products and measures on
  projective spaces}, Israel J. Math. \textbf{46} (1983), 12--32.

\bibitem{Genovese_Giacomin_Greenblatt_2017}
G. Genovese, G. Giacomin, and R.~L. Greenblatt, \emph{{Singular behavior of the
  leading Lyapunov exponent of a product of random $2 \times 2$ matrices}},
  Comm. Math. Phys. \textbf{351} (2017), no.~3, 923--958.

\bibitem{Grabsch_Texier_Tourigny_2014}
A. Grabsch, C. Texier, and Y. Tourigny, \emph{One-dimensional disordered
  quantum mechanics and sinai diffusion with random absorbers}, Journal of
  Statistical Physics \textbf{155} (2014), no.~2, 237--276.

\bibitem{Hennion_97}
H. Hennion, \emph{Limit theorems for products of positive random matrices},
  Ann. Probab. \textbf{25} (1997), no.~4, 1545--1587.

\bibitem{Kevei_2016}
P. Kevei, \emph{{A note on the Kesten–Grincevičius–Goldie theorem}},
  Electron. Commun. Probab. \textbf{21} (2016).

\bibitem{Kevei_2017}
\bysame, \emph{Implicit renewal theory in the arithmetic case}, J. Appl.
  Probab. \textbf{54} (2017), no.~3, 732–749.

\bibitem{Matsuda_Ishii_70}
H. Matsuda and K. Ishii, \emph{{Localization of Normal Modes and Energy
  Transport in the Disordered Harmonic Chain}}, Progr. Theoret. Phys. Suppl.
  \textbf{45} (1970), 56--86.

\bibitem{McCoy_Wu_article_68}
B.~M. McCoy and T.~T. Wu, \emph{{Theory of a Two-Dimensional Ising Model with
  Random Impurities. I. Thermodynamics}}, Phys. Rev. \textbf{176} (1968),
  631--643.

\bibitem{Nieuwenhuizen_Luck_86}
T.~M. Nieuwenhuizen and J.~M. Luck, \emph{{Exactly soluble random field Ising
  models in one dimension}}, J. Phys. A \textbf{19} (1986), no.~7, 1207.

\bibitem{Oseledec_68}
V.~I. Oseledec, \emph{{A multiplicative ergodic theorem: Lyapunov
  characteristic exponents for dynamical systems}}, Trans. Moscow Math. Soc.
  \textbf{19} (1968), 197--231.

\bibitem{Ruelle_79}
D. Ruelle, \emph{{Analycity properties of the characteristic exponents of
  random matrix products}}, Adv. Math. \textbf{32} (1979), 68--80.

\bibitem{Sadel_SchulzBaldes_2010}
C. Sadel and H. Schulz-Baldes, \emph{{Random Lie group actions on compact
  manifolds: a perturbative analysis}}, Ann. Probab. \textbf{38} (2010),
  2224--2257.

\bibitem{Schmidt_57}
H. Schmidt, \emph{Disordered one-dimensional crystals}, Phys. Rev. \textbf{105}
  (1957), 425--441.

\bibitem{Shankar_Murthy_87}
R. Shankar and G. Murthy, \emph{{Nearest-neighbor frustrated random-bond model
  in d=2: Some exact results}}, Phys. Rev. B \textbf{36} (1987), 536--545.

\bibitem{Viana_2014}
M. Viana, \emph{{Lectures on Lyapunov Exponents}}, Cambridge Studies in
  Advanced Mathematics, Cambridge University Press, 2014.

\end{thebibliography}
%
%

%

\end{document}